\newtheorem*{rep@theorem}{\rep@title}
\newcommand{\newreptheorem}[2]{%
\newenvironment{rep#1}[1]{%
 \def\rep@title{#2 \ref*{##1}}%
 \begin{rep@theorem}}%
 {\end{rep@theorem}}}
\newcommand{\eq}[1]{\hyperref[eq:#1]{(\ref*{eq:#1})}}
\renewcommand{\sec}[1]{\hyperref[sec:#1]{Section~\ref*{sec:#1}}}
\newcommand{\thm}[1]{\hyperref[thm:#1]{Theorem~\ref*{thm:#1}}}
\newcommand{\lem}[1]{\hyperref[lem:#1]{Lemma~\ref*{lem:#1}}}
\newcommand{\cor}[1]{\hyperref[cor:#1]{Corollary~\ref*{cor:#1}}}
\newcommand{\app}[1]{\hyperref[app:#1]{Appendix~\ref*{app:#1}}}
\newcommand{\tab}[1]{\hyperref[tab:#1]{Table~\ref*{tab:#1}}}
\newcommand{\defin}[1]{\hyperref[def:#1]{Definition~\ref*{def:#1}}}
\newcommand{\fig}[1]{\hyperref[fig:#1]{Figure~\ref*{fig:#1}}}
\newcommand{\clm}[1]{\hyperref[claim:#1]{Claim~\ref*{claim:#1}}}
\newcommand{\conj}[1]{\hyperref[conj:#1]{Conjecture~\ref*{conj:#1}}}
\newcommand{\rem}[1]{\hyperref[rem:#1]{Remark~\ref*{rem:#1}}}
\newcommand{\algo}[1]{\hyperref[algo:#1]{Algorithm~\ref*{algo:#1}}}
\newcommand{\thmthm}[2]{\hyperref[thm:#1]{Theorem~\ref*{thm:#1}} and~\hyperref[thm:#2]{\ref*{thm:#2}}}
\newcommand{\lemlem}[2]{\hyperref[lem:#1]{Lemma~\ref*{lem:#1}} and~\hyperref[lem:#2]{\ref*{lem:#2}}}
\newtheorem{theorem}{Theorem}[section]
\newtheorem{lemma}[theorem]{Lemma}
\newtheorem{corollary}[theorem]{Corollary}
\newtheorem{claim}[theorem]{Claim}
\newtheorem{remark}[theorem]{Remark}
\newtheorem{definition}[theorem]{Definition}
\definecolor{darkgreen}{rgb}{0,.5,0}
\definecolor{darkred}{rgb}{.7,.3,.3}
\definecolor{deepblue}{rgb}{0,.1,.7}
\def\ket#1{{\lvert}#1\rangle}
\def\bra#1{{\langle}#1\rvert}
\def\braket#1#2{{{\langle}#1\vert}#2\rangle}
\def\abs#1{\left| #1 \right|}
\def\norm#1{\left\| #1 \right\|}
\def\tO{{\widetilde{O}}}
\def\s{{\sf s}}
\def\t{{\sf t}}
\def\u{{\sf u}}
\def\v{{\sf v}}
\newcommand\restr[2]{{
  \left.\kern-\nulldelimiterspace 
  #1 
  \vphantom{\big|} 
  \right|_{#2} 
  }}
\title{A Quantum Time-Space Tradeoff for Directed $st$-Connectivity}
\author[1,2]{Stacey Jeffery}
\author[2]{Galina Pass}
\affil[1]{CWI, Amsterdam}
\affil[2]{QuSoft \& University of Amsterdam}
\date{}
\begin{document}
\vskip10pt

\maketitle

\begin{abstract}
Directed $st$-connectivity (\textsc{dstcon}) is the problem of deciding if there exists a directed path between a pair of distinguished vertices $s$ and $t$ in an input directed graph. This problem appears in many algorithmic applications, and is also a fundamental problem in complexity theory, due to its ${\sf NL}$-completeness. We show that for any $S\geq \log^2(n)$, there is a quantum algorithm for $\textsc{dstcon}$ using space $S$ and time $T\leq 2^{\frac{1}{2}\log(n)\log(n/S)+o(\log^2(n))}$, which is an (up to quadratic) improvement over the best classical algorithm for any $S=o(\sqrt{n})$. Of the $S$ total space used by our algorithm, only $O(\log^2(n))$ is quantum space -- the rest is classical. This effectively means that we can trade off classical space for quantum time. 
\end{abstract}

\section{Introduction}

In the directed $st$-connectivity problem (\textsc{dstcon}), the input is a directed graph on $n$ vertices with two distinguished vertices $s$ and $t$, and the goal is to decide if there is a directed path from $s$ to $t$. This fundamental problem underlies a wide range of applications in (e.g.) logistics, databases \cite{abiteboul1995,ullman1989}, compilers \cite{aho2006,nielson1999}, and model checking \cite{clarke1999,baier2008}.

In addition to its practical applications, \textsc{dstcon} plays a central role in space-bounded complexity theory. It is ${\sf NL}$-complete (under ${\sf NC}^1$ reductions), where ${\sf NL}$ is the class of problems decidable by nondeterministic logspace machines. Consequently, understanding its complexity has broad implications for space-bounded computation. For example, a classical (or quantum) algorithm for \textsc{dstcon} using $O(\log(n))$ space would show that ${\sf L}$, the class of problems solvable in $O(\log(n))$ space (or its quantum analogue) contains ${\sf NL}$ -- a major breakthrough in either case. 

Currently, the smallest space complexity of any classical or quantum algorithm for \textsc{dstcon} is $S=O(\log^2(n))$, achieved by Savitch's (classical) algorithm. However, Savitch's algorithm achieves this low space complexity at the expense of a large \emph{quasipolynomial} time complexity, using 
$$T\leq 2^{\log^2(n)+O(\log(n))}$$ 
steps of computation. In contrast, a simple breadth-first search (BFS) algorithm solves this problem in just $T=O(n^3)$ steps (in the adjacency-matrix model), but at the expense of a much larger $S=\tO(n)$ space requirement. The fundamental nature of this problem, as well as its many applications, motivates understanding the best possible \emph{tradeoff} between the time and space needed to solve it. 
Progress was made by Barnes, Buss, Ruzzo and Schieber~\cite{barnes1998dstconnST}, who gave a classical algorithm that runs in time 
\begin{equation}\label{eq:classical-TS}
    T \leq 2^{\log^2 (\frac{n}{S}) +O(\log n\log\log n)}
\end{equation} given any space $S\geq \log^2(n)$. Their approach combines a breadth-first search with a clever recursive algorithm. 

For quantum algorithms, the study of space-bounded complexity is even more well motivated, as quantum memories are expected to be limited in size for the foreseeable future. It is an important question in which memory regimes we can still achieve speedups over classical algorithms, and time-space tradeoffs are a key part of this picture. We can make such a tradeoff even more useful by distinguishing between the \emph{quantum} space and \emph{classical} (or total) space needed by the algorithm, as quantum space is the scarce resource. 

For \textsc{dstcon}, quantum speedups are known only at the two extreme regimes of space. In the high-space setting, Dürr, Heiligman, H{\o}yer, and Mhalla~\cite{durr2004graphProblems} gave an $\tO(n^{1.5})$-time, $\tO(n)$-space algorithm using quantum search to build a spanning tree. In the low-space setting, Ref.~\cite{jeffery_et_al:LIPIcs.STACS.2025.54} recently gave a quadratic quantum speedup over Savitch’s algorithm, running in time $T\leq 2^{\frac{1}{2}\log^2 n+O(\log n)}$ with $S=O(\log^2 n)$ space. Between these two extremes, however, no quantum improvements over classical tradeoffs were known.

For the \emph{undirected} variant (\textsc{ustcon}), the picture is much clearer, at least as far as quantum algorithms go: in both the adjacency-matrix and edge-list models of graph access\footnote{In this work, unless otherwise stated, we use the adjacency-matrix model, which assumes the input is given via queries to an adjacency matrix. This distinction is only significant in high-space regimes where the time complexity is polynomial.}, quantum algorithms achieve optimal time and space simultaneously~\cite{belovs2012stcon,apers_et_al:LIPIcs.ESA.2023.10}.

However, existing classical and quantum approaches face significant obstacles in the directed case. A random walk on a directed graph, starting from $s$, may fail to find $t$ even when $t$ is reachable from $s$, if the walk leads from $s$ to some part of the graph from which $t$ is not reachable. 
Quantum walks, which are powerful tools for studying undirected graphs, do not generalize to directed graphs. 
Moreover, known classical algorithms for \textsc{dstcon} such as the time-space tradeoff algorithm in~\cite{barnes1998dstconnST} 
cannot be directly quantized. This stems from the fact that these classical algorithms are not reversible, and making them reversible using standard methods increases the space complexity. These difficulties highlight why directed connectivity remains a challenging and subtle problem. They also show that progress requires algorithmic ideas that go beyond random or quantum walks, and straightforward adaptations of classical tradeoffs.

\paragraph{Our Contribution:} We present the first nontrivial quantum time-space tradeoff for \textsc{dstcon}, by designing a new quantum algorithm for this problem
that, for any space bound $S\geq \log^2(n)$, runs in time
$$T\leq 2^{\frac{1}{2}\log (n)\log(\frac{n}{S})+O(\log n\log\log n)}.$$ 
In particular, our result yields a quantum speedup over the best known classical algorithm (see \eqref{eq:classical-TS}) in the regime $S = o(n^{1/2})$. We also show that, of the $S$ space, the required \emph{quantum} space is always $O(\log^2(n))$, which makes this result much more applicable to quantum computers with a limited number of qubits. This effectively means that we can tradeoff classical space for quantum time. We formally state this result in \thm{quantum_tradeoff}.

\paragraph{Our Techniques:}
The classical algorithm of Barnes et al.~\cite{barnes1998dstconnST} achieves a time-space tradeoff for \textsc{dstcon} by combining breadth-first search with a recursive subroutine that decides, for any pair of vertices $u$ and $v$, if there is a directed path from $u$ to $v$ of length at most $L$. We call this problem $\textsc{Dist}_L$. The breadth-first search saves on space by not traversing the whole graph, but only those vertices at a distance from $s$ that is a multiple of $L$ (the space then decreases as $L$ increases). The subroutine is used to find these vertices in a manner that is more space efficient, but less time efficient, than BFS. 

The BFS portion of this classical algorithm could be sped up using quantum search techniques, but this saves at most a polynomial factor in the time (and nothing in the space), which is not very interesting in the small-space regime (where we get our quantum improvement) where such polynomial factors are hidden by the $O(\log(n))$ in the exponent. 
On the other hand, a direct quantization of their subroutine for $\textsc{Dist}_L$ fails, in large part because this subroutine is not reversible, and making it so would increase the space complexity. Moreover, the subroutine calls itself recursively, with significant depth of recursion. A quantum speedup of this subroutine would most likely have bounded error. Naively composing bounded-error quantum subroutines to depth $d$ results in $\log^d$ factors, which can be significant. Recent techniques for composing bounded error quantum algorithms without log-factor overhead~\cite{belovs2024transducers,belovs2024purifier} could reduce this overhead to $c^d$ for some constant, but this could still be significant, if $c>1$. Overcoming these limitation requires a fundamentally different approach.

To address this, we design a new quantum algorithm for $\textsc{Dist}_L$ that is based on a recursively constructed \emph{switching network}. A switching network (\defin{switching-network}) is an undirected graph with \emph{terminals} $u$ and $v$, whose edges are labeled by Boolean variables $\{x_1,\dots,x_m\}$, that can switch the edges ``on'' or ``off'', by their truth value.  
Such a network naturally defines a Boolean function $f:\{0,1\}^m\rightarrow\{0,1\}$, with $f(x)=1$ (we then say the switching network \emph{accepts} $x$) if and only if $u$ and $v$ are connected by a path consisting of edges whose labels are true under the assignment $x$. An example is given in \fig{example}.

\begin{figure}
\centering
\begin{tikzpicture}[>=Stealth, node distance=1.5cm,
  vertex/.style={circle, fill=black, inner sep=1.5pt},
  edge/.style={thick}]

\node[vertex,label=left:$u$] (u) {};
\node[vertex,right of=u] (a) {};
\node[vertex,above right of=a] (b) {};
\node[vertex,below right of=a] (c) {};
\node[vertex,right of=a, xshift=1cm,label=right:$v$] (v) {};

\path[edge] (u) edge node[above] {$x_1$} (a);
\path[edge] (a) edge node[above left] {$x_2$} (b);
\path[edge] (a) edge node[below left] {$x_3$} (c);
\path[edge] (b) edge node[above] {$x_4$} (v);
\path[edge] (c) edge node[below] {$x_5$} (v);

\node[draw=none, below=1.25cm of a, font=\small] {(L)};

\node[vertex,right=7cm of u,label=left:$u$] (u2) {};
\node[vertex,right of=u2] (a2) {};
\node[vertex,above right of=a2] (b2) {};
\node[vertex,below right of=a2] (c2) {};
\node[vertex,right of=a2, xshift=1cm,label=right:$v$] (v2) {};

\path[ultra thick] (u2) edge node[above] {$x_1$} (a2);
\path[dashed] (a2) edge node[above left] {$x_2$} (b2);
\path[ultra thick] (a2) edge node[below left] {$x_3$} (c2);
\path[dashed] (b2) edge node[above] {$x_4$} (v2);
\path[ultra thick] (c2) edge node[below] {$x_5$} (v2);

\node[draw=none, below=1.25cm of a2, font=\small] {(R)};

\end{tikzpicture}
\caption{(L) An example of a switching network. (R) The same switching network, with edges switched on (thick) or off (dashed) by the assignment $x=10101$. In this example, $u$ and $v$ are connected by a path of edges labelled by variables that are true under $x$, and so the switching network accepts $x$.}\label{fig:example}
\end{figure}
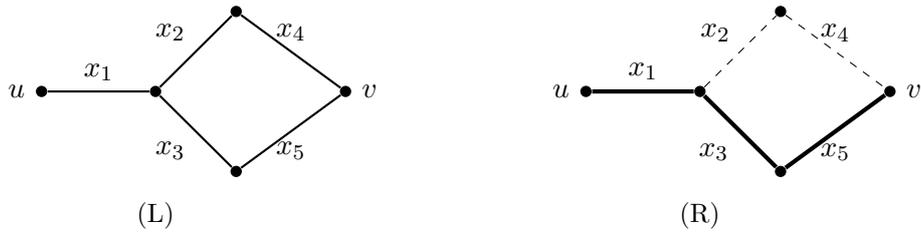

Switching networks are a natural and well-motivated model in classical computing, and have been used to study classical space-bounded complexity (see, e.g.,~\cite{potechin2015analyzing}).
They also give a simple way of designing quantum algorithms, as any switching network can be compiled into a quantum algorithm for the associated function, whose time and space complexity depend on certain properties of the switching network.
Quantum algorithms for evaluating switching networks were developed in~\cite{jeffery2017formula} using span program techniques inspired by~\cite{belovs2012stcon}, and a subsequent work~\cite{jarret2018connectivity} provided a tight analysis of these span program algorithms for arbitrary switching networks, in terms of \emph{query complexity}. These quantum algorithms were only explicitly related to the classical model of switching networks in~\cite{jeffery_et_al:LIPIcs.STACS.2025.54}, where more detailed techniques were developed for efficiently implementing this type of quantum algorithm, and rigorously analyzing their time complexity. 

Switching networks lend themselves well to quantum algorithms in part because they are a naturally reversible structure, being defined by undirected graphs. Moreover, they give a natural way of defining recursive quantum algorithms: by defining a switching network recursively, and turning the final product into a quantum algorithm, bounded error is introduced only once, when turning the switching network into a quantum algorithm, and there are no factors of $c^d$ or $\log^d$ on the complexity of the algorithm, even when the depth of recursion in the switching network is $d$. 

Because we keep the more space-intensive outer BFS algorithm classical, most of the space needed by the algorithm is classical. For any $S$, we only need $O(\log^2(n))$ qubits of \emph{quantum} space to implement our quantum subroutine for $\textsc{Dist}_L$, the remainder of the $S$ space is classical.

Our algorithm for $\textsc{Dist}_L$ provides a compelling example of how switching networks can be used within quantum algorithms. It exploits the full power of the classical switching network model, while at the same time extending it into a setting where time-space tradeoffs become essential. It is the first time switching networks have been used to prove a quantum time-space tradeoff. 
Beyond its role in our algorithm for \textsc{dstcon}, our algorithm for $\textsc{Dist}_L$ may be of independent interest as a quantum primitive for other space-efficient graph algorithms.

\paragraph{Open Problems:} Our speedup over the best known classical algorithm is quadratic when $S=\log^2(n)$, and gets worse as $S$ increases towards $\sqrt{n}$, at which point we no longer achieve any speedup. However, we do know there is a quadratic speedup at the other extreme, $S=n$. It is thus very natural to hope that we might get a quadratic speedup over the algorithm of~\cite{barnes1998dstconnST} for \emph{all} $S$. One difficulty in this is that our algorithm is not just a quantization of the algorithm of~\cite{barnes1998dstconnST}, but actually does something different. 

As mentioned earlier, for the undirected variant (\textsc{ustcon}), quantum algorithms achieve optimal time and space simultaneously~\cite{belovs2012stcon,apers_et_al:LIPIcs.ESA.2023.10}. This naturally raises the question of whether a similar result is possible in the directed case. 

We have already mentioned the compelling open problem of showing a $O(\log(n))$-space quantum algorithm for \textsc{dstcon}. This would necessarily be a polynomial-time algorithm, and so it would, in some rough sense, achieve the above goal of have optimal space and time simultaneously. It would also show that the quantum analogue of ${\sf L}$ (logspace) is contained in ${\sf NL}$ (nondeterministic logspace). Ref.~\cite{apers_et_al:LIPIcs.CCC.2025.18} made some progress on this question by showing that a promise version of \textsc{dstcon} in which the input has few paths can be solved in $O(\log(n))$ quantum space. We remark that any $o(\log^2(n))$ upper bound on the quantum space complexity of \textsc{dstcon} would be interesting.

\paragraph{Organization:} The remainder of this paper is organized as follows. In \sec{prelim}, we give the necessary preliminaries on graph theory, switching networks, and quantum algorithms. In \sec{BFS}, we present our main result, a quantum algorithm for \textsc{dstcon} with a parameter $L$ that can be used to tune the classical space complexity of the algorithm. The key technical building block of this algorithm is a quantum subroutine for the problem $\textsc{Dist}_L$, which we describe in \sec{short_path_algorithm}.

\section{Preliminaries}\label{sec:prelim}

Here we state preliminaries. In \sec{prelim-alg}, we state some results about efficiently generating quantum superpositions.
In \sec{prelim-graph}, we define both directed and undirected graphs, as well as the notation we will use to talk about them. In \sec{prelim-SN}, we formally define switching networks, and state results about quantum algorithms for evaluating them. Part of instantiating such a quantum algorithm requires generating the states of a basis for a certain graph-theoretic space called the \emph{cut space}, so in \sec{prelim-flows}, we develop some theory that helps with this. 

First, we state a few simple conventions and notational definitions.
Unless otherwise specified, logarithms are with respect to base 2. We let $\bar{0}$ denote the all-zeros string, which throughout this paper will always have length $\log n$.
We use $\tO(f(n))$ to suppress poly$(\log n)$ factors, where $n$ is always the number of vertices in the input digraph. So, for example, $\tO(1)$ is polylog$(n)$.

\subsection{Generating superpositions}\label{sec:prelim-alg}

As part of our algorithms, we will often require a subroutine that makes a superposition over strings, with non-uniform amplitudes. The following lemma gives conditions under which we can do that efficiently.

\begin{lemma}[\cite{grover2002creating}]\label{lem:superposition_preparation}
    Fix $\alpha \in \mathbb{R}^{\{0,1\}^m}$, and suppose there is a subroutine that can compute, given any prefix $p\in \{0,1\}^{\leq m}$, the partial sum of entries of $\alpha$ with prefix $p$, $S(p) = \sum_{s\in\{0,1\}^m:s \text{ has prefix } p}\alpha_s^2$, in time $T$. Then the state $\frac{1}{\sqrt{\sum_s \alpha_s^2}}\sum_s \alpha_s \ket{s}$ can be prepared in time $O(T m)$.
\end{lemma}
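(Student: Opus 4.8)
The plan is to prepare the target state one bit at a time, from the leftmost bit to the rightmost, using the partial-sum subroutine to determine a controlled single-qubit rotation at each step; this is the standard iterative amplitude-loading construction. Write $Z := \sum_s \alpha_s^2$, and note that $Z = S(p)$ when $p$ is the empty string. For the moment assume $\alpha_s \ge 0$ for all $s$; the general case is recovered at the end by applying the diagonal correction $\ket s \mapsto \operatorname{sign}(\alpha_s)\ket s$ (one more evaluation), or one may simply restrict attention to nonnegative $\alpha$. I will show by induction on $k = 0,1,\dots,m$ that we can produce the state
\[
  \ket{\psi_k} \;=\; \frac{1}{\sqrt Z}\sum_{p\in\{0,1\}^k}\sqrt{S(p)}\,\ket p\otimes\ket{0^{m-k}}.
\]
The base case $k=0$ is just $\ket{0^m}$, which is correct since the empty-string partial sum equals $Z$; and $\ket{\psi_m} = \frac{1}{\sqrt Z}\sum_s \sqrt{S(s)}\,\ket s = \frac{1}{\sqrt Z}\sum_s |\alpha_s|\,\ket s$ is the desired output.

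For the step from $k$ to $k+1$, I use the identity $S(p) = S(p0) + S(p1)$, valid for every prefix $p$ with $|p| = k < m$. Thus it suffices, controlled on the first $k$ qubits holding $p$, to rotate qubit $k+1$ from $\ket 0$ to $\sqrt{S(p0)/S(p)}\,\ket 0 + \sqrt{S(p1)/S(p)}\,\ket 1$, i.e.\ to the single-qubit state at angle $\theta(p) := \arccos\sqrt{S(p0)/S(p)}$. Concretely, each round consists of: (i) running the given subroutine reversibly to write $S(p0)$ and $S(p)$ into fresh ancilla registers, controlled on the prefix register; (ii) computing a finite-precision approximation of $\theta(p)$ into another ancilla; (iii) applying the corresponding controlled single-qubit $Y$-rotation to qubit $k+1$; and (iv) uncomputing steps (ii) and (i) to return the ancillas to $\ket 0$. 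The subroutine costs time $T$, and all the arithmetic is on registers whose size is logarithmic in the input precision and in $1/\delta$ (the target error), so each round costs $O(T)$ up to lower-order additive terms; with $m$ rounds this is $O(Tm)$ overall, as claimed.

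The only sources of error, and the points deserving care, are as follows. First, $\theta(p)$ is computed only approximately, so the output is within trace distance $\delta$ of $\ket{\psi_m}$; choosing $\delta$ inverse-polynomially small enlarges the registers (hence the per-round time) by at most an $O(\log n)$ factor, which is still $O(Tm)$, and inverse-polynomial precision is sufficient for every use of this lemma in the paper. Second, one must check that step (iv) genuinely disentangles the ancillas: this holds because steps (i) and (ii) are deterministic classical computations executed reversibly, so running them in reverse resets the ancillas to $\ket 0$ on each branch of the superposition independently of the prefix (in a degenerate branch with $S(p)=0$ the amplitude is zero and the rotation angle may be chosen arbitrarily). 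Chaining the $m$ rounds — and, in the signed case, appending the $\operatorname{sign}(\alpha_s)$ phase correction — completes the construction. I expect the genuine work here to be precisely this bookkeeping around finite precision and reversible uncomputation, rather than any conceptual obstacle.
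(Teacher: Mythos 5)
Your proposal is correct and is precisely the controlled-rotation construction of Grover--Rudolph that the paper relies on: the paper gives no proof of this lemma, citing \cite{grover2002creating} directly, and the accompanying remark confirms that the intended argument is exactly the iterative prefix-conditioned binary rotations you describe. The one point worth noting is that the subroutine only supplies sums of \emph{squares} $S(p)$, so the signs of the $\alpha_s$ cannot be recovered from it and your final $\operatorname{sign}(\alpha_s)$ correction needs separate access to the signs -- but this is a wrinkle in the lemma statement itself rather than in your argument, and all of the paper's applications use nonnegative amplitudes.
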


\begin{remark}[Larger alphabets]\label{rem:superposition_preparation_012}
The result of \lem{superposition_preparation} can be easily generalized to the case when indices are length-$m$ strings over an alphabet $A$ of constant size $d = \abs{A}$ (e.g., $A = \{0,1,2\}$). 
In this case, $p$ is a prefix over $A$, and the controlled binary rotations from the proof of \cite{grover2002creating} are replaced by $d$-way branching rotations.
Since $d$ is constant, the total complexity remains $O(T m)$.
\end{remark}

\subsection{Graphs}\label{sec:prelim-graph}

A \emph{directed graph} $G=(V(G),E(G))$, or \emph{digraph}, consists of a vertex set 
$V(G)=\{v_1,\dots,v_n\}$
and an edge set $E(G)\subseteq\{(u,v)\in V(G)^2:u\neq v\}$. We say $v$ is \emph{reachable} from $u$ in $G$ if there exists a path $u=u_0,\dots,u_\ell=v$ such that for all $i\in [\ell]$, $(u_{i-1},u_i)\in E(G)$. When $G$ is clear from context, we just write $V$ and $E$.

We are interested in solving the $st$-connectivity -- or reachability -- problem on directed graphs, but we will also make use of undirected graphs, in the context of switching networks, described shortly. 

An undirected graph ${\cal N}$ is just a directed graph, where we don't care about the direction of the edge -- and in fact, we will ultimately want to assign to each edge an arbitrary orientation, for convenience, so we could even take the same definition we used for directed graphs, and just change notions like reachability. However, for our purpose an undirected graph will ultimately define a quantum algorithm on a space spanned by the \emph{edges} of ${\cal N}$. We therefore take a somewhat edge-centric definition, which has the added bonus of accommodating multigraphs. 

\begin{definition}\label{def:graph}
    An \emph{undirected graph} ${\cal N}=(V({\cal N}),E({\cal N}))$ is specified by a pair of finite sets of \emph{vertex labels}, $V$ and \emph{edge labels}, $E$; as well as incidence sets $E_{\cal N}(\u)=E_{\cal N}^{\leftarrow}(\u)\sqcup E_{\cal N}^{\rightarrow}(\u)$ for each $\u\in V$, such that for all $e\in E$, there are unique distinct vertices $\u,\v\in V$ such that $e\in E^{\rightarrow}(\u)\cap E^{\leftarrow}(\v)$. We think of $E^{\rightarrow}(\u)$ as the set of edges \emph{coming out} of $\u$, and $E^{\leftarrow}(\u)$ as the set of edges \emph{going into}~$\u$.
    When ${\cal N}$ is clear from context, we omit it from the notation, as we have just demonstrated.
\end{definition}

This definition gives us the freedom to specify $e\in E$ however we want to, not necessarily by its endpoints, as $e=(\u,\v)$. To recover this usual edge set, we can map $e$ to $(\u,\v)$ such that $e\in E^{\rightarrow}(\u)\cap E^{\leftarrow}(\v)$. When this holds for some edge, we will abuse notation by writing $(\u,\v)\in E$, to mean ``there is an edge from $\u$ to $\v$ in ${\cal N}$.'' Note that while each edge does have an orientation, these are arbitrary and for convenience only. We say that $\u$ is \emph{connected to} $\v$ in ${\cal N}$ (or \emph{reachable from} $\v$) if there is a \emph{$\u\v$-path} $\u=\u_0,\dots,\u_{\ell}=\v$ such that for all $i\in [\ell]$, $(\u_{i-1},\u_i)\in E$ or $(\u_i,\u_{i-1})\in E$.

Finally, we will often build a (undirected) graph ${\cal N}$ out of graphs ${\cal N}_1$ and ${\cal N}_2$ by ``gluing'' (identifying) some of the vertices in ${\cal N}_1$ with some vertices in ${\cal N}_2$. The following definition makes precise what we mean by this. 

\begin{definition}\label{def:gluing}
   Let ${\cal N}_1$ and ${\cal N}_2$ be graphs, with $\u_{1,1},\dots,\u_{1,r}\in V({\cal N}_1)$ and $\u_{2,1},\dots,\u_{2,r}\in V({\cal N}_2)$. The graph ${\cal N}$ obtained from these graphs by \emph{gluing} vertex $\u_{1,i}$ with $\u_{2,i}$ for all $i\in [r]$ is defined by the following vertex and edge (label) sets:
   \begin{align*}
       V({\cal N}) &= V({\cal N}_1)\sqcup
       (V({\cal N}_2)\setminus\{\u_{2,1},\dots,\u_{2,r}\})\\
       \mbox{and }E({\cal N}) &= E({\cal N}_1)\sqcup E({\cal N}_2)
   \end{align*}
   and incidence sets:
   \begin{align*}
       \forall \u\in V({\cal N}_1)\setminus\{\u_{1,1},\dots,\u_{1,r}\},\; E_{\cal N}^{\rightarrow}(\u) &= E_{{\cal N}_1}^{\rightarrow}(\u)
       \mbox{ and }E_{{\cal N}}^{\leftarrow}(\u)=E_{{\cal N}_1}^{\leftarrow}(\u),\\
       \forall \u\in V({\cal N}_2)\setminus\{\u_{2,1},\dots,\u_{2,r}\},\; E_{\cal N}^{\rightarrow}(\u) &= E_{{\cal N}_2}^{\rightarrow}(\u)
       \mbox{ and }E_{{\cal N}}^{\leftarrow}(\u)=E_{{\cal N}_2}^{\leftarrow}(\u),\\
       \mbox{and }\forall i\in [r],\; E^{\rightarrow}_{\cal N}(\u_{1,i}) &= E_{{\cal N}_1}^{\rightarrow}(\u_{1,i})\sqcup E_{{\cal N}_2}^{\rightarrow}(\u_{i,2})
       \mbox{ and } E^{\leftarrow}_{\cal N}(\u_{1,i}) = E_{{\cal N}_1}^{\leftarrow}(\u_{1,i})\sqcup E_{{\cal N}_2}^{\leftarrow}(\u_{i,2}).
   \end{align*}
\end{definition}
We made the arbitrary choice to let the glued vertices inherit their names from ${\cal N}_1$ rather than ${\cal N}_2$, but since in practice, we will mainly work on the edges of the graph, this detail doesn't matter so much. What is more important is that the edge set is simply a disjoint union of the edges sets of ${\cal N}_1$ and ${\cal N}_2$.

\subsection{Switching Networks}\label{sec:prelim-SN}

A switching network on $\{0,1\}^m$ (see e.g.~\cite{potechin2015analyzing}) is an undirected graph $\cal N$ with two distinct \emph{boundary} vertices $\s$ and $\t$, in which each edge is labeled by a literal from $\{x_1,\dots,x_m,\neg x_1,\dots,\neg x_m,1\}$. To simplify things slightly, in our case, we only have edges labeled with positive literals, $\{x_1,\dots,x_m\}$, so we can just use the label set $[m]$ (or more specifically, in our case, the labels will be pairs $i,j\in [n]$, with $x_{i,j}=1$ if and only if $(v_i,v_j)\in E(G)$ for $G$ the input digraph).
For $x\in\{0,1\}^m$, letting ${\cal N}(x)$ denote the subgraph of ${\cal N}$ that includes only those edges whose labels are true under the string $x$, we say that ${\cal N}$ \emph{accepts} $x$ if and only if $\s$ is connected to $\t$ in ${\cal N}(x)$. In~\cite{jeffery_et_al:LIPIcs.STACS.2025.54}, formalizing~\cite{jeffery2017formula}, switching networks were additionally equipped with some associated subspaces, as we describe in the following definition.

\begin{definition}\label{def:switching-network}
A \emph{switching network} ${\cal N}$ on $\{0,1\}^m$ consists of:
\begin{enumerate}
    \item an undirected multigraph ${\cal N}=(V,E)$ with a \emph{source} $\s\in V$ and \emph{sink} $\t\in V\setminus\{\s\}$; 
    \item for each edge $e\in E$, a \emph{query label} $\varphi_e\in [m]$.
\end{enumerate}
For $x\in \{0,1\}^m$, we define ${\cal N}(x)$ by restricted ${\cal N}$ to those edges $e\in E$ such that $x_{\varphi_e}=1$.
We associate the following spaces with ${\cal N}$, and an input $x$:
\begin{enumerate}
    \item  
    $\Xi_{\s}^{\cal A} = \mathrm{span}\{\ket{\s}+\ket{\leftarrow,\s}\}$ and 
    $\Xi_{\t}^{\cal A} = \mathrm{span}\{\ket{\rightarrow,\t}+\ket{\t}\}$
    \item for all $\v\in V\setminus\{\s,\t\}$, ${\cal V}_{\v}=\mathrm{span}\{\ket{\psi_\star(\v)}\colonequals \sum_{e\in E^{\rightarrow}(\u)}\ket{\rightarrow,e}+\sum_{e\in E^{\leftarrow}(\u)}\ket{\leftarrow,e}\}$
    \item ${\cal V}_{\s}=\mathrm{span}\{\ket{\psi_\star(\s)}\colonequals \sum_{e\in E^{\rightarrow}(\s)}\ket{\rightarrow,e}+\sum_{e\in E^{\leftarrow}(\s)}\ket{\leftarrow,e}+\ket{\leftarrow,\s}\}$
    \item ${\cal V}_{\t}=\mathrm{span}\{\ket{\psi_\star(\t)}\colonequals \sum_{e\in E^{\rightarrow}(\t)}\ket{\rightarrow,e}+\sum_{e\in E^{\leftarrow}(\t)}\ket{\leftarrow,e}+\ket{\rightarrow,\t}\}$
    \item for all $e\in E$, $\Xi_e=\mathrm{span}\{\ket{\rightarrow,e},\ket{\leftarrow,e}\}$, $\Xi_e^{\cal A}(x)=\mathrm{span}\{\ket{\rightarrow,e}+(-1)^{x_{\varphi_e}}\ket{\leftarrow,e}\}$, $\Xi_e^{\cal B}=\mathrm{span}\{\ket{\rightarrow,e}+\ket{\leftarrow,e}\}$.
\end{enumerate}
Then we let $H_{\cal N}=\bigoplus_{e\in E}\Xi_e \oplus \mathrm{span}\{\ket{\s},\ket{\t},\ket{\leftarrow,\s},\ket{\rightarrow,\t}\}$ and define the following two important subspaces of $H_{\cal N}$:
\begin{equation}\label{eq:calA-calB}
    {\cal A}(x)=
    \Xi_{\s}^{\cal A}\oplus\Xi_{\t}^{\cal A}\oplus \bigoplus_{e\in E}\Xi_e^{\cal A}(x)
    \quad\mbox{and}\quad
    {\cal B}=\bigoplus_{\u\in V(G)}{\cal V}_{\u} + \bigoplus_{e\in E}\Xi_e^{\cal B}\oplus\mathrm{span}\{\ket{\leftarrow,\s}+\ket{\rightarrow,\t}\}.
\end{equation}
\end{definition}

A single switching network can potentially compute different functions for different possible values of ${\sf s}$ or ${\sf t}$. Thus, we will later find it convenient to equip a \emph{set} of sinks\footnote{We could also allow multiple sources instead of just a single ${\sf s}$, but we don't need that for our constructions.}, which, together with the single source ${\sf s}$, will form a \emph{boundary} for ${\cal N}$. 

\begin{remark}
    It is also possible to assign weights to the edges of ${\cal N}$ in the definition of a switching network, which impacts the complexity of the quantum algorithm for evaluating it, but in this work we will assume all edges have weight 1. 
\end{remark}

A switching network, along with the spaces defined in \defin{switching-network}, is a special case of a \emph{subspace graph}, defined in \cite{jeffery_et_al:LIPIcs.STACS.2025.54}. As there, we can define \emph{working bases} for a switching network as a pair of bases $\Psi^{\cal A}$ and $\Psi^{\cal B}$ for ${\cal A}$ and ${\cal B}$ respectively. For switching networks, the natural working basis for ${\cal A}$ is simply
$$\Psi_{\cal A}(x)=\left\{\ket{\rightarrow,e}+(-1)^{x_{\varphi_e}}\ket{\leftarrow,e}:e\in E(\cal N) \right\}\cup\{\ket{\s}+\ket{\leftarrow,\s},\ket{\rightarrow,\t}+\ket{\t}\}.$$
It's not difficult to see that this basis can be generated using a query to $x$, and $O(1)$ basic operations, by which we mean the following.
\begin{definition}[Basis Generation]
    We say an orthonormal basis $\Psi=\{\ket{b_\ell}\}_{\ell\in L}\subset H_{\cal N}$ can be generated in time $T$ if:
    \begin{enumerate}
        \item The reflection around the subspace span$\{\ket{\ell}:\ell\in L\}$ of $H_{\cal N}$ can be implemented in time $T$.
        \item There is a map that acts as $\ket{\ell}\mapsto \ket{b_\ell}$ for all $\ell\in L$ that can be implemented in time $T$.
    \end{enumerate}
\end{definition}
For the space ${\cal B}$, we construct a basis for the orthogonal complement and invoke the following lemma from \cite[Corollary 2.11]{jeffery_et_al:LIPIcs.STACS.2025.54}, which guarantees that this suffices.
\begin{lemma}\label{lem:dual-basis-gen}
If $\Psi$ is a basis that can be generated in time $T$, then there is a basis $\Psi'$ for $\mathrm{span}\{\Psi\}^\bot$ that can be generated in time $T$.
\end{lemma}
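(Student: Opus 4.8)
The plan is to read $\Psi'$ directly off the two objects that witness ``$\Psi$ is generated in time $T$''. Write $\Psi=\{\ket{b_\ell}\}_{\ell\in L}$, let $R$ be the time-$T$ reflection around $W\colonequals\mathrm{span}\{\ket{\ell}:\ell\in L\}$ provided by item~1 of the definition, and let $U$ be the time-$T$ unitary on $H_{\cal N}$ provided by item~2, so that $U\ket{\ell}=\ket{b_\ell}$ for every $\ell\in L$ while $U$ acts arbitrarily (but unitarily) elsewhere. Here the $\ket{\ell}$ are computational-basis states of $H_{\cal N}$ and $L$ sits inside the computational-basis index set; let $\bar L$ denote the complement of $L$ in that index set, so that $\{\ket{\ell'}:\ell'\in\bar L\}$ is an orthonormal basis of $W^\bot$. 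I would then take
$$\Psi'\colonequals\big\{\,\ket{b'_{\ell'}}\colonequals U\ket{\ell'}\ :\ \ell'\in\bar L\,\big\}.$$

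The first step is to check that $\Psi'$ is an orthonormal basis of $\mathrm{span}\{\Psi\}^\bot$. Orthonormality is immediate, since $U$ is unitary and $\{\ket{\ell'}\}_{\ell'\in\bar L}$ is orthonormal. For the span, observe that $UW=\mathrm{span}\{U\ket{\ell}:\ell\in L\}=\mathrm{span}\{\Psi\}$, and that a unitary carries orthogonal complements to orthogonal complements, so $\mathrm{span}\{\Psi'\}=U(W^\bot)=(UW)^\bot=\mathrm{span}\{\Psi\}^\bot$; this also gives the correct dimension count.

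The second step is to verify that $\Psi'$ is generated in time $T$. For item~2 of the definition, the required map $\ket{\ell'}\mapsto\ket{b'_{\ell'}}$ on $\ell'\in\bar L$ is a perfectly valid action of the \emph{same} unitary $U$ (item~2 places no constraint on $U$ off the index set $L$), so it costs time $T$. For item~1, the reflection around $\mathrm{span}\{\ket{\ell'}:\ell'\in\bar L\}=W^\bot$ equals $-R$, because reflections around complementary coordinate subspaces are negatives of one another ($2\Pi_{W^\bot}-I=I-2\Pi_W=-(2\Pi_W-I)$); this differs from $R$ only by a global phase, hence is implementable in time $T$ as well.

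There is no serious obstacle here: the argument is essentially the one-line observation that the same unitary $U$ that rotates the $L$-coordinates onto $\Psi$ also rotates the complementary coordinates onto a basis of $\mathrm{span}\{\Psi\}^\bot$, and that the two reflections coincide up to sign. The only points needing care are bookkeeping ones: confirming from the definitions that (i) $L$ and $\bar L$ really do partition a fixed computational basis of $H_{\cal N}$, so that ``reflection around $W$'' and ``reflection around $W^\bot$'' are related by a global phase, and (ii) the generating unitary in item~2 is genuinely unconstrained outside $L$. If one is squeamish about the unobservable global phase in $-R$, it can be tracked with a single ancilla at $O(1)$ additive cost, which does not affect the conclusion.
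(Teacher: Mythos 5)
Your proof is correct, and it is the standard argument for this fact: the paper itself gives no proof of \lem{dual-basis-gen}, importing it from \cite[Corollary 2.11]{jeffery_et_al:LIPIcs.STACS.2025.54}, where the argument is exactly the one you give --- the generating unitary $U$ carries the complementary label states $\{\ket{\ell'}:\ell'\in\bar L\}$ onto an orthonormal basis of $\mathrm{span}\{\Psi\}^\bot$, and the reflection around $\mathrm{span}\{\ket{\ell'}:\ell'\in\bar L\}$ is $-R$, implementable at the same cost. Your two flagged bookkeeping points (that $L,\bar L$ partition a fixed computational basis of $H_{\cal N}$, and that the sign on $-R$ is harmless or can be absorbed with an ancilla) are exactly the right things to check and both hold under the paper's definition of basis generation.
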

The following is a special case of \cite[Theorem 3.13]{jeffery_et_al:LIPIcs.STACS.2025.54}, which is proven by analyzing a phase estimation algorithm on the product of reflections around ${\cal A}$, and ${\cal B}$. It is also similar to \cite[Theorem 13]{jeffery2017formula}, but with a different implementation of reflection around ${\cal B}$ by generating a basis directly, rather than via a quantum walk, which is potentially more expensive.

\begin{theorem}\label{thm:SN_algorithm}
    For $f:\{0,1\}^m\rightarrow\{0,1\}$, let ${\cal N}$ be a switching network that accepts $x\in\{0,1\}^m$ if and only if $f(x)=1$. Suppose that for all $x$ accepted by ${\cal N}$, there is an $\s\t$-path in ${\cal N}(x)$ of length at most $W_+$.
    Suppose the space ${\cal B}^\perp$ of $\cal N$ has a basis $\Psi_{\cal B}^\perp$ that can be generated in time $T_B$.
    Then there is a quantum algorithm that decides $f$ with bounded error in time
    $O(T_B\sqrt{W_+|E({\cal N}|})$
    and space
    $O(\log |E({\cal N}|).$
\end{theorem}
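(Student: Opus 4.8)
The plan is to instantiate the general framework for evaluating subspace graphs from \cite{jeffery_et_al:LIPIcs.STACS.2025.54} (their Theorem 3.13), specialized to the switching-network case, and then carefully account for the time and space costs of the reflections involved. The algorithm is phase estimation (with precision roughly $1/\sqrt{W_+|E({\cal N})|}$) applied to the unitary $U = R_{\cal A}(x) R_{\cal B}$, where $R_{\cal A}(x)$ and $R_{\cal B}$ are reflections around the subspaces ${\cal A}(x)$ and ${\cal B}$ inside $H_{\cal N}$. The correctness — that measuring the $\bar 0$-eigenspace of $U$ after preparing the appropriate witness state distinguishes $f(x)=1$ from $f(x)=0$ with bounded error, and that the relevant spectral gap is controlled by $W_+$ (the positive witness size, i.e. the length of the shortest accepting path) and by a negative witness bound which for unweighted switching networks is $O(|E({\cal N})|)$ — is exactly the content of the cited theorem, so I would invoke it rather than reprove it. The new work is purely the complexity bookkeeping.

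First I would argue that $R_{\cal A}(x)$ can be implemented in time $\tO(1)$ using a single query to $x$: by \defin{switching-network}, ${\cal A}(x) = \Xi_{\s}^{\cal A}\oplus\Xi_{\t}^{\cal A}\oplus\bigoplus_e \Xi_e^{\cal A}(x)$, which is spanned by the explicit working basis $\Psi_{\cal A}(x)$ — vectors of the form $\ket{\rightarrow,e}+(-1)^{x_{\varphi_e}}\ket{\leftarrow,e}$ together with the two boundary vectors. Each of these lives in a two-dimensional coordinate subspace, and the controlled sign depends on one bit $x_{\varphi_e}$ obtainable by one query; so reflecting around ${\cal A}(x)$ decomposes into a direct sum of easily-implemented local reflections. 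Hence the ${\cal A}$-side costs $O(1)$ operations plus one query.

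Next, the ${\cal B}$-side: by hypothesis ${\cal B}^\perp$ has a basis $\Psi_{\cal B}^\perp$ generatable in time $T_B$, and by \lem{dual-basis-gen} (Corollary 2.11 of \cite{jeffery_et_al:LIPIcs.STACS.2025.54}) generating a basis of ${\cal B}^\perp$ in time $T_B$ lets us reflect around ${\cal B}$ in time $O(T_B)$ — apply the basis-change map sending $\ket{\ell}\mapsto\ket{b_\ell}$, reflect around the span of the index kets $\{\ket{\ell}\}$, and uncompute. So one application of $U$ costs $O(T_B)$ time. Phase estimation to the required precision uses $O(\sqrt{W_+|E({\cal N})|})$ applications of $U$, giving total time $O(T_B\sqrt{W_+|E({\cal N})|})$. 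For the space bound: the whole computation lives in $H_{\cal N}=\bigoplus_{e}\Xi_e\oplus\mathrm{span}\{\ket{\s},\ket{\t},\ket{\leftarrow,\s},\ket{\rightarrow,\t}\}$, a space of dimension $2|E({\cal N})|+4$, so it can be encoded in $O(\log|E({\cal N})|)$ qubits; phase estimation adds only $O(\log)$ ancilla qubits for the precision register, which is subsumed. (One should double-check that $T_B$ itself is not so large that the intermediate basis-generation needs more than $O(\log|E({\cal N})|)$ workspace, but that is folded into the ``generated in time $T_B$'' hypothesis, which per \defin{switching-network}'s basis-generation definition only asks for implementability, and in our applications $T_B=\tO(1)$.)

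The main obstacle — the only non-mechanical part — is matching the correctness and gap analysis of the cited Theorem 3.13 to the present formulation: one has to check that the spaces ${\cal A}(x),{\cal B}$ in \defin{switching-network} genuinely form the subspace-graph structure to which that theorem applies, that the ``positive witness complexity'' of a switching network accepting $x$ is bounded by the shortest-accepting-path length $W_+$ (a unit-flow argument: route one unit of flow from $\s$ to $\t$ along a shortest live path, giving witness squared-norm $O(W_+)$), and that the ``negative witness complexity'' for a rejected $x$ is $O(|E({\cal N})|)$ (a cut/potential argument over the at most $|E({\cal N})|$ edges). Since these facts are established in \cite{jeffery2017formula,jeffery_et_al:LIPIcs.STACS.2025.54} and the theorem is stated there to be a special case, I would cite them and devote the written proof mainly to the implementation-cost accounting above.
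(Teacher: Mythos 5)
Your proposal matches the paper's treatment: the paper does not prove this theorem from scratch but states it as a special case of \cite[Theorem 3.13]{jeffery_et_al:LIPIcs.STACS.2025.54} (phase estimation on the product of reflections around ${\cal A}(x)$ and ${\cal B}$, with the ${\cal B}$-reflection implemented via the generated basis of ${\cal B}^\perp$ and \lem{dual-basis-gen}), which is exactly the route you take. Your additional bookkeeping of the query cost of $R_{\cal A}(x)$, the $O(\log|E({\cal N})|)$ space bound, and the positive/negative witness bounds ($W_+$ via a unit flow along a shortest live path, $O(|E({\cal N})|)$ for the cut side) is consistent with what the cited theorem requires.
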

We can get a stronger version of this theorem by replacing path length with effective resistance, and cut size with capacitance (see \cite{jarret2018connectivity}), but the theorem as stated suffices for our purposes. 

We will use the following statement to find the orthogonal complement of $\cal B$. This is a slight generalization of \cite[Lemma 3.15]{jeffery_et_al:LIPIcs.STACS.2025.54}.
\begin{lemma}\label{lem:B_space_basis}
    Fix a switching network, and define, for any $u\in V({\cal N})$, 
    $$\ket{\psi^-_\star(\u)}:=\sum_{e\in E^\rightarrow(\u)}\frac{1}{2}(\ket{\rightarrow,e}-\ket{\leftarrow,e})+\sum_{e\in E^{\leftarrow}(\u)}\frac{1}{2}(\ket{\leftarrow,e}-\ket{\rightarrow,e}).$$
    Define the \emph{cut space} of $\cal N$ as
    $${\cal B}^{-}=\mathrm{span}\{\ket{\psi_\star^-(\u)}:\u\in V({\cal N})\setminus \{\s,\t\}\}\cup\{\ket{\leftarrow,\s}+\ket{\psi_\star^-(\s)},\ket{\rightarrow,\t}+\ket{\psi_\star^-(\t)}\}.$$
    Then if $\Psi_{\cal B}^-$ is an orthonormal basis for ${\cal B}^-$, the following is an orthonormal basis for ${\cal B}$:
    $$\Psi_{\cal B}=\Psi_{\cal B}^-\cup\left\{\ket{b_e}:=\frac{1}{\sqrt{2}}(\ket{\rightarrow,e}+\ket{\leftarrow,e}):e\in E({\cal N})\right\}.$$
\end{lemma}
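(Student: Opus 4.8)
The plan is to show two things: first, that the set $\Psi_{\cal B}$ spans ${\cal B}$; and second, that it is orthonormal. Orthonormality is the easier half. The vectors $\ket{b_e}=\frac{1}{\sqrt2}(\ket{\rightarrow,e}+\ket{\leftarrow,e})$ are clearly unit vectors, and they are mutually orthogonal since they live in the pairwise-orthogonal two-dimensional spaces $\Xi_e$. Each $\ket{b_e}\in\Xi_e^{\cal B}$, so it remains only to check that $\ket{b_e}\perp\ket{\psi_\star^-(\u)}$ for every $\u$: inside $\Xi_e$, the vector $\ket{\psi_\star^-(\u)}$ contributes either $\frac12(\ket{\rightarrow,e}-\ket{\leftarrow,e})$ (if $e$ leaves $\u$) or $\frac12(\ket{\leftarrow,e}-\ket{\rightarrow,e})$ (if $e$ enters $\u$), both orthogonal to $\ket{\rightarrow,e}+\ket{\leftarrow,e}$; and the extra terms $\ket{\leftarrow,\s},\ket{\rightarrow,\t}$ live outside all the $\Xi_e$. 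We are told $\Psi_{\cal B}^-$ is orthonormal, so the whole set is orthonormal provided $\mathrm{span}\{\ket{b_e}\}\perp{\cal B}^-$, which is exactly the computation just sketched.

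For the spanning claim, I would argue ${\cal B}=\mathrm{span}\{\Psi_{\cal B}\}$ by double inclusion, working edgewise in the decomposition $H_{\cal N}=\bigoplus_e\Xi_e\oplus\mathrm{span}\{\ket{\s},\ket{\t},\ket{\leftarrow,\s},\ket{\rightarrow,\t}\}$. Recall from \eqref{eq:calA-calB} that ${\cal B}=\sum_{\u}{\cal V}_\u+\bigoplus_e\Xi_e^{\cal B}\oplus\mathrm{span}\{\ket{\leftarrow,\s}+\ket{\rightarrow,\t}\}$. The key observation is the orthogonal-decomposition identity in each $\Xi_e$: the symmetric vector $\ket{\rightarrow,e}+\ket{\leftarrow,e}$ and the antisymmetric vector $\ket{\rightarrow,e}-\ket{\leftarrow,e}$ together span $\Xi_e$. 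Now $\bigoplus_e\Xi_e^{\cal B}$ supplies all the symmetric parts, i.e. exactly $\mathrm{span}\{\ket{b_e}\}$. For the vectors $\ket{\psi_\star(\u)}$ defining ${\cal V}_\u$ (for $\u\notin\{\s,\t\}$), I would write $\ket{\psi_\star(\u)}=\ket{\psi_\star^-(\u)}+\ket{\psi_\star^+(\u)}$, where $\ket{\psi_\star^+(\u)}$ is the corresponding all-plus combination, i.e. a sum of the $\sqrt2\,\ket{b_e}$ over $e$ incident to $\u$. Hence modulo $\mathrm{span}\{\ket{b_e}\}$, the vector $\ket{\psi_\star(\u)}$ is equivalent to $\ket{\psi_\star^-(\u)}$; and for $\s,\t$ the extra summand $\ket{\leftarrow,\s}$ (resp. $\ket{\rightarrow,\t}$) is the same in both $\ket{\psi_\star(\s)}$ and the spanning vector $\ket{\leftarrow,\s}+\ket{\psi_\star^-(\s)}$ of ${\cal B}^-$. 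This shows ${\cal V}_\u\subseteq{\cal B}^-+\mathrm{span}\{\ket{b_e}\}$ for all $\u$, and conversely every generator of ${\cal B}^-$ is obtained from the corresponding ${\cal V}_\u$-generator by subtracting an element of $\mathrm{span}\{\ket{b_e}\}$, giving the reverse inclusion. Finally, $\ket{\leftarrow,\s}+\ket{\rightarrow,\t}$: combining the two ``boundary'' generators of ${\cal B}^-$ gives $\ket{\leftarrow,\s}+\ket{\rightarrow,\t}+\ket{\psi_\star^-(\s)}+\ket{\psi_\star^-(\t)}$, and subtracting the contributions at $\s$ and $\t$ (which lie in ${\cal B}^-$) recovers $\ket{\leftarrow,\s}+\ket{\rightarrow,\t}$, while conversely it is visibly in $\mathrm{span}\{\Psi_{\cal B}\}$; so that generator is accounted for too. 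Putting the pieces together, ${\cal B}=\mathrm{span}\{\Psi_{\cal B}^-\}\oplus\mathrm{span}\{\ket{b_e}\}=\mathrm{span}\{\Psi_{\cal B}\}$, and since both summands are orthogonal, $\Psi_{\cal B}$ is an orthonormal basis.

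I would also double-check a subtle bookkeeping point: that no symmetric component is ``double counted,'' i.e. that the $\ket{b_e}$ really do form an \emph{independent} complement to ${\cal B}^-$ rather than overlapping it. This is handled by the orthogonality computation above (${\cal B}^-\perp\mathrm{span}\{\ket{b_e}\}$), so it is not a genuine obstacle.

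**The main obstacle** I anticipate is purely notational rather than mathematical: the definitions mix the ``$\rightarrow/\leftarrow$'' tags, the arbitrary edge orientations, and the special boundary kets $\ket{\leftarrow,\s},\ket{\rightarrow,\t}$, so the bulk of the work is carefully matching, for each $\u$ and each incident edge $e$, whether $e\in E^\rightarrow(\u)$ or $e\in E^\leftarrow(\u)$, and reading off the correct sign in $\ket{\psi_\star^-(\u)}$ versus $\ket{\psi_\star(\u)}$. The conceptual content is simply the $\Xi_e=\{\text{sym}\}\oplus\{\text{antisym}\}$ split together with the fact (already available since this is a special case of the subspace-graph framework of \cite{jeffery_et_al:LIPIcs.STACS.2025.54}, cf. their Lemma 3.15, of which this is a mild generalization) that ${\cal B}$ decomposes as cut space plus symmetric edge space. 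Since \lem{B_space_basis} is billed as a slight generalization of \cite[Lemma 3.15]{jeffery_et_al:LIPIcs.STACS.2025.54}, the cleanest write-up would isolate exactly what is new — presumably the handling of a general vertex set or of the boundary terms — and invoke the cited lemma for the rest, rather than redoing the whole edgewise argument.
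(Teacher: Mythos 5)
The paper never actually proves this lemma in-line: it is stated as a slight generalization of Lemma~3.15 of the cited STACS~2025 reference and used as a black box, so your proposal has to stand on its own. Your overall strategy is the right one: split each $\Xi_e$ into its symmetric and antisymmetric parts, write $\ket{\psi_\star(\u)}=\ket{\psi_\star^-(\u)}+\ket{\psi_\star^+(\u)}$ with $\ket{\psi_\star^+(\u)}\in\mathrm{span}\{\ket{b_e}\}$, and note that $\mathrm{span}\{\ket{b_e}\}\perp{\cal B}^-$. This correctly handles orthonormality and both inclusions for the ${\cal V}_\u$ and $\Xi_e^{\cal B}$ generators.

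There is, however, one genuine flaw, in your treatment of the generator $\ket{\leftarrow,\s}+\ket{\rightarrow,\t}$ of ${\cal B}$. You add the two boundary generators of ${\cal B}^-$ to obtain $\ket{\leftarrow,\s}+\ket{\rightarrow,\t}+\ket{\psi_\star^-(\s)}+\ket{\psi_\star^-(\t)}$ and then assert that the ``contributions at $\s$ and $\t$'' lie in ${\cal B}^-$ and can be subtracted. But $\ket{\psi_\star^-(\s)}$ and $\ket{\psi_\star^-(\t)}$ are \emph{not} individually elements of ${\cal B}^-$: any vector of ${\cal B}^-$ with vanishing $\ket{\leftarrow,\s}$ and $\ket{\rightarrow,\t}$ components must lie in $\mathrm{span}\{\ket{\psi_\star^-(\u)}:\u\in V({\cal N})\setminus\{\s,\t\}\}$, which does not contain $\ket{\psi_\star^-(\s)}$ (already for the one-edge network from $\s$ to $\t$ that span is $\{0\}$ while $\ket{\psi_\star^-(\s)}\neq 0$). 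What rescues the step is the telescoping identity
$$\sum_{\u\in V({\cal N})}\ket{\psi_\star^-(\u)}=0,$$
which holds because each edge $e$ contributes $\tfrac12(\ket{\rightarrow,e}-\ket{\leftarrow,e})$ from the vertex it leaves and the negative of that from the vertex it enters. Consequently $\ket{\psi_\star^-(\s)}+\ket{\psi_\star^-(\t)}=-\sum_{\u\neq\s,\t}\ket{\psi_\star^-(\u)}$ \emph{is} in ${\cal B}^-$, and
$$\ket{\leftarrow,\s}+\ket{\rightarrow,\t}=\bigl(\ket{\leftarrow,\s}+\ket{\psi_\star^-(\s)}\bigr)+\bigl(\ket{\rightarrow,\t}+\ket{\psi_\star^-(\t)}\bigr)+\sum_{\u\neq\s,\t}\ket{\psi_\star^-(\u)}\in{\cal B}^-\subseteq\mathrm{span}\{\Psi_{\cal B}\}.$$
So your conclusion is correct, but as written the step rests on a false claim and omits the one identity that makes it work; inserting that identity completes the proof.
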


Since the map $\ket{e}\ket{0}\mapsto \ket{b_e}$ can be implemented with a single Hadamard gate, it is sufficient to be able to generate an orthonormal basis for ${\cal B}^-$. In the final section of the preliminaries, we give some results that further simplify this task. 

\subsection{Flows, Circulations, and the Cut Space}\label{sec:prelim-flows}

In this section, we will study the structure of \emph{cut spaces}, of undirected graphs, with respect to the boundary $\{\s,\t\}$:
    $${\cal B}^{-}=\mathrm{span}\{\ket{\psi_\star^-(\u)}:\u\in V({\cal N})\setminus \{\s,\t\}\}\cup\{\ket{\leftarrow,\s}+\ket{\psi_\star^-(\s)},\ket{\rightarrow,\t}+\ket{\psi_\star^-(\t)}\}.$$
This is called a cut space, because it is intuitively the span of \emph{cuts}, or sets of edges whose removal leaves the graph disconnected.
We first show the following simple fact.
\begin{lemma}\label{lem:dim-B-minus}
    $\dim{\cal B}^- = |V({\cal N})|$.
\end{lemma}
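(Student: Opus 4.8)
The plan is to recognize ${\cal B}^-$ as essentially the row space of the signed incidence matrix of ${\cal N}$, so that the dimension count reduces to the standard fact about the rank of that matrix, together with easy bookkeeping for the two boundary vectors involving $\ket{\leftarrow,\s}$ and $\ket{\rightarrow,\t}$. Throughout I assume ${\cal N}$ is connected, as we may: deleting a connected component that meets neither $\s$ nor $\t$ affects neither $\s$--$\t$ connectivity on any input nor the quantity being counted. Some such hypothesis is genuinely needed, since a component $C$ disjoint from $\{\s,\t\}$ produces the relation $\sum_{\u\in C}\ket{\psi^-_\star(\u)}=0$ among the spanning vectors of ${\cal B}^-$.

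First I would observe that every $\ket{\psi^-_\star(\u)}$ lies in $\bigoplus_{e\in E({\cal N})}\Xi_e$, and that its component inside a given $\Xi_e$ is a scalar multiple of the antisymmetric unit vector $\ket{d_e}:=\tfrac{1}{\sqrt2}(\ket{\rightarrow,e}-\ket{\leftarrow,e})$. Unpacking the definition, $\ket{\psi^-_\star(\u)}=\tfrac{1}{\sqrt2}\sum_{e\in E({\cal N})}B_{e,\u}\,\ket{d_e}$, where $B_{e,\u}=+1$ if $e\in E^{\rightarrow}(\u)$, $B_{e,\u}=-1$ if $e\in E^{\leftarrow}(\u)$, and $B_{e,\u}=0$ otherwise; this is exactly the incidence matrix of ${\cal N}$ under its fixed (arbitrary) edge orientations. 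Since $\{\ket{d_e}\}_{e\in E({\cal N})}$ is orthonormal, the linear map $c\mapsto\sum_{\u\in V({\cal N})}c_\u\ket{\psi^-_\star(\u)}$ carries $c$ to $\tfrac{1}{\sqrt2}\sum_e\big(\sum_\u c_\u B_{e,\u}\big)\ket{d_e}$, so its kernel is precisely the set of assignments $c$ with $c_\u=c_\v$ whenever $(\u,\v)\in E({\cal N})$, i.e.\ those constant on connected components. For connected ${\cal N}$ this kernel is the span of the all-ones vector; in particular the map is injective on $\{c: c_\s=c_\t=0\}$, so $\dim\mathrm{span}\{\ket{\psi^-_\star(\u)}:\u\in V({\cal N})\setminus\{\s,\t\}\}=|V({\cal N})|-2$.

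It remains to fold in the two extra spanning vectors $\ket{v_\s}:=\ket{\leftarrow,\s}+\ket{\psi^-_\star(\s)}$ and $\ket{v_\t}:=\ket{\rightarrow,\t}+\ket{\psi^-_\star(\t)}$. Here I would use that $\ket{\leftarrow,\s}$ and $\ket{\rightarrow,\t}$ are basis vectors of $H_{\cal N}$ orthogonal to $\bigoplus_e\Xi_e$, hence to every $\ket{\psi^-_\star(\u)}$: given any vanishing combination $\sum_{\u\ne\s,\t}c_\u\ket{\psi^-_\star(\u)}+c_\s\ket{v_\s}+c_\t\ket{v_\t}=0$, projecting onto $\mathrm{span}\{\ket{\leftarrow,\s},\ket{\rightarrow,\t}\}$ forces $c_\s=c_\t=0$, and the injectivity established above then forces all remaining $c_\u=0$. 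Hence the $|V({\cal N})|$ vectors spanning ${\cal B}^-$ are linearly independent, and $\dim{\cal B}^-=|V({\cal N})|$. The only delicate point in all of this is getting the identification of $c\mapsto\sum_\u c_\u\ket{\psi^-_\star(\u)}$ with the incidence matrix exactly right — the normalizations, and the sign convention tied to the edges' orientations — since that is where the kernel description, and hence the connectivity hypothesis, enters; the rest is routine.
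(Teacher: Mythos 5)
Your proof is correct, and it takes a recognizably different route from the paper's. The paper argues linear independence of the $|V({\cal N})|$ spanning vectors directly: assuming one $\ket{\psi_\u}$ is a combination of the others, it shows by induction on distance that every coefficient has absolute value $1$, and then derives a contradiction from the coordinate $\ket{\leftarrow,\s}$ (or $\ket{\rightarrow,\t}$), which appears in only one spanning vector. You instead package the vectors $\ket{\psi_\star^-(\u)}$ as the columns of the signed incidence matrix of ${\cal N}$ (your normalization $\ket{\psi^-_\star(\u)}=\tfrac{1}{\sqrt2}\sum_e B_{e,\u}\ket{d_e}$ checks out against the definition in \lem{B_space_basis}), invoke the standard kernel description (functions constant on connected components) to get $\dim\mathrm{span}\{\ket{\psi_\star^-(\u)}:\u\neq\s,\t\}=|V({\cal N})|-2$, and then fold in the two boundary vectors exactly as the paper implicitly does, via orthogonality of $\ket{\leftarrow,\s},\ket{\rightarrow,\t}$ to $\bigoplus_e\Xi_e$. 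The underlying combinatorics (propagating a relation along edges) is the same, but your version is more structural and, importantly, makes explicit a hypothesis the paper's proof uses tacitly: connectivity of ${\cal N}$. The lemma as stated is indeed false if ${\cal N}$ has a component disjoint from $\{\s,\t\}$ (your relation $\sum_{\u\in C}\ket{\psi_\star^-(\u)}=0$), and the paper's induction silently assumes $\s$ or $\t$ is reachable from the hypothetical dependent vertex $\u$. Since all switching networks constructed in the paper are connected, this is harmless there, but flagging it is a genuine improvement in rigor. One nitpick: your parenthetical that deleting a stray component leaves ``the quantity being counted'' unchanged is misleading --- deleting a component $C$ drops $|V({\cal N})|$ by $|C|$ but $\dim{\cal B}^-$ by only $|C|-1$; the honest statement is simply that the lemma requires connectivity and the paper's networks satisfy it.
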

\begin{proof}
 We show this by proving that the states in the definition of ${\cal B}^-$ are independent. Let 
    $$\ket{\psi_\u}=\left\{\begin{array}{ll}
        \ket{\psi_\star^-(\u)} & \mbox{if }\u\in V \setminus \{\s,\t\}\\
        \ket{\leftarrow,\s}+\ket{\psi_\star^-(\s)} & \mbox{if }\u=\s\\
        \ket{\rightarrow,\t}+\ket{\psi_\star^-(\t)} & \mbox{if }\u=\t\\
    \end{array}\right.$$
    and for simplicity, write $\ket{e}=\ket{\rightarrow,e}-\ket{\leftarrow,e}$. 
    Suppose towards a contradiction that for some $\u\in V$,
    $$\ket{\psi_{\u}}=\sum_{\v\in V\setminus\{\u\}}\alpha_{\v}\ket{\psi_{\v}}.$$
    We will show by induction on distance $d\geq 1$ from $\u$, that for all $\u'$ holds $\abs{\alpha_{\u'}} = 1$. For the base case, if $\u'$ and $\u$ share an edge $e$, then
    $$1=\abs{\braket{e}{\psi_\u}}=\abs{\sum_{\v\in V\setminus\{\u\}}\alpha_{\v}\braket{e}{\psi_\v}}=\abs{\alpha_{\u'}},$$
    since $\abs{\braket{e}{\psi_\v}}$ is 1 if $e$ is incident to $\v$ and 0 otherwise. 

    For the induction step, if $\u'$ is at distance $d$ from $\u$, it has a neighbour $\u''$ at distance $d-1$ from $\u$, to which the induction hypothesis applies, so letting $e$ be the edge incident to $\u'$ and $\u''$, we have:
    $$0=\abs{\sum_{\v\in V\setminus\{\u\}}\alpha_\v\braket{e}{\psi_\v}}=\abs{\alpha_{\u'}+\alpha_{\u''}}$$
    so $\abs{\alpha_{\u'}}=1$, since $\abs{\alpha_{\u''}}=1$ by the induction hypothesis. 
    
    The contradiction arises because there is at least one boundary vertex that is not $\u$ -- without loss of generality, suppose $\s\neq\u$. Then since $\ket{\leftarrow,\s}$ only appears in $\ket{\psi_\s}$, we have:
    $$0=\abs{\braket{\leftarrow,\s}{\psi_\u}}=\abs{\sum_{\v\in V\setminus\{\u\}}\alpha_{\v}\braket{\leftarrow,\s}{\psi_{\v}}}=\abs{\alpha_{\s}}=1.$$
    This is clearly a contradiction.
\end{proof}

\begin{definition}\label{def:flow}
    A \emph{flow} on ${\cal N}$ is a real-valued function $\theta$ on $E({\cal N})$. For $\u \in V({\cal N})$, define 
    $$\theta(\u)\colonequals\sum_{e\in E^{\rightarrow}(\u)}\theta(e)-\sum_{e\in E^\leftarrow(\u)}\theta(e).$$
    We call $\theta$ a \emph{circulation} on ${\cal N}$ if for all $\u\in V({\cal N})$, $\theta(\u)=0$. 
We say $\theta$ has \emph{boundary} $B$ if for all $\u\in V({\cal N})\setminus B$, $\theta(\u)=0$.
We call $\theta$ an \emph{$\s\t$-flow} if it has boundary $\{\s,\t\}$ (so in particular, every circulation is an $\s\t$-flow).
    We call $\theta$ a \emph{unit} $\s\t$-flow if it is an $\s\t$-flow, and $ \theta(\s)=-\theta(\t)=1$. We call $\theta$ an \emph{optimal} unit $\s\t$-flow if it minimizes the expression $\sum_{e\in E({\cal N})}{\theta(e)^2}$. 
\end{definition}    

We can naturally view any function $\theta$ on $E({\cal N})$ as a vector on
$$\mathrm{span}\{\ket{e}:=\ket{\rightarrow,e}-\ket{\leftarrow,e}:e\in E({\cal N})\},$$
which we denote
\begin{equation}\label{eq:cropped}
    \ket{\bar\theta}:=\sum_{e\in E({\cal N})}{\theta(e)}(\ket{\rightarrow,e}-\ket{\leftarrow,e})=\sum_{e\in E({\cal N})}{\theta(e)}\ket{e}.
\end{equation}
For an edge $e$ oriented from $\u$ to $\v$, if $\theta(e)$ is a positive real number, we interpret $\theta$ as sending $\theta(e)$ flow from $\u$ to $\v$ along the edge $e$. If $\theta(e)$ is negative, we interpret $\theta$ as sending $|\theta(e)|$ flow from $\v$ to $\u$. In this way, we can interpret $-\ket{e}$ as the edge $e$ oriented in the opposite direction -- that is, vector negation changes the orientation of edges. 

When $\theta$ is an $\s\t$-flow, we will sometimes want to additionally include flow entering and exiting the boundary vertices, via ``boundary edges'', so that the total flow is conserved on boundary vertices as well, and this is represented by the vector
\begin{equation}\label{eq:flow_state}
    \ket{\theta} = -\theta(\s)\ket{\leftarrow,\s}+\ket{\bar\theta}-\theta(\t)\ket{\rightarrow,\t}.
\end{equation}
Whereas states of the form $\ket{\bar\theta}$ enable seamless combination of flows in graphs obtained from gluing other graphs together (see \defin{gluing}), states of the form $\ket{\theta}$ capture the boundary of the final graph we use in our switching network.

It is not difficult to see that the set of flows on a particular (possibly empty) boundary is closed under linear combinations, so we can define the following vectors spaces.
\begin{definition}\label{def:flow-spaces}
    We call
    \begin{align*}
\mathcal{F}({\cal N}) = \Big\{& \ket{\theta} :\forall \u\in V({\cal N}) \setminus \{\s,\t\},\; \theta(\u)=0 \Big\}\end{align*}
the space of $\s\t$-flows;
\[\mathcal{C}({\cal N}) = \Big\{ \ket{\theta}:
\forall \u\in V({\cal N}),\; \theta(\u)=0 \Big\}\]
the space of circulations;
\[\mathcal{F}^\textsc{OPT}({\cal N}) = \mathrm{span}\left\{ \ket{\theta} \in \mathcal{F}({\cal N}): \ket{\theta} \text{ is an optimal unit $\s\t$-flow} \right\}\]
the space of optimal $\s\t$-flows.
\end{definition}

In the following lemmas, we establish fundamental properties of the space of $\s\t$-flows together with its subspaces. These results will serve as the foundation for our later analysis. We start with stating some properties of optimal flows and their orthogonality to circulations, see \cite[Lemma~7.2.2]{cornelissen2023thesis} for the proof. 
\begin{lemma}\label{lem:flow_properties}
    The optimal unit $\s\t$-flow $\theta$ is unique and its corresponding vectors, $\ket{\theta}$ and $\ket{\bar\theta}$ (see \eq{cropped} and \eq{flow_state}) are orthogonal to the space of circulations $\mathcal{C}({\cal N})$. Consequently, the space of optimal $\s\t$-flows $\mathcal{F}^\textsc{OPT}({\cal N})$ is one-dimensional and orthogonal to the space of circulations $\mathcal{C}({\cal N})$.
\end{lemma}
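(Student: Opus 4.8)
The plan is to combine the affine structure of the set of unit $\s\t$-flows with the Hilbert-space characterisation of closest points. First I would record the elementary observation that if $\theta$ and $\theta'$ are both unit $\s\t$-flows, then $(\theta-\theta')(\u)=0$ at every vertex $\u\in V({\cal N})$ — including $\s$ and $\t$, where the values $1$ and $-1$ cancel — so $\theta-\theta'$ is a circulation; conversely, adding a circulation to a unit $\s\t$-flow yields another unit $\s\t$-flow, since divergences add vertexwise. Fixing one unit $\s\t$-flow $\theta_0$ (which exists exactly when $\s$ and $\t$ lie in a common connected component of ${\cal N}$, an assumption implicit in the statement), it follows that $\{\ket{\bar\theta}:\theta\text{ a unit }\s\t\text{-flow}\}$ is the affine subspace $\ket{\bar\theta_0}+\mathcal{C}({\cal N})$; here I use that for a circulation $c$ the boundary corrections in \eq{flow_state} vanish, so $\ket{c}=\ket{\bar c}$, whence $\mathcal{C}({\cal N})=\{\ket{\bar c}:c\text{ a circulation}\}\subseteq\mathrm{span}\{\ket{e}:e\in E({\cal N})\}$.

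Next I would invoke the standard fact that minimising $\norm{v}^2$ over a nonempty affine subspace $v_0+U$ of a finite-dimensional inner-product space has a unique minimiser, namely the unique point of $(v_0+U)\cap U^{\perp}$: writing $v_0=v_0^{\parallel}+v_0^{\perp}$ with $v_0^{\parallel}\in U$ and $v_0^{\perp}\in U^{\perp}$, one has $\norm{v_0+u}^2=\norm{v_0^{\perp}}^2+\norm{v_0^{\parallel}+u}^2$, uniquely minimised at $u=-v_0^{\parallel}$. Applying this with $U=\mathcal{C}({\cal N})$ and using $\sum_{e\in E({\cal N})}\theta(e)^2=\norm{\ket{\bar\theta}}^2$ immediately gives that the optimal unit $\s\t$-flow $\theta$ is unique and that $\ket{\bar\theta}\perp\mathcal{C}({\cal N})$.

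It remains to transfer orthogonality from $\ket{\bar\theta}$ to $\ket{\theta}$ and to read off the conclusion. By \eq{flow_state}, $\ket{\theta}-\ket{\bar\theta}=-\theta(\s)\ket{\leftarrow,\s}-\theta(\t)\ket{\rightarrow,\t}\in\mathrm{span}\{\ket{\leftarrow,\s},\ket{\rightarrow,\t}\}$, which is orthogonal to $\mathrm{span}\{\ket{e}:e\in E({\cal N})\}\supseteq\mathcal{C}({\cal N})$; hence for every circulation $c$ we get $\braket{\theta}{c}=\braket{\bar\theta}{\bar c}=0$. Since $\theta$ is unique, $\mathcal{F}^\textsc{OPT}({\cal N})=\mathrm{span}\{\ket{\theta}\}$, which is genuinely one-dimensional because $\ket{\theta}\neq 0$ — its $\ket{\leftarrow,\s}$-component is $-\theta(\s)=-1$ — and is orthogonal to $\mathcal{C}({\cal N})$ by the previous line.

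I do not expect a serious obstacle here. The only points that need a little care are the degenerate case where $\s$ and $\t$ are disconnected (excluded by the standing assumption that a unit $\s\t$-flow exists, or else $\mathcal{F}^\textsc{OPT}({\cal N})$ is trivial), and consistently keeping straight the two vector encodings $\ket{\bar\theta}$ and $\ket{\theta}$ of a flow while passing orthogonality between them; the rest is the routine projection-onto-an-affine-subspace argument.
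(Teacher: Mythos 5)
Your proof is correct, and it is the standard projection-onto-an-affine-subspace argument; the paper itself does not prove this lemma but defers to the cited reference (Lemma~7.2.2 of Cornelissen's thesis), which argues along the same lines. The only cosmetic slip is that $\norm{\ket{\bar\theta}}^2 = 2\sum_{e}\theta(e)^2$ rather than $\sum_e\theta(e)^2$ (each $\ket{e}=\ket{\rightarrow,e}-\ket{\leftarrow,e}$ has squared norm $2$), but since the two quantities are proportional this does not affect the minimiser or any of your conclusions.
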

Next, we show that the space of $\s\t$-flows admits a direct sum decomposition into optimal flows and circulations.
\begin{lemma}\label{lem:opt_flows_oplus_circulations}
    $\mathcal{F}({\cal N}) = \mathcal{F}^{\textsc{OPT}}({\cal N}) \oplus \mathcal{C}({\cal N})$
\end{lemma}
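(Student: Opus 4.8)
The plan is to show the two subspaces $\mathcal{F}^{\textsc{OPT}}({\cal N})$ and $\mathcal{C}({\cal N})$ are linearly independent (in fact orthogonal) and that together they span all of $\mathcal{F}({\cal N})$. The orthogonality half is already done for us: \lem{flow_properties} states that the one-dimensional space $\mathcal{F}^{\textsc{OPT}}({\cal N})$ is orthogonal to $\mathcal{C}({\cal N})$, so in particular their intersection is trivial and the sum $\mathcal{F}^{\textsc{OPT}}({\cal N}) + \mathcal{C}({\cal N})$ is direct. It remains to prove that this direct sum exhausts $\mathcal{F}({\cal N})$, i.e. that every $\s\t$-flow decomposes as an optimal unit $\s\t$-flow (scaled) plus a circulation.

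First I would dispose of the trivial flow: the zero flow lies in $\mathcal{C}({\cal N})$, so assume $\ket{\theta}\in\mathcal{F}({\cal N})$ is arbitrary and let $c = \theta(\s) = -\theta(\t)$ be its net flow value across the boundary (the equality $\theta(\s) = -\theta(\t)$ follows since the sum over all vertices of $\theta(\u)$ telescopes to $0$, and $\theta(\u)=0$ for all non-boundary $\u$). Let $\eta$ denote the unique optimal unit $\s\t$-flow from \lem{flow_properties}. Then I claim $\theta - c\,\eta$ is a circulation: it is certainly a flow, for every non-boundary vertex $\u$ we have $(\theta - c\eta)(\u) = 0 - 0 = 0$, and at the boundary $(\theta - c\eta)(\s) = c - c\cdot 1 = 0$ and likewise $(\theta - c\eta)(\t) = -c - c\cdot(-1) = 0$. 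Hence $\ket{\theta - c\eta}\in\mathcal{C}({\cal N})$, giving the decomposition $\ket{\theta} = c\ket{\eta} + \ket{\theta - c\eta}$ with the first summand in $\mathcal{F}^{\textsc{OPT}}({\cal N})$ and the second in $\mathcal{C}({\cal N})$. Care is needed here only to check that the map $\theta\mapsto\ket{\theta}$ from \eq{flow_state} respects these linear combinations — since the boundary-edge coefficients $-\theta(\s)$ and $-\theta(\t)$ are linear in $\theta$, the vector $\ket{\theta - c\eta}$ really does equal $\ket{\theta} - c\ket{\eta}$, so the identity holds at the level of the vector space, not just the function space.

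I do not expect a genuine obstacle here; the only thing to be slightly careful about is bookkeeping between a flow as a function on $E({\cal N})$ and its associated vector $\ket{\theta}$ including the boundary-edge components, and making sure that both the "$\supseteq$" containment ($\mathcal{F}^{\textsc{OPT}}({\cal N}),\mathcal{C}({\cal N})\subseteq\mathcal{F}({\cal N})$, which is immediate since a circulation is an $\s\t$-flow and an optimal unit $\s\t$-flow is an $\s\t$-flow) and the "$\subseteq$" containment (the decomposition above) are recorded. Combined with the triviality of the intersection from \lem{flow_properties}, this yields the claimed internal direct sum.
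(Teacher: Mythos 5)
Your proposal is correct and follows essentially the same route as the paper: use \lem{flow_properties} for orthogonality, note the trivial containment, and for the reverse inclusion subtract the appropriately scaled optimal unit $\s\t$-flow from an arbitrary $\s\t$-flow and check the difference is a circulation (the paper rescales $\theta$ to unit strength rather than scaling $\eta$ by $c$, which is the same argument). Your added justifications — the telescoping argument for $\theta(\s)=-\theta(\t)$ and the linearity of $\theta\mapsto\ket{\theta}$ including the boundary-edge components — are details the paper leaves implicit, and they check out.
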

\begin{proof}
    By \lem{flow_properties}, $\mathcal{F}^{\textsc{OPT}}(\mathcal{N}) \perp \mathcal{C}(\mathcal{N})$. It is immediate that $\mathcal{F}({\cal N}) \supseteq \mathcal{F}^{\textsc{OPT}}({\cal N}) \oplus \mathcal{C}({\cal N})$. Thus, it remains to prove the reverse inclusion $\mathcal{F}({\cal N}) \subseteq \mathcal{F}^{\textsc{OPT}}({\cal N}) \oplus \mathcal{C}({\cal N})$.
    Let $\ket{\theta} \in \mathcal{F}({\cal N})$. If $\theta$ is a circulation, then 
    we're done, so assume not. Without loss of generality, assume $\theta$ is scaled so that $\theta(\s)=-\theta(\t)=1$. Then, by \lem{flow_properties}, there exists a unique optimal unit $\s\t$-flow $\ket{\theta'} \in \mathcal{F}^{\textsc{OPT}}$. It is easy to check that $\ket{\theta} - \ket{\theta'}$ is a circulation, which establishes $\ket{\theta}\in \mathcal{F}^{\textsc{OPT}}({\cal N}) \oplus \mathcal{C}({\cal N})$, completing the proof.
\end{proof}

Next, we establish the relationship between the space of $\s\t$-flows of a switching network and its associated space $\mathcal{B}$.
\begin{lemma}\label{lem:B_perp}
Let ${\cal N}$ be a switching network, with ${\cal B}$ as in \eq{calA-calB}. Then $\mathcal{B}^\perp = \mathcal{F}({\cal N}) \oplus \mathrm{span} \{ \ket{\s},\ket{\t}\}$. 
\end{lemma}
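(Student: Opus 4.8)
The plan is to compute $\mathcal{B}^\perp$ directly, in coordinates. By \eqref{eq:calA-calB}, $\mathcal{B}$ is the span of the explicit family
$$\{\ket{\psi_\star(\u)}:\u\in V({\cal N})\}\cup\{\ket{\rightarrow,e}+\ket{\leftarrow,e}:e\in E({\cal N})\}\cup\{\ket{\leftarrow,\s}+\ket{\rightarrow,\t}\},$$
so $\ket{w}\in\mathcal{B}^\perp$ iff $\ket{w}$ is orthogonal to each of these vectors. I would parametrize a general $\ket{w}\in H_{\cal N}$ as
$$\ket{w}=\sum_{e\in E}\big(a_e\ket{\rightarrow,e}+b_e\ket{\leftarrow,e}\big)+c_\s\ket{\s}+c_\t\ket{\t}+d_\s\ket{\leftarrow,\s}+d_\t\ket{\rightarrow,\t}$$
and read off what each orthogonality condition imposes on the coefficients.

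\textbf{Key steps.} First, orthogonality to $\ket{\rightarrow,e}+\ket{\leftarrow,e}$ gives $a_e+b_e=0$ for every $e$, so the edge part collapses to $\sum_e a_e\ket{e}=\ket{\bar\theta}$, where we set $\theta(e):=a_e$ (using $\ket{e}=\ket{\rightarrow,e}-\ket{\leftarrow,e}$ as in \eqref{eq:cropped}). Second, for an interior vertex $\v\in V\setminus\{\s,\t\}$, orthogonality to $\ket{\psi_\star(\v)}$ reads $\sum_{e\in E^\rightarrow(\v)}a_e+\sum_{e\in E^\leftarrow(\v)}b_e=\sum_{e\in E^\rightarrow(\v)}\theta(e)-\sum_{e\in E^\leftarrow(\v)}\theta(e)=\theta(\v)$, so this is precisely the flow-conservation condition $\theta(\v)=0$; hence $\theta$ is an $\s\t$-flow. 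Third, orthogonality to $\ket{\psi_\star(\s)}$ gives $\theta(\s)+d_\s=0$, i.e. $d_\s=-\theta(\s)$, and likewise orthogonality to $\ket{\psi_\star(\t)}$ gives $d_\t=-\theta(\t)$. Fourth, orthogonality to the boundary-edge generator $\ket{\leftarrow,\s}+\ket{\rightarrow,\t}$ demands $d_\s+d_\t=0$, i.e. $\theta(\s)+\theta(\t)=0$; but this is automatic for any $\s\t$-flow, since $\sum_{\u\in V}\theta(\u)=0$ always (each edge contributes $+\theta(e)$ at its tail and $-\theta(e)$ at its head) and all interior terms vanish. Finally, $c_\s,c_\t$ are entirely unconstrained, since $\ket{\s},\ket{\t}$ appear in none of the generators of $\mathcal{B}$. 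Collecting these, $\ket{w}\in\mathcal{B}^\perp$ iff $\ket{w}=-\theta(\s)\ket{\leftarrow,\s}+\ket{\bar\theta}-\theta(\t)\ket{\rightarrow,\t}+c_\s\ket{\s}+c_\t\ket{\t}=\ket{\theta}+c_\s\ket{\s}+c_\t\ket{\t}$ for some $\s\t$-flow $\theta$ (recalling $\ket{\theta}$ from \eqref{eq:flow_state}) and scalars $c_\s,c_\t$. This says exactly $\mathcal{B}^\perp=\mathcal{F}({\cal N})+\mathrm{span}\{\ket{\s},\ket{\t}\}$, and the sum is orthogonal (hence direct) because every $\ket{\theta}$ is supported on $\bigoplus_e\Xi_e\oplus\mathrm{span}\{\ket{\leftarrow,\s},\ket{\rightarrow,\t}\}$, which is orthogonal to $\mathrm{span}\{\ket{\s},\ket{\t}\}$.

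\textbf{Main obstacle.} This is essentially a bookkeeping computation, so I do not expect a genuine difficulty; the only step that is not purely mechanical is recognizing that the generator $\ket{\leftarrow,\s}+\ket{\rightarrow,\t}$ imposes no new condition on $\ket{w}$ once the vertex conditions are in place — this is where one must invoke the global identity $\sum_{\u}\theta(\u)=0$. The other point to state carefully (rather than a difficulty) is the bidirectional nature of the argument: each orthogonality condition is an "iff", so the coordinate description obtained is both necessary and sufficient, giving both inclusions at once.
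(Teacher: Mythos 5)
Your computation is correct and arrives at the same characterization as the paper: impose $a_e=-b_e$ via the symmetric edge states, read off flow conservation at interior vertices from the vertex stars, and match the boundary coefficients $d_\s,d_\t$ to $\theta(\s),\theta(\t)$, with $\ket{\s},\ket{\t}$ unconstrained. The only difference is one of routing: the paper first invokes \lem{B_space_basis} to replace the generators of ${\cal B}$ by the cut-space basis ${\cal B}^-\cup\{\ket{b_e}\}$ and then computes inner products with the antisymmetrized stars $\ket{\psi_\star^-(\u)}$, whereas you work directly with the raw generators of \eq{calA-calB}. Your route therefore has to dispose of the extra generator $\ket{\leftarrow,\s}+\ket{\rightarrow,\t}$ by hand, and you do this correctly via the identity $\sum_{\u}\theta(\u)=0$ (each edge cancels between its head and tail), which shows the condition $d_\s+d_\t=0$ is implied by the others; in the paper this redundancy is absorbed into the statement of \lem{B_space_basis}. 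Both arguments are complete; yours is marginally more self-contained, the paper's reuses a lemma it needs anyway for basis generation.
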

\begin{proof}
First, we observe that $\mathcal{F}({\cal N})$ is orthogonal to $ \mathrm{span} \{ \ket{\s},\ket{\t}\}$, because, by definition, flow states have no overlap with $\ket{\s}$ and $\ket{\t}$.
By \lem{B_space_basis} we can write:
    \begin{align*}
        \cal B^\perp 
        &=\left({\cal B}^-\oplus\mathrm{span}\left\{\frac{1}{\sqrt{2}}(\ket{\rightarrow,e}+\ket{\leftarrow,e}):e\in E(G)\right\}\right)^\perp \\
        &= \left({\cal B}^-\right)^\perp \cap \left(\mathrm{span}\left\{\frac{1}{\sqrt{2}}(\ket{\rightarrow,e}+\ket{\leftarrow,e}):e\in E(G)\right\}\right)^\perp.
    \end{align*}
Note that 
\begin{align*}
  &  \left(\mathrm{span}\left\{\frac{1}{\sqrt{2}}(\ket{\rightarrow,e}+\ket{\leftarrow,e}):e\in E(G)\right\}\right)^\perp\\
  ={}& \mathrm{span}\left\{\frac{1}{\sqrt{2}}(\ket{\rightarrow,e}-\ket{\leftarrow,e}):e\in E(G)\right\}\oplus \mathrm{span}\{\ket{\s},\ket{\t},\ket{\leftarrow,\s},\ket{\rightarrow,\t}\}.
\end{align*}
Since ${\cal B}^-\subseteq \left(\mathrm{span}\left\{\frac{1}{\sqrt{2}}(\ket{\rightarrow,e}+\ket{\leftarrow,e}):e\in E(G)\right\}\right)^\perp$, we can take the orthogonal complement of ${\cal B}^-$ in $\left(\mathrm{span}\left\{\frac{1}{\sqrt{2}}(\ket{\rightarrow,e}+\ket{\leftarrow,e}):e\in E(G)\right\}\right)^\perp$. That is, we can assume that all edge spaces are one-dimensional and are spanned by $\ket{e} = \ket{\rightarrow,e}-\ket{\leftarrow,e}$, for $e\in E$.  We can take an arbitrary state $\ket{\psi} \in \left(\mathrm{span}\left\{\frac{1}{\sqrt{2}}(\ket{\rightarrow,e}+\ket{\leftarrow,e}):e\in E(G)\right\}\right)^\perp$ and write it as $\ket{\psi} = \ket{\hat\theta} + \ket{b}$, where
\begin{align*}
    \ket{\hat\theta} &= -\theta_{\s}\ket{\leftarrow,\s}+\sum_{e\in E}{\theta(e)}\ket{e}-\theta_{\t}\ket{\rightarrow,\t},
\end{align*}
for some function $\theta$ on $E$ and values $\theta_{\s},\theta_\t$, and $\ket{b}\in\mathrm{span}\{\ket{\s},\ket{\t}\}$.
By definition of $\ket{\psi_\star^-(\u)}$ (see \lem{B_space_basis}), it is orthogonal to $\ket{b}$ for any $\u \in V$, and so by inspection, $\ket{b}$ is orthogonal to ${\cal B}^-$.

Next, we compute the inner product of $\ket{\hat\theta}$ with the states in ${\cal B}^-$ to determine precisely when $\ket{\psi}$ is orthogonal to all of them.

First, for \emph{any} $\u\in V$,
\begin{equation}\label{eq:vanish0}
\begin{split}
    \braket{\hat\theta}{\psi_\star^-(\u)} &= \sum_{e\in E^\rightarrow (\u)} \frac{1}{2}\theta (e) \braket{e}{e} - \sum_{e\in E^\leftarrow (\u)} \frac{1}{2} \theta (e) \braket{e}{e}\\
    &= \sum_{e\in E^\rightarrow (\u)} \theta (e) - \sum_{e\in E^\leftarrow (\u)} \theta (e) = \theta(u).
\end{split}
\end{equation}
This inner product vanishes for all $\u\in V\setminus \{\s,\t\}$ if and only if $\theta$ is an $\s\t$-flow.
Next, we compute:
\begin{align}\label{eq:vanish1}
    \bra{\hat\theta}(\ket{\leftarrow,\s}+\ket{\psi_\star^-(\s)}) &= -\theta_{\s}+\braket{\hat\theta}{\psi_\star^-(\s)}=-\theta_{\s}+\theta(\s), 
\end{align}
by \eq{vanish0}.
This inner product vanishes if and only if $\theta_{\s}=\theta (\s)$.
Finally, we compute: \begin{align}\label{eq:vanish2}
    \bra{\theta}(\ket{\rightarrow,\t}+\ket{\psi_\star^-(\t)}) &= -\theta_{\t}+\theta(\t),
\end{align}
again by \eq{vanish0}. This inner product vanishes if and only if $\theta_\t=\theta(\t)$.

Combining \eq{vanish0}, \eq{vanish1} and \eq{vanish2}, we can see that $\ket{\psi}\in {\cal B}^-$ if and only if $\ket{\hat\theta}=\ket{\theta}$ (as defined in \eq{flow_state}) for some $\s\t$-flow $\theta$, which is if and only if $\ket{\hat\theta}\in {\cal F}({\cal N})$.
\end{proof}

Finally, we compute the dimensions of the space of flows and circulations of ${\cal N}$ and its subspaces. 
\begin{lemma}\label{lem:flow_dimensions}
Let $\mathcal{F}({\cal N})$ be the space of boundary flows of ${\cal N}$; $\mathcal{C}({\cal N})$ the space of circulations of ${\cal N}$; and $\mathcal{F}^\textsc{OPT}({\cal N})$ the span of optimal boundary flows of ${\cal N}$ (see \defin{flow-spaces}). Then
\begin{align*}
    \dim \mathcal{F}({\cal N}) &= \abs{E({\cal N})} +2 -\abs{V({\cal N})} \\
    \dim \mathcal{F}^\textsc{OPT}({\cal N}) &= 1\\
    \dim \mathcal{C}({\cal N}) &= \abs{E({\cal N})} -\abs{V({\cal N})} + 1.
\end{align*}
\end{lemma}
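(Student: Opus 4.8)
The plan is to compute $\dim\mathcal{F}(\mathcal{N})$ directly by a dimension count, then derive the other two dimensions from the lemmas already established. For the first equality, I would combine \lem{B_perp}, which states $\mathcal{B}^\perp = \mathcal{F}(\mathcal{N})\oplus\mathrm{span}\{\ket{\s},\ket{\t}\}$, with the known dimension of $\mathcal{B}$. So I first need $\dim\mathcal{B}$. Recall from \eq{calA-calB} that $\mathcal{B} = \bigoplus_{\u}{\cal V}_\u + \bigoplus_e \Xi_e^{\cal B} + \mathrm{span}\{\ket{\leftarrow,\s}+\ket{\rightarrow,\t}\}$, and by \lem{B_space_basis} an orthonormal basis for $\mathcal{B}$ is $\Psi_{\cal B}^-\cup\{\ket{b_e}:e\in E\}$, where $\Psi_{\cal B}^-$ is a basis for $\mathcal{B}^-$. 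By \lem{dim-B-minus}, $\dim\mathcal{B}^- = \abs{V(\mathcal{N})}$, and the $\ket{b_e}$ contribute a further $\abs{E(\mathcal{N})}$ dimensions orthogonal to $\mathcal{B}^-$ (orthogonality is immediate since $\ket{b_e}\propto\ket{\rightarrow,e}+\ket{\leftarrow,e}$ while the $\ket{\psi_\star^-(\u)}$ live in the span of the $\ket{\rightarrow,e}-\ket{\leftarrow,e}$). Hence $\dim\mathcal{B} = \abs{V(\mathcal{N})}+\abs{E(\mathcal{N})}$.

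Next I would pin down $\dim H_{\cal N}$. From \defin{switching-network}, $H_{\cal N} = \bigoplus_{e\in E}\Xi_e \oplus \mathrm{span}\{\ket{\s},\ket{\t},\ket{\leftarrow,\s},\ket{\rightarrow,\t}\}$, and since each $\Xi_e$ is two-dimensional, $\dim H_{\cal N} = 2\abs{E(\mathcal{N})}+4$. Therefore $\dim\mathcal{B}^\perp = 2\abs{E(\mathcal{N})}+4 - \abs{V(\mathcal{N})} - \abs{E(\mathcal{N})} = \abs{E(\mathcal{N})}+4-\abs{V(\mathcal{N})}$. Since $\mathcal{F}(\mathcal{N})$ is orthogonal to $\mathrm{span}\{\ket{\s},\ket{\t}\}$ (flow states have no $\ket{\s},\ket{\t}$ component, as noted in the proof of \lem{B_perp}) and this is a $2$-dimensional space, subtracting gives $\dim\mathcal{F}(\mathcal{N}) = \abs{E(\mathcal{N})}+2-\abs{V(\mathcal{N})}$, as claimed.

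For the second equality, $\dim\mathcal{F}^\textsc{OPT}(\mathcal{N}) = 1$ is exactly the content of \lem{flow_properties} (the optimal unit $\s\t$-flow is unique, so its span is one-dimensional), so there is nothing further to prove. For the third equality, I invoke \lem{opt_flows_oplus_circulations}, which gives the direct sum $\mathcal{F}(\mathcal{N}) = \mathcal{F}^\textsc{OPT}(\mathcal{N})\oplus\mathcal{C}(\mathcal{N})$; taking dimensions, $\dim\mathcal{C}(\mathcal{N}) = \dim\mathcal{F}(\mathcal{N}) - 1 = \abs{E(\mathcal{N})}+1-\abs{V(\mathcal{N})}$.

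The only mild subtlety — and the step I would be most careful about — is the computation of $\dim\mathcal{B}$: one must check that the union $\Psi_{\cal B}^-\cup\{\ket{b_e}\}$ from \lem{B_space_basis} really is linearly independent of total size $\abs{V(\mathcal{N})}+\abs{E(\mathcal{N})}$, i.e. that no $\ket{b_e}$ lies in $\mathcal{B}^-$ and the $\ket{b_e}$ are themselves independent. Both follow from the orthogonal-complement structure: $\mathcal{B}^-\subseteq\mathrm{span}\{\ket{\rightarrow,e}-\ket{\leftarrow,e}:e\in E\}\oplus\mathrm{span}\{\ket{\leftarrow,\s},\ket{\rightarrow,\t}\}$, which is orthogonal to each $\ket{b_e}$, and the $\ket{b_e}$ are mutually orthogonal since they are supported on disjoint edge spaces. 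Everything else is bookkeeping with the dimension formulas already in hand.
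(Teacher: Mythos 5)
Your proposal is correct and follows essentially the same route as the paper: compute $\dim\mathcal{B}=\abs{V({\cal N})}+\abs{E({\cal N})}$ via \lem{B_space_basis} and \lem{dim-B-minus}, use $\dim H_{\cal N}=2\abs{E({\cal N})}+4$ together with \lem{B_perp} to get $\dim\mathcal{F}({\cal N})$, and then read off the other two dimensions from \lem{flow_properties} and \lem{opt_flows_oplus_circulations}. The extra care you take in verifying the orthogonality of $\{\ket{b_e}\}$ to $\mathcal{B}^-$ is a sound (if implicit in the paper) addition.
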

\begin{proof}
    The dimension of $\mathcal{F}({\cal N})$ can be computed using \lem{B_perp}:  
\[
\dim \mathcal{F}({\cal N}) + \dim\mathrm{span} \{ \ket{\s},\ket{\t}\} = \dim H_{{\cal N}} - \dim \mathcal{B}.
\]
The dimension of $H_{{\cal N}}$ is equal to $2\abs{E({\cal N})} + 4$ by \defin{switching-network}. 
By \lem{B_space_basis} and \lem{dim-B-minus}, $\dim \mathcal{B} = \dim{\cal B}^-+\abs{E({\cal N})} = \abs{V({\cal N})} + \abs{E({\cal N})}$. Combining these observations,
we obtain 
\begin{align*}
    \dim \mathcal{F}({\cal N}) &= \dim H_{{\cal N}} - \dim \mathcal{B} - 2
    = \abs{E({\cal N})} + 2 - \abs{V({\cal N})}.
\end{align*}
Next, we note that it follows from \lem{flow_properties} that $\dim \mathcal{F}^{\textsc{OPT}}({\cal N}) = 1$.
Finally, the dimension of $\mathcal{C}({\cal N})$ is straightforward to determine, as it follows from the direct sum decomposition $\mathcal{F}({\cal N}) = \mathcal{F}^{\textsc{OPT}}({\cal N}) \oplus \mathcal{C}({\cal N})$ from \lem{opt_flows_oplus_circulations}.
\end{proof}

\section{Quantum Algorithm for DSTCON}\label{sec:BFS}

In this section, we prove our main result, by describing a quantum algorithm for $\textsc{dstcon}$ that works for any space bound $S\geq \log^2(n)$. 
We begin by stating the main technical result of this paper, which we prove in \sec{short_path_algorithm}. Specifically, we present a quantum algorithm for deciding directed $st$-connectivity under an additional constraint on the path length (i.e.~solving \textsc{Path}$_L$). While this subroutine is quantum, the algorithm we decide in the remainder of this section is otherwise classical. 
\begin{theorem}\label{thm:quantum_short_path_complexity}
Let $G=(V,E)$ be a directed graph such that $V = \{v_1,\dots,v_n\}$. Assume that $G$ can be accessed via a quantum oracle $\mathcal{O}_G$ that can be implemented in time $O(1)$, where for any $i,j\in [n]$, $b\in\{0,1\}$ 
$$\mathcal{O}_G: \ket{i}\ket{j}\ket{b}\mapsto \begin{cases}
   \ket{i}\ket{j}\ket{b\oplus 1} & \mbox{if } (v_i,v_j)\in E\\
    \ket{i}\ket{j}\ket{b} & \mbox{otherwise.}
\end{cases}$$
Let $L\leq n$ be a power of 2. Then there is a bounded-error quantum algorithm, ${\tt D}_L(G,u,v)$, that decides for any $u,v \in V$ whether there is a directed  path from $u$ to $v$ in $G$ of length at most $L$, in time
$$\tO \left( \left( L^{\log 3} (2n+1)^{\log L} n \right)^{1/2} \right)$$
and space
$O \left( \log (L) \log (n) \right).$
\end{theorem}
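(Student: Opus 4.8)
The plan is to realize $\textsc{Dist}_L$ as the evaluation of a recursively constructed switching network, and to read off the complexity from \thm{SN_algorithm}. For each $\ell\in\{0,\dots,\log L\}$ and each \emph{ordered pair of distinct vertices} $u,v$, I would define a switching network ${\cal N}_\ell(u,v)$ on the variables $x_{i,j}=[(v_i,v_j)\in E(G)]$ with the invariant that ${\cal N}_\ell(u,v)$ accepts $x$ iff $G$ has a directed $u\to v$ path of length at most $2^\ell$. The base case ${\cal N}_0(u,v)$ is a single edge with query label $(u,v)$, its endpoints playing the roles of $\s$ and $\t$. For the inductive step I use the Savitch midpoint identity: there is a $u\to v$ path of length at most $2^{\ell+1}$ iff there is one of length at most $2^\ell$, or there is a $w\notin\{u,v\}$ with a $u\to w$ path and a $w\to v$ path each of length at most $2^\ell$. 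So ${\cal N}_{\ell+1}(u,v)$ is formed (using \defin{gluing}) by placing in parallel, at common terminals $\s=u$ and $\t=v$, one copy of ${\cal N}_\ell(u,v)$ together with, for every $w\notin\{u,v\}$, the series composition of ${\cal N}_\ell(u,w)$ and ${\cal N}_\ell(w,v)$ (identifying the sink of the first with the source of the second). Crucially the recursion only ever instantiates networks on distinct pairs, so no degenerate base case is needed, and the input case $u=v$ is disposed of trivially up front.

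Correctness of ``${\tt D}_L(G,u,v):=$ evaluate ${\cal N}_{\log L}(u,v)$'' is then an induction on $\ell$ using the midpoint identity; the only point needing a word is that when the shortest $u\to v$ path has length $k$ with $2^\ell<k\le 2^{\ell+1}$, it can be cut at an internal vertex $w$ with both halves of length at most $2^\ell$ (possible since $k-2^\ell\le 2^\ell$), and taking the path simple forces $w\neq u,v$. To feed \thm{SN_algorithm} I then need three bounds on ${\cal N}_{\log L}$. First, routing each edge of a witness $G$-path through a single network edge shows every accepted $x$ has an $\s\t$-path in ${\cal N}_{\log L}(x)$ of length $W_+\le L$. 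Second, unrolling the recursion gives $|E({\cal N}_{\log L})|=2^{O(\log L\,\log n)}$, and a careful count of the parallel copies at each level yields the precise form appearing in the theorem, so that $\log|E({\cal N}_{\log L})|=O(\log L\,\log n)$ exactly. Third --- and this is the crux --- a basis for ${\cal B}^\perp$ of ${\cal N}_{\log L}$ must be generatable in some small time $T_B$.

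For the third point I would use the flow/circulation theory of \sec{prelim-flows}. By \lem{B_perp}, ${\cal B}^\perp={\cal F}({\cal N}_{\log L})\oplus\mathrm{span}\{\ket{\s},\ket{\t}\}$, and by \lemlem{opt_flows_oplus_circulations}{flow_dimensions} a basis of ${\cal F}({\cal N}_{\log L})$ consists of the unique optimal unit $\s\t$-flow state $\ket{\theta}$ together with a basis of the circulation (cycle) space ${\cal C}({\cal N}_{\log L})$. Because ${\cal N}_L$ is built solely by series and parallel compositions at the terminals, it is a two-terminal series-parallel graph, so its optimal unit $\s\t$-flow is the electrical flow, computable edge-by-edge by the usual series/parallel resistance reductions; in particular the partial sums $\sum\theta(e)^2$ over all edges $e$ whose binary encoding extends a given prefix are computable in $\tO(1)$ time from the recursion, so \lem{superposition_preparation} (with \rem{superposition_preparation_012} for the $\{\rightarrow,\leftarrow\}$ register) prepares $\ket{\theta}$. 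A fundamental cycle basis can likewise be organized recursively --- the cycles internal to each copy, plus, at each level, one cycle per extra parallel branch at $u,v$, supported on a pair of series branches --- and each such state, having structured amplitudes, is again prepared via \lem{superposition_preparation}. All of this presupposes an explicit, efficiently computable addressing scheme for the vertices and edges of ${\cal N}_{\log L}$; the same scheme makes the working basis for ${\cal A}$ generatable with $O(1)$ queries to ${\cal O}_G$ per basis vector (the query label $\varphi_e$ is read off the index of $e$) and makes the reflection around the index subspace cheap.

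Assembling these, \thm{SN_algorithm} gives a bounded-error quantum algorithm for $\textsc{Dist}_L$ in time $\tO\big(T_B\sqrt{W_+\,|E({\cal N}_{\log L})|}\big)$ and space $O(\log|E({\cal N}_{\log L})|)=O(\log L\,\log n)$; substituting the bounds on $W_+$, $|E({\cal N}_{\log L})|$ and $T_B$ yields the claimed time $\tO\big((L^{\log 3}(2n+1)^{\log L}n)^{1/2}\big)$. I expect essentially all the difficulty to be in the ${\cal B}^\perp$-basis generation: one needs a genuinely recursive handle on both the electrical flow and a cycle basis of ${\cal N}_{\log L}$, together with an encoding under which every required partial sum, index-to-state map, and reflection is cheap --- whereas the midpoint recursion, its correctness, and the $W_+$ bound are routine by comparison.
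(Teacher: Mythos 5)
Your switching network is not the one the paper builds: you use the direct Savitch series--parallel recursion (parallel over midpoints $w$ of the series composition ${\cal N}_\ell(u,w)\cdot{\cal N}_\ell(w,v)$), whereas the paper builds a reversible-pebbling network ${\cal N}_{2^\ell}(u)$ with a single source and $n$ sinks, assembled from $2n+1$ \emph{isomorphic} copies of ${\cal N}_{2^{\ell-1}}$ glued along their extended boundaries (\sec{graph-construction}, with correctness via the pebbling lemmas of \sec{correctness}). Your construction is plausibly a correct switching network (the simple-path argument through series--parallel blocks does certify directed paths, and it would even give $W_+\le L$ rather than the paper's $W_+=L^{\log 3}$), and the size and space counts are of the right order. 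So the divergence is not in itself the problem.

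The genuine gap is exactly where you locate it and then wave past it: the $\tO(1)$-time generation of an \emph{orthonormal} basis of ${\cal B}^\perp$. \thm{SN_algorithm} and the basis-generation definition require orthonormality, and a ``fundamental cycle basis'' of a series--parallel graph is not orthogonal: the new circulations created at a parallel composition are spanned by differences $\ket{\theta^{(b)}}-\ket{\theta^{(b')}}$ of branch flows, which pairwise overlap. Orthogonalizing them with efficiently computable amplitudes is the crux, and it is precisely why the paper engineers its network so that all $2n+1$ sub-blocks at each level are copies of the \emph{same} network: this symmetry is what makes the Fourier-type states $\ket{\psi_{z,x}(2^\ell)}$ of \thm{basis_flows_circulations} orthogonal (\clm{inner-prod-p-psi} hinges on all inner products $\braket{\bar\theta_j}{\bar\theta_{j'}}$ taking only two values) and what yields the closed-form norms $N_x(2^\ell)$, $N_{\bar 0}(2^\ell)$, $F_j(2^\ell)$ needed for the rotations in \lemlem{Fourier_flows}{flow_preparation}. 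In your network the parallel branches are structurally distinct (one copy of ${\cal N}_\ell(u,v)$ versus $n-2$ series pairs, with flow norms differing by a factor of $2$), so a plain Fourier transform over branches does not diagonalize the Gram matrix, and you have given no replacement construction, no closed-form partial sums for \lem{superposition_preparation}, and no argument that the resulting states are orthogonal to the sub-block circulation spaces across levels. Until that is supplied, the appeal to \thm{SN_algorithm} with $T_B=\tO(1)$ is unsupported, and this is the bulk of the paper's actual proof (\sec{alg-basis}).
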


In \sec{short_path_algorithm}, we prove the statement for $L$ a power of 2, using a recursive structure, but a simple corollary extends this result to any $L$.

\begin{corollary}\label{cor:Dist}
    Let $G$ be as in \thm{quantum_short_path_complexity}, and let $L$ be \emph{any} positive integer. Then there is a bounded-error quantum algorithm, ${\tt Dist}_L(G,u,v)$, that decides for any $u,v \in V$ whether there is a directed  path from $u$ to $v$ in $G$ of length at most $L$, in time
$$\tO \left( \left( L^{\log 3} (2n+1)^{\log L} n \right)^{1/2} \right)$$
and space
$O \left( \log (L) \log (n) \right).$
\end{corollary}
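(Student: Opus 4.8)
The plan is to reduce the case of an arbitrary positive integer $L$ to the power-of-2 case handled by \thm{quantum_short_path_complexity}. The key observation is monotonicity: if there is a directed $uv$-path of length at most $L$, then there is also one of length at most $L'$ for any $L' \geq L$. So to decide $\textsc{Dist}_L$, it suffices to run ${\tt D}_{L'}(G,u,v)$ for $L' = 2^{\lceil \log L\rceil}$, the smallest power of $2$ that is $\geq L$. Wait — this is not quite right, since a path of length $\le L'$ need not have length $\le L$. So I need to be slightly more careful: I should instead pad the graph.

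First I would handle the easy direction. Let $L' = 2^{\lceil \log L \rceil}$, so $L \le L' < 2L$. If $L' = L$ (i.e., $L$ is already a power of $2$), we are done by \thm{quantum_short_path_complexity}. Otherwise, to decide whether there is a $uv$-path of length \emph{exactly at most $L$} (not just at most $L'$), I would modify the graph by appending a directed path of $L' - L$ fresh ``dummy'' vertices $w_1, \dots, w_{L'-L}$ with edges $v \to w_1 \to w_2 \to \cdots \to w_{L'-L}$, and then ask ${\tt D}_{L'}(G', u, w_{L'-L})$ in the enlarged graph $G'$ on $n' = n + (L'-L) < n + 2L \le 3n$ vertices. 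A $uw_{L'-L}$-path in $G'$ of length at most $L'$ must pass through $v$ (the only way into the dummy tail) and then traverse all $L'-L$ dummy edges, so its prefix up to $v$ has length at most $L' - (L'-L) = L$; conversely any $uv$-path of length at most $L$ in $G$ extends to a $uw_{L'-L}$-path of length at most $L'$ in $G'$. Hence ${\tt D}_{L'}(G',u,w_{L'-L})$ decides $\textsc{Dist}_L(G,u,v)$. The oracle $\mathcal{O}_{G'}$ is implemented from $\mathcal{O}_G$ in time $O(1)$ by hard-coding the $O(L)$ dummy edges (their indices are a fixed contiguous block, checkable with $O(1)$ comparisons).

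Next I would track the complexity. Running ${\tt D}_{L'}$ on $G'$ costs, by \thm{quantum_short_path_complexity},
\[
\tO\!\left(\left( (L')^{\log 3}\,(2n'+1)^{\log L'}\, n' \right)^{1/2}\right)
\]
time and $O(\log(L')\log(n'))$ space. Since $L' < 2L$ and $n' < 3n$, we have $\log L' \le \log L + 1$, $\log n' \le \log n + 2$, $(L')^{\log 3} < (2L)^{\log 3} = 3 L^{\log 3}$, and $(2n'+1)^{\log L'} \le (6n+1)^{\log L + 1} = \tO\!\left((2n+1)^{\log L}\right)$ — the extra constant factor in the base raised to a $\log L = O(\log n)$ power contributes a polynomial-in-$n$ factor, wait, that is $\tO$ but it is only polynomial if $\log L$ is $O(1)$; actually $(6n+1)^{\log L}/(2n+1)^{\log L} = (3 + O(1/n))^{\log L} \le n^{O(1)}$ since $\log L \le \log n$, so it is indeed a $\tO(1)$ factor. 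Thus the time bound is $\tO\!\left(\left(L^{\log 3}(2n+1)^{\log L} n\right)^{1/2}\right)$ as claimed, and the space bound is $O(\log(L)\log(n))$.

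The main obstacle — really the only subtlety — is that reachability within $L'$ steps does not imply reachability within $L$ steps, so a naive rounding up of $L$ to a power of $2$ changes the decision problem; the padding-with-a-dummy-tail trick is what fixes this while keeping $n' = \Theta(n)$ and preserving the $O(1)$-time oracle. Everything else is bookkeeping to confirm the asymptotic bounds are unchanged under $L \mapsto L'$, $n \mapsto n'$.
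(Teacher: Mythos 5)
Your proposal is correct and takes essentially the same approach as the paper: both reduce to the power-of-two case of \thm{quantum_short_path_complexity} by padding $G$ with a directed dummy path of length $2^{\lceil\log L\rceil}-L$ so that length-$\leq L$ reachability in $G$ becomes length-$\leq 2^{\lceil\log L\rceil}$ reachability in $G'$ (the paper attaches the tail \emph{entering} $u$ and calls ${\tt D}_{2^{\ell}}(G',s_1,v)$, whereas you attach it \emph{leaving} $v$, which is an equivalent choice). One small caveat in your bookkeeping: the ratio $\left(\frac{6n+1}{2n+1}\right)^{\log L}\leq 3^{\log L}=L^{\log 3}$ is polynomial in $n$, not $\tO(1)$ as you assert (the paper's $\tO$ hides only polylogarithmic factors), but this slack is harmless where the corollary is used, and the paper's own one-line proof glosses over the identical point.
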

\begin{proof}
    To prove the statement, we exhibit a quantum algorithm, \algo{InnCheck} that uses one call to the subroutine ${\tt D}$ from \thm{quantum_short_path_complexity}, on a graph $G'$ that is constructed from $G$ by adding a directed path from some new vertex $s_1$ into $u$ of length such that any $uv$-path of length at most $L$ corresponds to a $s_1v$-path of length at most $2^{\lceil\log L\rceil}$ (see \fig{Gprime}). Since $G'$ can be queried using at most one query to $G$, the result follows.  
\end{proof}

\begin{algorithm}[h]
\caption{{\tt Dist}$_L(G,u,v)$}\label{algo:InnCheck}
Parameter: a positive integer $L\leq n$\\
Input: a directed graph $G=(V,E)$ and a pair of vertices $u,v \in V$\\
Output: 0 or 1 indicating there is a path of length at most $L$ between $u$ and $v$ in $G$
\vskip3pt
\hrule
\vskip2pt
\hrule
\vskip5pt
\begin{enumerate}
    \item Let $\ell=\lceil\log L\rceil$ and  $a=2^\ell-L$
    \item Let $G'$ be the graph $G$ with $a$ new vertices $s_1,\dots,s_a$ and $a$ new edges $(s_1,s_2),\dots,(s_{a-1},s_a),(s_a,u)$. Then $G'$ is just $G$ with a directed path of length $a$ coming into $u$, and can easily be queried using queries to $G$.
    \item Return ${\tt D}_{2^\ell}(G,s_1,v)$.
\end{enumerate}
\end{algorithm}

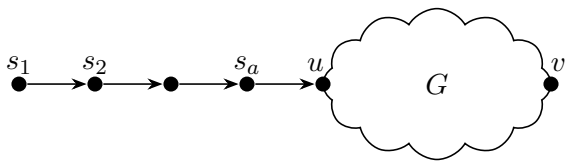
\begin{figure}
    \centering
\begin{tikzpicture}[>={Stealth},       
                    auto,
                    node distance=1cm,  
                    semithick]          
  \node (v1) [circle, fill, inner sep=2pt] {};
  \node (v2) [circle, fill, inner sep=2pt, right of=v1] {};
  \node (v3) [circle, fill, inner sep=2pt, right of=v2] {};
  \node (v4) [circle, fill, inner sep=2pt, right of=v3] {};
  \node (u) [circle, fill, inner sep=2pt, right of=v4] {};
  \node (v) [circle, fill, inner sep=2pt, right of=u, xshift=2cm] {};
  
  \path[->] (v1) edge (v2)
            (v2) edge (v3)
            (v3) edge (v4)
            (v4) edge (u);

  \node[draw,
        cloud,
        cloud puffs=12,
        cloud puff arc=120,
        aspect=2,
        minimum width=3cm,
        minimum height=2cm,
        inner sep=0pt,
        name=cloud] at (5.5,0) {$G$};

    \node at (0,.25) {$s_1$};
    \node at (1,.25) {$s_2$};
    \node at (3,.25) {$s_a$};
    \node at (3.9,.25) {$u$};
    \node at (7.1,.25) {$v$};
\end{tikzpicture}
    \caption{The graph $G'$ constructed from $G$. It is clear that there is a $uv$-path of length at most $L$ in $G$ if and only if there is a $s_1v$-path of length at most $L+a$ in $G'$.}
    \label{fig:Gprime}
\end{figure}

\subsection{BFS algorithm that calls the quantum short path subroutine}

To get a time-space tradeoff for \textsc{dstcon}, and prove our main result, we describe a classical BFS-based algorithm, \algo{outer}, from~\cite{barnes1998dstconnST}, that makes calls to a subroutine for the problem \textsc{Dist}$_L(G,u,v)$, of deciding whether there is a path of length at most $L$ from $u$ to $v$ in a directed graph $G$. Our algorithm for \textsc{dstcon} is obtained by instantiating that subroutine with the quantum algorithm ${\tt Dist}_L$ from \cor{Dist}.

\begin{algorithm}[h]
\caption{{\tt DSTCON}$_{L}(G,s,t)$ \cite{barnes1998dstconnST}}\label{algo:outer}
Parameter: a positive integer $L\leq n$\\
Input: a directed graph $G=(V,E)$ and a pair of vertices $s,t \in V$\\
Output: \textsc{Connected} if there is a directed path from $s$ to $t$ in $G$, \textsc{Not Connected} otherwise
\vskip3pt
\hrule
\vskip2pt
\hrule
\vskip5pt
\begin{algorithmic}[1]
    \For{$j = 0 ,\ldots, L-1$}
        \State $S \gets \{ s \}$
        \For{all vertices $v \in V$}
            \If{ ${\tt Dist}_j (s,v) = 1 \land {\tt Dist}_{j - 1} (s,v) = 0$}
                \If{$|S| > n / L$}
                     try next $j$
                \Else
                    \: add $v$ to $S$ \EndIf
            \EndIf
        \EndFor
        \For{$i = 1 ,\ldots, \lfloor n/L \rfloor$}
            \State $S' = \varnothing$
            \For{all vertices $v \in V$}
                \If{ $ \exists u \in S:{\tt Dist}_L (u,v) = 1 \land \forall u \in S:{\tt Dist}_{L - 1} (u,v) = 0$}
                    \If{$\abs{S} + \abs{S'} > n / L$}
                         try next $j$
                    \Else
                        \: add $v$ to $S'$
                    \EndIf
                \EndIf
            \EndFor
            \State $S = S \cup S'$
        \EndFor
        \If{$t$ within distance $L$ of a vertex in $S$}
             \Return (\textsc{Connected})
        \Else 
            \: \Return (\textsc{Not Connected})
        \EndIf
    \EndFor
\end{algorithmic}
\end{algorithm}

\noindent Ref.~\cite{barnes1998dstconnST} show that this algorithm correctly decides \textsc{dstcon} whenever ${\tt Dist}_L$ decides $\textsc{Path}_L$.
From~\cite{barnes1998dstconnST}, or by inspecting \algo{outer}, we get the following.
\begin{lemma}
Let ${\tt Dist}_L(G,u,v)$ be a bounded-error algorithm for $\textsc{Dist}_L$, with time complexity  $D_T(n,L)$, and the space complexity is $D_S(n,L)$. Then the time complexity of \algo{outer} is
    $$\tO\left(\frac{n^3}{L} D_T(n,L)\right)$$
    and its space complexity is
    $$O\left(\frac{n \log L}{L}+D_S(n,L)\right).$$ 
    If ${\tt Dist}_L$ is implemented by a quantum algorithm, then the \emph{quantum} space complexity is at most $O(D_S(n,L))$, and any remaining space is classical. 
\end{lemma}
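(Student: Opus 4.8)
The plan is to prove the complexity bounds for \algo{outer} directly by reading off the loop structure, plugging in the generic subroutine costs $D_T(n,L)$ and $D_S(n,L)$, and then tracking which parts of the space are classical versus quantum. Correctness of \algo{outer} (that it decides \textsc{dstcon} whenever {\tt Dist}$_L$ decides \textsc{Path}$_L$) is already attributed to~\cite{barnes1998dstconnST}, so I would not reprove it; I only need the resource accounting. I would present the argument as three short paragraphs, one for time, one for total space, and one for the quantum/classical split.

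For the \textbf{time bound}: the outer \texttt{for} loop over $j$ runs $L$ times. Inside it, the first inner loop iterates over all $v\in V$ ($n$ iterations), each making $O(1)$ calls to {\tt Dist}, for a cost of $O(n\, D_T(n,L))$. The second block is a loop over $i$ of length $\lfloor n/L\rfloor$, each iteration scanning all $v\in V$ and, for each, checking membership conditions against the set $S$, which has size $O(n/L)$; each such check costs $O(n/L)$ calls to {\tt Dist}, so one iteration of the $i$-loop costs $O((n/L)\cdot n\cdot D_T(n,L)) = O(n^2 D_T(n,L)/L)$, and the whole $i$-loop costs $O(n^3 D_T(n,L)/L^2)$. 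The final ``$t$ within distance $L$'' check costs $O(n\, D_T(n,L))$. Summing these and multiplying by the $L$ outer iterations gives $O(n^3 D_T(n,L)/L)$ as the dominant term; absorbing the lower-order terms and polylogarithmic overheads from managing the sets $S, S'$ (each of size $O(n/L) \le n$, so $O(\log n)$ bits per element, $O(n\log n)$ bits total to store and scan) into $\tO(\cdot)$ yields the claimed $\tO(n^3 D_T(n,L)/L)$.

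For the \textbf{space bound}: the algorithm stores the sets $S$ and $S'$, which by the explicit cutoff tests ($|S|\le n/L$, $|S|+|S'|\le n/L$) never exceed $O(n/L)$ vertices, each named by $O(\log n)$ bits; but in fact, inspecting how these sets are used — we only ever need, per vertex $v$, the distances from $s$ (or from members of $S$) truncated at $L$ — the relevant bookkeeping is $O((n/L)\log L)$ bits, matching the stated $O(n\log L / L)$. On top of this, each call to {\tt Dist}$_L$ uses $D_S(n,L)$ workspace, which can be reused across calls (the sequential structure means at most one call is live at a time), so the total is $O(n\log L/L + D_S(n,L))$. The loop counters $j,i$ and vertex index $v$ cost only $O(\log n)$, absorbed into the stated bounds.

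For the \textbf{quantum/classical split}: all of \algo{outer} — the loops, the set manipulations, the comparisons — is classical computation; the only quantum component is the invocation of the quantum {\tt Dist}$_L$ subroutine from \cor{Dist}, which uses $O(D_S(n,L)) = O(\log L\log n)$ qubits. Since the calls are made sequentially and each returns a single classical bit, the quantum register can be allocated, used, measured, and freed within each call, so the peak quantum space is $O(D_S(n,L))$; the $O(n\log L/L)$ space holding $S$ and $S'$ is purely classical. The one point requiring a remark — and the mildly delicate step — is that \cor{Dist} is stated for a graph accessed by an $O(1)$-time oracle, whereas here {\tt Dist} is called on subgraphs and on auxiliary graphs like $G'$ from \fig{Gprime}; but each such oracle query reduces to $O(1)$ queries to $G$ plus $O(\log n)$ classical bookkeeping, so this does not change the asymptotics. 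This completes the accounting and hence the proof of the lemma.
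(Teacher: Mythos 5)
The paper gives no proof of this lemma at all (it is attributed to~\cite{barnes1998dstconnST} ``or by inspecting \algo{outer}''), and your loop-by-loop accounting is exactly the inspection that is intended: the dominant term $L\cdot\frac{n}{L}\cdot n\cdot\frac{n}{L}\cdot D_T=\frac{n^3}{L}D_T$ is right, the sequential reuse of the subroutine's workspace is right, and the quantum/classical split is right. The one soft spot is your justification of the $O\bigl(\frac{n\log L}{L}\bigr)$ term: the algorithm must know \emph{which} vertices are in $S$ in order to call ${\tt Dist}_L(u,v)$ for $u\in S$, so the natural cost of storing $S$ is $O\bigl(\frac{n}{L}\log n\bigr)$ bits, and your claim that ``$\log L$ bits of distance information per vertex'' suffice is not substantiated. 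This discrepancy is inherited from the lemma statement itself rather than introduced by you, and it is harmless downstream, since \thm{quantum_dstcon} only invokes the weaker bound $\tO\bigl(\frac{n}{L}\log n\bigr)$; but as written you should either prove the $\log L$ accounting (e.g.\ by exhibiting the encoding of $S$ used in~\cite{barnes1998dstconnST}) or state the $\frac{n}{L}\log n$ bound you actually establish.
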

Substituting the result of \cor{Dist} yields the following theorem, which combines our quantum subroutine with the classical BFS algorithm. Although the algorithm in \cor{Dist} has bounded error, its success probability can be boosted high enough that the outer algorithm will not notice, using majority voting, at the cost of an overhead of $\log \tO(n^3/L)$, which is hidden in the $\tO$ of the final complexity.  
\begin{theorem}\label{thm:quantum_dstcon}
Let $G=(V,E)$ be a directed graph such that $V = \{v_1,\dots,v_n\}$, and $s,t \in V$. Assume that $G$ can be accessed via a quantum oracle $\mathcal{O}_G$ that can be implemented in time $O(1)$, where for any $i,j\in [n]$, $b\in\{0,1\}$, 
$$\mathcal{O}_G: \ket{i}\ket{j}\ket{b}\mapsto \begin{cases}
   \ket{i}\ket{j}\ket{b\oplus 1} & \mbox{if } (v_i,v_j)\in E\\
    \ket{i}\ket{j}\ket{b} & \mbox{otherwise.}
\end{cases}$$
Then there is a quantum algorithm that decides whether there is a directed  path from $s$ to $t$ in $G$ with bounded error in time
$$\tO \left( n^{3.5} L^{.5\log(3)-1} (2n+1)^{0.5\log L} \right)$$
and total space
$$\tO \left( \left( \frac{n}{L} + \log L \right)\log n \right)$$
of which 
$O(\log(L)\log(n))$ is quantum space.
\end{theorem}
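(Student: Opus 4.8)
The plan is to combine the two results that directly precede the statement: \cor{Dist}, which gives the time and space complexity of the quantum subroutine ${\tt Dist}_L$, and the preceding lemma, which expresses the time and space complexity of \algo{outer} in terms of those of the ${\tt Dist}_L$ subroutine it calls. Concretely, I would set $D_T(n,L) = \tO\big((L^{\log 3}(2n+1)^{\log L} n)^{1/2}\big)$ and $D_S(n,L) = O(\log(L)\log(n))$ from \cor{Dist}, and substitute these into the $\tO(n^3 D_T(n,L)/L)$ time bound and the $O(n\log L / L + D_S(n,L))$ space bound of \algo{outer}.

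For the time bound, I would simplify $\frac{n^3}{L}\cdot (L^{\log 3}(2n+1)^{\log L}n)^{1/2} = n^{3.5}\, L^{\frac12\log 3 - 1}\, (2n+1)^{\frac12 \log L}$, which matches the claimed expression once the polylog factors from the $\tO$ and from majority-voting amplification (the $\log\tO(n^3/L)$ overhead mentioned just before the theorem statement) are absorbed. For the space bound, I would note $O\big(\frac{n\log L}{L} + \log(L)\log(n)\big) = \tO\big((\frac{n}{L} + \log L)\log n\big)$, again absorbing lower-order terms into the $\tO$; here I should be slightly careful that the $\frac{n\log L}{L}$ term from \algo{outer} and the $\log(L)\log(n)$ term from the subroutine combine into the stated form, which holds since $\log L \le \log n$. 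Finally, the claim that only $O(\log(L)\log(n))$ of this space is quantum follows from the last sentence of the preceding lemma, since ${\tt Dist}_L$ is the only quantum ingredient and all of \algo{outer}'s own bookkeeping (maintaining the sets $S$, $S'$, the loop counters) is classical, requiring $O(\frac{n\log L}{L})$ classical bits.

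The only real content beyond bookkeeping is checking correctness of the composed algorithm: \algo{outer} is a bounded-error randomized procedure built on top of a bounded-error subroutine, and one must argue that amplifying the success probability of each ${\tt Dist}_L$ call to $1 - 1/\mathrm{poly}(n)$ (via $O(\log(n^3/L))$ repetitions and majority vote) makes the overall failure probability of \algo{outer} negligible by a union bound over the $\tO(n^3/L)$ subroutine invocations. I expect this to be the main (though still routine) obstacle, since it is the one place where one must reason about error propagation rather than merely substitute complexities; the correctness of \algo{outer} itself, assuming perfect ${\tt Dist}_L$, is cited from~\cite{barnes1998dstconnST}. I would therefore structure the proof as: (i) invoke \cor{Dist} for the subroutine complexities, (ii) amplify the subroutine and union-bound over its calls, (iii) apply the preceding lemma and substitute, (iv) simplify the resulting expressions, and (v) invoke the last sentence of the lemma for the quantum/classical space split.
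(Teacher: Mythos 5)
Your proposal is correct and follows essentially the same route as the paper, which obtains \thm{quantum_dstcon} by substituting the bounds of \cor{Dist} into the preceding lemma and absorbing the majority-voting amplification overhead of $\log \tO(n^3/L)$ into the $\tO$. Your additional care about the union bound over subroutine calls and the quantum/classical space split matches what the paper leaves implicit.
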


\subsection{Complexity comparison: classical vs. quantum}

\begin{theorem}[\cite{barnes1998dstconnST}]\label{thm:classical_tradeoff}
    Let $G=(V,E)$ be a directed graph such that ${V} = \{v_1,\dots,v_n\}$, and $s,t \in V$. Assume that $G$ can be accessed via a classical oracle ${\cal O}_G$ that can be implemented in time $O(1)$, where for any $i,j\in [n]$, 
    $${\cal O}_G(i,j)=\begin{cases}
   1 & \mbox{if } (v_i,v_j)\in E\\
    0 & \mbox{otherwise.}
\end{cases}$$
    Then for any $S\geq \log^2(n)$, there is a classical algorithm that decides whether there is a directed path from $s$ to $t$ in $G$ using space $O(S)$ and time $$T \leq 2^{\log^2 \frac{n}{S} +O(\log n\log\log n)}.$$
\end{theorem}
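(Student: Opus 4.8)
The goal is to prove Theorem~\ref{thm:classical_tradeoff}, the classical time-space tradeoff of Barnes, Buss, Ruzzo and Schieber. This is a restatement of a known result, so the plan is essentially to reconstruct the argument that combines the outer BFS algorithm (\algo{outer}) with a classical recursive subroutine for \textsc{Dist}$_L$, and then optimize the parameter $L$ as a function of the target space bound $S$.

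\textbf{Step 1: a classical recursive subroutine for \textsc{Dist}$_L$.}
First I would construct a classical algorithm ${\tt Dist}_L(G,u,v)$ deciding whether $u$ reaches $v$ within $L$ steps, using the Savitch-style midpoint recursion: $u$ reaches $v$ in $\le L$ steps iff there exists a vertex $w$ such that $u$ reaches $w$ in $\le L/2$ steps and $w$ reaches $v$ in $\le L/2$ steps. Implemented naively this uses $O(\log(L)\log(n))$ space and time $n^{\log L}$, which is exactly Savitch when $L=n$. The key refinement from~\cite{barnes1998dstconnST} is to cut off the recursion at a threshold $L_0$, below which one switches to a more time-efficient but more space-hungry method (e.g.\ a bounded-depth BFS from $u$ that explores only $O(L_0)$ levels, storing a frontier). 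Choosing the cutoff to balance the recursion gives, for the \textsc{Dist}$_L$ subroutine, space $D_S(n,L) = \tO(L_0 \log n)$ (dominated by the frontier at the cutoff) and time $D_T(n,L) = 2^{O(\log(n)\log(L/L_0))}\cdot\mathrm{poly}(n,L_0)$.

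\textbf{Step 2: plug into the outer BFS algorithm and optimize.}
Next I would invoke the analysis of \algo{outer} (the lemma following it in the text): its total space is $O\!\big(\tfrac{n\log L}{L} + D_S(n,L)\big)$ and its time is $\tO\!\big(\tfrac{n^3}{L}D_T(n,L)\big)$. Now set the parameters. We are given a space budget $S \ge \log^2 n$, and I want the two space contributions — the $n/L$ frontier of the outer BFS and the $L_0$ frontier of the inner cutoff — to both fit within $O(S)$. Taking $L \asymp n/S$ (up to polylog factors) makes the outer frontier $O(S)$. For the inner subroutine we can in fact afford a tiny cutoff: since \textsc{Dist}$_L$ only needs $O(\log^2 n)$ space when $L_0 = O(\log n)$ (pure Savitch recursion all the way down), and $\log^2 n \le S$, we may simply take $L_0 = \Theta(\log n)$, giving $D_S = O(\log^2 n) \le O(S)$ and $D_T(n,L) = n^{O(\log(L/\log n))} = n^{O(\log L)} = 2^{O(\log n \log L)}$. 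Substituting $L\asymp n/S$: the time becomes $\tfrac{n^3}{L}\cdot 2^{O(\log n\log L)} = 2^{O(\log n)}\cdot 2^{O(\log n \log(n/S))}$.

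\textbf{Step 3: tighten the exponent to $\log^2(n/S)$.}
The bound of Step~2 gives an exponent $O(\log n\log(n/S))$, whereas the theorem claims the sharper $\log^2(n/S) + O(\log n\log\log n)$. Closing this gap is the main obstacle, and it is where the real cleverness of~\cite{barnes1998dstconnST} lies: one must recurse not down to constant-size subproblems on a graph of $n$ vertices, but exploit that after the first level of the outer BFS (which has spread the reachable set out in chunks of $L$), all the remaining \textsc{Dist}$_L$ queries are about paths of length exactly $L \asymp n/S$, so the recursion has depth only $\log L = \log(n/S)$ and at each level the "branching factor" is the number of candidate midpoints, which can be taken to be $O(n/S)$ as well (restricting midpoints to the relevant layers), not $n$. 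That yields time $(n/S)^{O(\log(n/S))} = 2^{O(\log^2(n/S))}$ per subroutine call, and multiplying by the $\tO(n^3/L)$ outer factor and the polylog majority/bookkeeping overheads absorbs into the $2^{O(\log n \log\log n)}$ slack. I would present this layer-restriction argument carefully — verifying that restricting midpoint search to $O(n/S)$-sized vertex layers is correct because the outer algorithm guarantees every reachable vertex lies within $L$ of a stored frontier vertex — as it is the crux, and then simply collect the exponents to obtain $T \le 2^{\log^2(n/S) + O(\log n\log\log n)}$ with space $O(S)$.
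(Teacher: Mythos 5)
First, note that the paper does not actually prove this theorem: it is imported from~\cite{barnes1998dstconnST}, with only the remark that tracking their parameter choices more carefully yields the stated exponent $\log^2(\frac{n}{S})+O(\log n\log\log n)$ rather than the coarser $O(\log^2(\frac{n}{S}))$. So your task is really to reconstruct the Barnes--Buss--Ruzzo--Schieber argument. Your Steps 1--2 correctly reconstruct the \emph{outer} structure (the layered BFS of \algo{outer} with $L\asymp (n/S)\log n$ and $\tO(n^3/L)$ calls to a $\textsc{Dist}_L$ subroutine), but the bound they deliver is only $T\le 2^{O(\log n\,\log(n/S))}$: a midpoint (Savitch-style) recursion on path length $L$ branches over all $n$ vertices at each of its $\log L$ levels, giving $n^{\Theta(\log L)}$ per call. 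That is strictly weaker than the theorem. For instance, at $S=n/2^{\sqrt{\log n}}$ it gives $2^{O(\log^{3/2} n)}$ where the theorem gives a quasi-polynomial bound; and it is essentially the square of the \emph{quantum} bound $2^{\frac12\log n\log(n/S)}$ proved in this paper, which cannot be the intended classical baseline -- the paper's speedup is claimed to vanish at $S=\sqrt n$, which is only consistent with the classical exponent being $\log^2(\frac{n}{S})$ rather than $\log n\log(\frac{n}{S})$.

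The genuine gap is therefore Step 3, which you correctly identify as the crux but do not establish. Your justification -- that the candidate midpoints at each recursion level can be restricted to $O(n/S)=O(L)$ vertices ``because the outer algorithm guarantees every reachable vertex lies within $L$ of a stored frontier vertex'' -- does not follow. The frontier-size guarantee $|S|\le n/L$ in \algo{outer} applies only to the landmark layers, i.e., vertices at distances congruent to the good offset $j$ modulo $L$; it says nothing about the number of vertices at the \emph{intermediate} distances through which a length-$\le L$ path from a landmark to $v$ passes, and that number can be $\Theta(n)$. So restricting the midpoint search to ``relevant layers'' does not reduce the branching factor below $n$, and the recursion as you describe it still costs $n^{\Theta(\log L)}$ per call. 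The actual subroutine of~\cite{barnes1998dstconnST} attains $\mathrm{poly}(n)\cdot (n/S)^{O(\log(n/S))}$ via a substantially more intricate recursive enumeration scheme, which is the technical heart of their paper; without reconstructing it, the theorem as stated is not proved. If the theorem is only needed as a black box for the comparison following \thm{quantum_tradeoff}, citing~\cite{barnes1998dstconnST} as the paper does is the right move; a self-contained proof requires replacing Step 3 with their actual subroutine and its analysis.
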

In \cite{barnes1998dstconnST}, they state a bound of $T\leq 2^{O(\log^2(n/S))}$, but using their choice of parameters, and a slightly more precise analysis of their algorithm, we can compute the more fine-grained upper bound we have stated above. We improve on their result in our main theorem, which is the following. 
\begin{theorem}\label{thm:quantum_tradeoff}
Let $G=(V,E)$ be a directed graph such that $V = \{v_1,\dots,v_n\}$, and $s,t \in V$. Assume that $G$ can be accessed via a quantum oracle $\mathcal{O}_G$ that can be implemented in time $O(1)$, where for any $i,j\in [n]$, $b\in\{0,1\}$, 
$$\mathcal{O}_G: \ket{i}\ket{j}\ket{b}\mapsto \begin{cases}
   \ket{i}\ket{j}\ket{b\oplus 1} & \mbox{if } (v_i,v_j)\in E\\
    \ket{i}\ket{j}\ket{b} & \mbox{otherwise.}
\end{cases}$$
Then for any $S\geq\log^2(n)$, there is a quantum algorithm that decides for any $s,t \in V$ whether there is a directed  path from $s$ to $t$ in $G$ with bounded error using space $O(S)$ and time
$$T\leq 2^{\frac{1}{2}\log n\log\frac{n}{S}+O(\log n\log\log n)}.$$
This algorithm uses $O(\log^2(n))$ quantum space.
\end{theorem}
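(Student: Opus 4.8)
The plan is to obtain \thm{quantum_tradeoff} directly from \thm{quantum_dstcon} by making the right choice of the path-length parameter $L$ in terms of $n$ and $S$, and then just tracking exponents. Since $S\geq\log^2(n)$ by hypothesis, and using more than $\tO(n)$ space buys nothing over plain BFS, I would work in the regime $\log^2(n)\leq S\leq n$. The choice that balances the three resources is $L=\Theta(n\log(n)/S)$: concretely, take $L$ of this order (rounded to a convenient integer, or a power of $2$), large enough that the total space of \thm{quantum_dstcon} is $O(S)$. Because $S\geq\log^2(n)$ this forces $1\leq L\leq n$, so $L$ is an admissible parameter for \algo{outer}; and since any two admissible choices of this order differ only by polylogarithmic factors, we always have $\log L=\log(n/S)+O(\log\log n)$, which is all the final bound will depend on.

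\emph{Space.} By \thm{quantum_dstcon}, the total space is $\tO\bigl((n/L+\log L)\log n\bigr)$, of which only $O(\log(L)\log(n))$ is quantum. By the choice of $L$, the term $(n/L)\log n$ is $O(S)$; and $\log(L)\log(n)\leq\log^2(n)\leq S$, so the second term is $O(S)$ as well. Hence the total space is $O(S)$, and the quantum space is $O(\log(L)\log(n))=O(\log^2(n))$. This is precisely where the hypothesis $S\geq\log^2(n)$ is used.

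\emph{Time.} \thm{quantum_dstcon} gives $T=\tO\bigl(n^{3.5}L^{0.5\log 3-1}(2n+1)^{0.5\log L}\bigr)$. Taking base-$2$ logarithms, and absorbing the polylogarithmic factors hidden by $\tO$ into an additive $O(\log\log n)$ in the exponent,
\[
\log T=3.5\log n+\bigl(\tfrac12\log 3-1\bigr)\log L+\tfrac12\log(L)\log(2n+1)+O(\log\log n).
\]
Using $\log(2n+1)=\log n+O(1/n)$ and $\log L=\log(n/S)+O(\log\log n)$, the leading term becomes $\tfrac12\log(L)\log(2n+1)=\tfrac12\log(n)\log(n/S)+O(\log n\log\log n)$, while $3.5\log n$ and $\bigl(\tfrac12\log 3-1\bigr)\log L=O(\log n)$ are each $O(\log n\log\log n)$. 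Collecting terms, $\log T=\tfrac12\log(n)\log(n/S)+O(\log n\log\log n)$, i.e. $T\leq 2^{\frac12\log(n)\log(n/S)+O(\log n\log\log n)}$, as claimed.

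There is no genuinely new idea here beyond \thm{quantum_dstcon} (which in turn rests on the switching-network subroutine for $\textsc{Dist}_L$ of \cor{Dist} and \thm{quantum_short_path_complexity}); the proof is a bookkeeping exercise. The one thing to get right is the simultaneous balance of the classical BFS space $(n/L)\log n$, the quantum/recursion space $\log(L)\log(n)$, and the time, through the single knob $L$, together with checking that the lower-order contributions — the $3.5\log n$ from BFS, the $\bigl(\tfrac12\log 3-1\bigr)\log L$ from the recursion, and the polylog inside $\tO$ — all fit within the claimed $O(\log n\log\log n)$ slack in the exponent. The only mildly subtle point I expect is confirming that the leading constant is exactly $\tfrac12$: it comes from the quantum speedup in \thm{quantum_short_path_complexity} halving the exponent of the dominant $(2n+1)^{\log L}$ factor, so that $\tfrac12\log(L)\log n$ is the leading term of $\log T$.
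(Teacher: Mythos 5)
Your proposal is correct and follows essentially the same route as the paper: both set $L=\Theta\left(\frac{n}{S}\log n\right)$ in \thm{quantum_dstcon}, check that this makes the total space $O(S)$ with only $O(\log^2 n)$ quantum space, and then take logarithms of the time bound to isolate the leading term $\tfrac12\log(L)\log(2n+1)=\tfrac12\log(n)\log(n/S)+O(\log n\log\log n)$. (One trivial slip: $\log(2n+1)=\log n+O(1)$ rather than $\log n+O(1/n)$, but this is absorbed into the same error term and changes nothing.)
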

\begin{proof}
    We analyze the time-space tradeoff of the quantum algorithm of \thm{quantum_dstcon}. Assuming $L \log L = O(n)$, its space complexity becomes $S = O \left(\frac{n}{L} \log n\right)$, with only $O(\log(L)\log(n))=O(\log^2(n))$ quantum space. That is, we can express $L = \Theta \left(\frac{n}{S} \log n \right)$. Substituting this into the time complexity $T = \tO \left( n^{3.5} L^{.5\log(3)-1} (2n+1)^{0.5\log L} \right)$ and taking logarithms, we obtain
    \begin{align*}
        \log T &= 3.5 \log n + \log L \left( \frac{\log (2n + 1)}{2} + .5\log(3)-1 \right) + O(\log\log n)\\
        &= \left( \log \frac{n}{S} + \log \log n + O(1) \right) \left( \frac{\log n}{2}+O(1)\right) + O(\log n)\\
        &= \frac{1}{2} \log (\frac{n}{S}) \log(n) + O \left( \log(n) \log \log(n)\right). & 
    \end{align*}
    \vskip-22pt
\end{proof}
\begin{remark}
    For $S = o(n^{1/2})$, the quantum algorithm of \thm{quantum_tradeoff} achieves a better time-space tradeoff than the classical time-space tradeoff stated in \thm{classical_tradeoff}.
\end{remark}

\section{Quantum short path subroutine}\label{sec:short_path_algorithm}

In this section, we prove \thm{quantum_short_path_complexity} by describing and analyzing a quantum algorithm for $\textsc{Dist}_L(G,s,t)$, for $G=(V,E)$ a directed graph with $V=\{v_1,\dots,v_n\}$, $s,t\in V$ any pair of vertices, and $L\in\mathbb{N}$ a power of 2. The algorithm is designed by exhibiting a switching network (\defin{switching-network}), and then applying \thm{SN_algorithm}. 

In \sec{alg-SN}, we describe the switching network, through a recursive construction. In order to apply \thm{SN_algorithm}, we need to analyze the number of edges in the switching network, and upper bound the distance between its source $\s$ and sink $\t$,
which we do in \sec{alg-wit}; and describe a basis for the space ${\cal B}^\bot$, and a procedure for generating it, which we do in \sec{alg-basis}. Finally, in \sec{final}, we put it all together to prove \thm{quantum_short_path_complexity}.

\subsection{Switching network}\label{sec:alg-SN}

The switching networks we will work with will have vertices represented by a tuple $[u_1,\dots,u_k]\in V^k$, of some number $k$ of vertices of $G$, as well as possibly some additional information. We will not actually care so much about naming conventions for the vertices, but the important detail is that each vertex of a switching network has an associated subset $\{u_1,\dots,u_k\}\subseteq V$ (so the order of the tuple actually doesn't matter). In particular, we will construct switching networks by gluing together switching networks of this form (see \defin{gluing}), and it will be important that any pair of vertices we glue together have the same associated set, so there is no ambiguity. 

The way we construct our switching networks, we will only have an edge between a pair of vertices where the associated sets are of the form $\{u_1,\dots,u_k\}$ and $\{u_1,\dots,u_k,u_{k+1}\}$, and the query label for that edge (see \defin{switching-network}) is $(u_i,u_{k+1})$ for some $i\in [k]$. Such switching networks were first studied in~\cite{potechin2014SwitchingNetworkforDSTCON}. This structure ensures that a vertex $[u]$ can only be connected to a vertex $[u_1,\dots,u_k]$ by a path of ``on'' edges if each $u_i$ is reachable from $u$ in $G$ -- a property that will be crucial for our analysis in \sec{correctness}.

In this section, we will describe and analyze a switching network ${\cal N}_{L}(s)$ of the above described form. This will be built inductively from switching networks ${\cal N}_{2^\ell}(u)$ for $\ell\in\{0,\dots,\log L\}$, and $u\in V$, called the \emph{root}. ${\cal N}_{2^\ell}(u)$ has a single source $[u]$ and $n$ sinks $\{[u,v_i]:v_i\in V\}$. With respect to the $i$-th sink $[u,v_i]$, the switching network computes whether $v_i$ is reachable from $u$ by a path of length at most $2^\ell$, for every $v_i \in V$ simultaneously (i.e. $v_i$ is reachable from $u$ by a path of length at most $2^\ell$ in $G$ if and only if $[ u ]$ and $[ u, v_i ]$ are connected in ${\cal N}_{2^\ell}(u)(G)$). This ``extended boundary'' is used solely for the recursive construction in \sec{graph-construction}. The final construction ${\cal N}_L(s)$ has source $\s=[s]$ and a single sink $\t=[s,t]$. We write ${\cal N}_L(s,t)$ when we want to emphasize this.

\subsubsection{Graph construction}\label{sec:graph-construction}

Define $\Sigma = \{(0,\bar{0})\}\cup \{(1,i):i\in \{0,1\}^{\log n}\}\cup\{(2,j):j\in \{0,1\}^{\log n}\}$, an alphabet of size $2n+1$. We have ensured that all symbols in this alphabet have an obvious representation as a string in $\{0,1,2\}\times \{0,1\}^{\log n}$, but for convenience we will sometimes use $0$ to denote $(0,\bar{0})$, and $1i$ or $2j$ to denote $(1,i)$ or $(2,j)$.
For any $\sigma\in\Sigma^*$, let $|\sigma|$ denote its length, and define:
$$f_1(\sigma):=\left\{\begin{array}{ll}
\max\{i\in \{1,\dots,|\sigma|\}: \sigma_i\in \{1\}\times \{0,1\}^{\log n}\} & \mbox{if }\exists i:\sigma_i\in \{1\}\times \{0,1\}^{\log n}\\
0 & \mbox{else.}
\end{array}\right.$$

\vskip10pt
\noindent We now define ${\cal N}_{2^\ell}(u)$ for $\ell \in [\log L]$, by a recursive construction. 

\paragraph{Base construction.}

For the base case, $\ell = 0$, the switching network ${\cal N}_1(u)$ consists of a source vertex $[ u ]$ connected to $n$ sinks $[ u, v_i ] , v_i \in V$. The edges have query labels $(u, v_i)$, each ``checking" whether there is an edge $(u,v_i)$ in $G$ (see \fig{SN1}). More precisely, we formally define the sets of vertices and edges as follows.
\[
V_1 =  \{ [ u ]\} \cup \{ [ u, v_i ] : v_i \in V \},
\quad\mbox{and}\quad
E_1 = \{\ket{e_i}=\ket{i}:i\in [n]\}. 
\]
Above, we put edge labels in a ket, to emphasize that they form an orthonormal basis of some inner product space.
The incidence of edges and vertices is defined (see \defin{graph} for a reminder of how undirected graphs are specified):
$$E_1^{\rightarrow}([u])=E_1 \quad\mbox{and}\quad E_1^\leftarrow([u])=\emptyset$$
$$\forall i\in [n],\; E_1^{\rightarrow}([u,v_i])=\emptyset \quad\mbox{and}\quad E_1^\leftarrow([u,v_i])=\{\ket{e_i}\}.$$
The query label of the edge $e_i$ is $(u,v_i)$.

\begin{figure}
\centering
\includegraphics[width=0.4\textwidth]{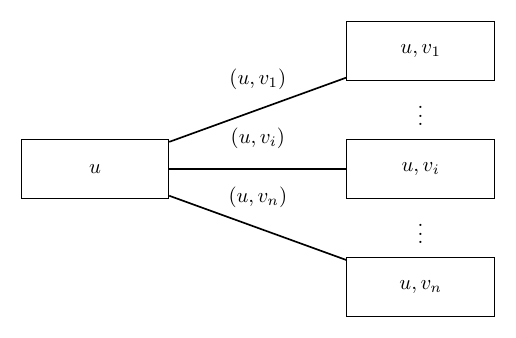}
\caption{Graph construction for the switching network ${\cal N}_1(u)$ that decides whether each vertex of a graph $G = (V,E)$ is reachable from a vertex $u \in V$ by a path of length $1$. Each edge $([u],[u,v_i])$ of ${\cal N}_1(u)$ has query label $(u, v_i)$ and is ``on" in ${\cal N}_1(u)(G)$ if and only if $(u, v_i) \in E$. Therefore, it holds that $v_i$ is reachable from $u$ by a path of length at most $1$ in $G$ if and only if $[u]$ and $[u, v_i]$ are connected in ${\cal N}_1(u)(G)$.}\label{fig:SN1}
\end{figure}

\paragraph{Recursive construction.} Next, we describe the construction of ${\cal N}_{2^\ell}(u)$, assuming that a construction of ${\cal N}_{2^{\ell-1}}(u)$ is given. Let ${\cal N}_{2^{\ell-1}}^{v}(u)$ denote a copy of ${\cal N}_{2^{\ell-1}}(u)$ in which each vertex-tuple of the switching network is augmented with an additional vertex $v \in V$ (although the order in the tuple doesn't matter, for clarity, assume we append $v$ to the front of each tuple).

The construction of ${\cal N}_{2^\ell}(u)$ uses $2n+1$ copies ${\cal N}_{2^{\ell-1}}(u')$ for some $u'$, some of them augmented by additional vertices. Specifically, define:
\begin{align*}
    {\cal N}_{2^{\ell-1}}^0 &= {\cal N}_{2^{\ell-1}}(u)\\
    \forall i\in [n],\; {\cal N}_{2^{\ell-1}}^{(1,i)} &= {\cal N}_{2^{\ell-1}}^u(v_i)\\
    \forall j\in [n],\;{\cal N}_{2^{\ell-1}}^{(2,j)} &= {\sf Rev}({\cal N}_{2^{\ell-1}}^{v_j}(u)).
\end{align*}
Above, we used the notation ${\sf Rev}({\cal N})$ to be the switching network ${\cal N}$ except with the orientation of every edge reversed. 
As we will see shortly, this ensures that all edges of ${\cal N}_{2^\ell}(u)$ have a logical left-to-right orientation. 
Define ${\cal N}_{2^\ell}(u)$ from these $2n+1$ copies of ${\cal N}_{2^{\ell-1}}$ by gluing (as made precise in \defin{gluing}) the $i$-th sink of ${\cal N}_{2^{\ell}}^0$ -- which encodes $[u,v_i]$ -- to the source of ${\cal N}_{2^\ell}^{(1,i)}$ -- which also encodes $[u,v_i]$ -- (for all $i\in [n])$; and gluing the $j$-th sink of ${\cal N}_{2^\ell}^{(1,i)}$ -- which encodes $[u,v_i,v_j]$ -- to the $i$-th sink of ${\cal N}_{2^\ell}^{(2,j)}$ -- which also encodes $[v_j,u,v_i]\equiv [u,v_i,v_j]$ -- (for all $i,j\in [n]$), as in \fig{SNL}. Note that the source of ${\cal N}_{2^\ell}^{(2,j)}$, which encodes $[v_j,u]\equiv [u,v_j]$, is the $j$-th sink of ${\cal N}_{2^\ell}(u)$. The source $[u]$ of ${\cal N}_{2^{\ell-1}}^0$ is the source of ${\cal N}_{2^\ell}(u)$.

\begin{figure}[h]
\centering
\includegraphics[width=\textwidth]{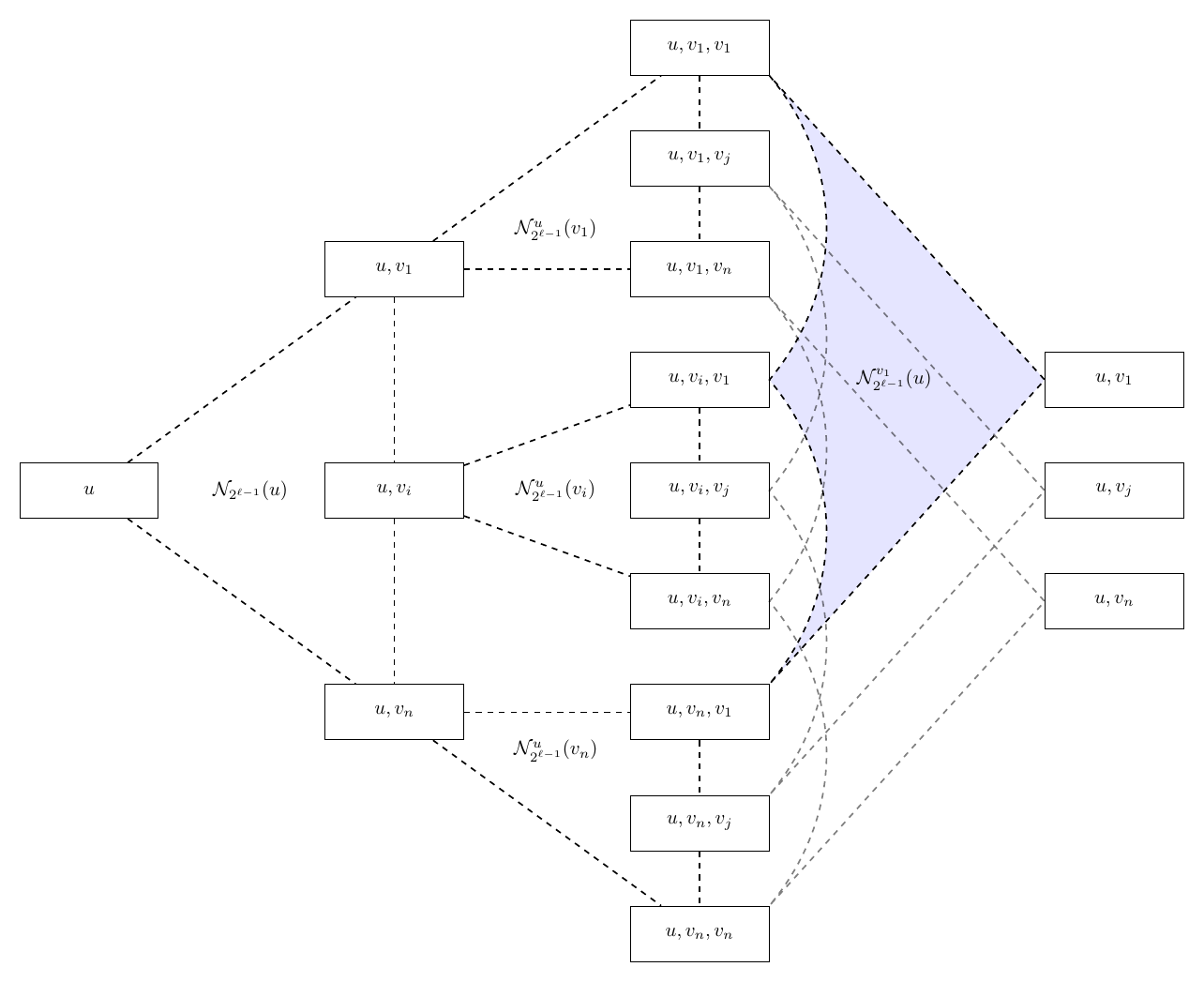}
\caption{Graph construction for the switching network ${\cal N}_{2^\ell}(u)$ that decides whether each vertex of a graph $G = (V,E)$ is reachable from a vertex $u \in V$ by a path of length $2^\ell$.}\label{fig:SNL}
\end{figure}

The edges $E_{2^{\ell}}$ of ${\cal N}_{2^\ell}$ should be the disjoint union of the edge sets of the $2n+1$ copies of ${\cal N}_{2^{\ell-1}}$. We use the elements of $\Sigma$ labeling each copy to make this union disjoint:
\begin{equation}\label{eq:E-2-ell}
E_{2^\ell}=\bigsqcup_{\sigma\in\Sigma}E({\cal N}_{2^{\ell-1}}^\sigma) = \Sigma\times E_{2^{\ell-1}}=\Sigma^{\ell}\times E_1.
\end{equation}
For $\sigma\in \Sigma^\ell$ and $i\in\{0,1\}^{\log n}$, we will sometimes denote the edge $\ket{\sigma,e_i}=\ket{\sigma,i}$ using 
\begin{equation}\label{eq:sign-explainer}
(-1)^{|\sigma|_2}\ket{e_i^\sigma}:=\ket{\sigma,e_i},
\end{equation}
where $|\sigma|_2$ denotes the number of occurrences of $(2,j)$ for some $j$ in $\sigma$. The reason for the sign is that we always want $\ket{\sigma,e_i}$ to represent the $i$-th edge in the $\sigma$-labeled copy of ${\cal N}_1$ \emph{oriented from left-to-right}\footnote{Such an orientation is not strictly necessary, but is more intuitive.} (i.e.~from source towards sinks), and we therefore need to reverse the edge orientations every time we use a copy of ${\cal N}_{2^{\ell}}$ in the $(2,j)$-th position for some $j$. 
Generally, if $\ket{\psi}\in\mathrm{span}\{\ket{e}:e\in E_{2^{\ell}}\}$ -- equivalently, $\psi$ is a function on $E_{2^{\ell}}$ -- and $\sigma\in \Sigma^{\ell'}$, we will let 
\begin{equation}\label{eq:sign-explainer-function}
    \ket{\psi^\sigma}=(-1)^{|\sigma|_2}\ket{\sigma}\ket{\psi},
\end{equation}

which is a state in $\mathrm{span}\{\ket{e}:e\in E_{2^{\ell+\ell'}}\}$ that is only supported on the $\sigma$-labeled copy of ${\cal N}_{2^{\ell}}$ (which we may denote ${\cal N}_{2^{\ell}}^\sigma$) in ${\cal N}_{2^{\ell+\ell'}}$.

Each of the $2n+1$ copies of ${\cal N}_{2^{\ell-1}}$ in ${\cal N}_{2^{\ell}}$ has a unique label $\sigma\in \Sigma$. From this label, and the (global) root $u$ of ${\cal N}_{2^\ell}(u)$, we can extract the root of the specific copy ${\cal N}_{2^{\ell-1}}^\sigma$, and the vertex that is additionally stored in all of its vertices.

Inductively, we assign to each copy of ${\cal N}_{2^{\ell-k}}, k\in [\ell]$ a label $\sigma \in \Sigma^k$, from which we can extract the root of the specific copy ${\cal N}_{2^{\ell-k}}^\sigma$ and the set that is additionally stored in its vertices, knowing the global root $u \in V$. In particular, each copy of ${\cal N}_1$ has a label $\sigma \in \Sigma^{\ell}$ and consists of $n$ edges. We can extract the root of each copy of ${\cal N}_1$ from its label $\sigma$ and the global root $u \in V$ and, hence, recover the edge query labels as well. 
In the following lemma, we show how exactly the query label can be extracted from the label of an edge.

\begin{lemma}\label{lem:edge_encoding}
    Let $\sigma \in \Sigma^{\ell}$ and $i \in [n]$ encode an edge in ${\cal N}_{2^\ell}(u)$. Then its edge is labeled by the query $((v_{\sigma_{\left( f_1(\sigma) \right)_2})},v_i)$, if $f_1(\sigma) \neq 0$, and $(u,v_i)$ otherwise.
\end{lemma}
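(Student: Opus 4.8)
The plan is to unwind the recursive construction of Section~\ref{sec:graph-construction} layer by layer, tracking how the root of a copy changes as we descend. I would proceed by induction on $\ell$. The base case $\ell = 0$ is immediate: by the base construction, ${\cal N}_1(u)$ has edges $e_i$ with query label $(u, v_i)$, and since $\sigma$ is the empty string, $f_1(\sigma) = 0$, so the formula correctly returns $(u, v_i)$.

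For the inductive step, recall that ${\cal N}_{2^\ell}(u)$ is glued together from the $2n+1$ copies ${\cal N}_{2^{\ell-1}}^{\sigma_1}$ for $\sigma_1 \in \Sigma$, where ${\cal N}_{2^{\ell-1}}^0 = {\cal N}_{2^{\ell-1}}(u)$, ${\cal N}_{2^{\ell-1}}^{(1,i)} = {\cal N}_{2^{\ell-1}}^u(v_i)$ (root $v_i$), and ${\cal N}_{2^{\ell-1}}^{(2,j)} = {\sf Rev}({\cal N}_{2^{\ell-1}}^{v_j}(u))$ (root $u$). Given an edge $(\sigma, i)$ in ${\cal N}_{2^\ell}(u)$, write $\sigma = \sigma_1 \sigma'$ with $\sigma_1 \in \Sigma$ and $\sigma' \in \Sigma^{\ell-1}$. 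The edge lives in copy ${\cal N}_{2^{\ell-1}}^{\sigma_1}$, which is ${\cal N}_{2^{\ell-1}}(u')$ for $u' = u$ if $\sigma_1 \in \{0\} \cup \{(2,j): j\}$, and $u' = v_k$ if $\sigma_1 = (1,k)$. By the induction hypothesis applied to ${\cal N}_{2^{\ell-1}}(u')$ and the tail $\sigma'$, the query label of edge $(\sigma', i)$ in that copy is $(v_{\sigma'_{(f_1(\sigma'))_2}}, v_i)$ if $f_1(\sigma') \neq 0$, and $(u', v_i)$ otherwise. (The ${\sf Rev}$ operation and the augmenting vertex added to tuples only affect orientations and vertex labels, not query labels, so they do not interfere.) It then remains to check that $f_1(\sigma_1\sigma')$ and $\sigma_{(f_1(\sigma_1\sigma'))_2}$ reduce correctly in both branches: if $f_1(\sigma') \neq 0$ then the maximal index of a $(1,\cdot)$-symbol in $\sigma$ is $1 + f_1(\sigma')$ and reading off $\sigma$ at that position recovers $\sigma'_{f_1(\sigma')}$, matching; if $f_1(\sigma') = 0$ then either $\sigma_1 = (1,k)$, in which case $f_1(\sigma) = 1$ and $\sigma_1 = (1,k)$ so $(f_1(\sigma))_2 = k$ and $v_{\sigma_{(f_1(\sigma))_2}}$ is not quite the right way to phrase it — actually I should check the indexing convention carefully here, since $\sigma_{(f_1(\sigma))_2}$ refers to the position $(f_1(\sigma))_2$ of $\sigma$, not the value — or $\sigma_1 \in \{0\}\cup\{(2,j):j\}$, in which case $f_1(\sigma) = 0$ and the formula gives $(u', v_i) = (u, v_i)$, again matching.

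The main obstacle I expect is not conceptual but bookkeeping: carefully reconciling the indexing conventions. The statement writes $v_{\sigma_{(f_1(\sigma))_2}}$, where $f_1(\sigma)$ is a \emph{position} in $\sigma$, $\sigma_{f_1(\sigma)}$ is the symbol at that position (necessarily of the form $(1,i)$ by definition of $f_1$), and the subscript $(\cdot)_2$ extracts the second component $i$ of that symbol — so the root of the innermost relevant copy is $v_i$. I would need to be scrupulous that when a $(1,i)$ symbol appears, it is precisely the \emph{last} such symbol that resets the root (later $(2,j)$ symbols keep the root fixed, and a $0$ symbol also keeps it fixed but only appears as $\sigma_1 = 0$ relative to its own sub-copy), which is exactly what $f_1$ computes by taking the maximum. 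Making this precise — that descending through the recursion, the current root equals $v_i$ where $(1,i)$ is the most recently encountered first-coordinate-$1$ symbol, or the global root $u$ if none — is the heart of the argument, and once it is stated cleanly as the induction hypothesis the verification is routine.
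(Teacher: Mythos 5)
Your proposal is correct and follows essentially the same route as the paper's proof: induction on $\ell$, peeling off the first symbol of $\sigma$ to identify the copy ${\cal N}_{2^{\ell-1}}^{\sigma_1}$ and its root, applying the induction hypothesis to the tail, and checking that $f_1$ and the extracted symbol reduce consistently in the two cases ($\sigma_1$ a $0$/$2$-symbol leaving the root unchanged versus a $1$-symbol resetting it). Your reading of the notation $v_{(\sigma_{f_1(\sigma)})_2}$ as ``the second component of the last $1$-symbol'' is the intended one, and your explicit tracking of the index shift $f_1(\sigma)=1+f_1(\sigma')$ is, if anything, slightly more careful than the paper's.
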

\begin{proof}
    We prove the statement by induction. In the base case of ${\cal N}_{1}(u)$, the claim is trivial. Since $\sigma \in \Sigma^0$ is an empty string, we have $f_1(\sigma) = 0$.
    The edges are encoded as $\ket{e_i}=\ket{i}$, and the corresponding query labels are $(u,v_i)$. 
    
    For the induction step, assume that the claim holds for ${\cal N}_{2^{\ell-1}}$. The switching network ${\cal N}_{2^\ell}(u)$ consists of $2n+1$ copies of ${\cal N}_{2^{\ell-1}}$ labeled by $0, (1,i), (2,j)$, where $i,j \in [n]$. The edges in each of these switching networks are encoded by $\ket{\tilde\sigma,k}$, where $\tilde\sigma \in \Sigma ^{\ell - 1}$ and $k \in [n]$. First, consider the copies of ${\cal N}_{2^{\ell-1}}$ labeled by $0$ and $(2,j)$ for $j \in [n]$. These are the switching networks ${\cal N}_{2^{\ell-1}}(u)$ and ${\cal N}_{2^{\ell-1}}^{v_j}(u)$ for $j \in [n]$, all of which have $u$ as root, which is the same as the global root of ${\cal N}_{2^\ell}(u)$. By the induction hypothesis, the query label of an edge $\ket{\tilde\sigma, k}$ in one of these switching networks is $((v_{\tilde\sigma_{\left( f_1(\tilde\sigma) \right)_2})},v_k)$, if $f_1(\tilde\sigma) \neq 0$, and $(u,v_k)$ otherwise. In the global switching network ${\cal N}_{2^\ell}(u)$, the same edge is encoded by $\ket{\sigma, k}$, where $\sigma = 0\tilde\sigma$ or $\sigma = (2,j)\tilde\sigma$, depending on the copy of ${\cal N}_{2^{\ell-1}}(u)$. Note that, in both cases, $f_1(\sigma) = f_1(\tilde\sigma)$, which proves the statement for the copies of ${\cal N}_{2^{\ell-1}}$ labeled by $0$ and $(2,j)$ for $j \in [n]$.
    
    Next, consider the copies of ${\cal N}_{2^{\ell-1}}^{(1,i)}$ for $i \in [n]$. These are the switching networks ${\cal N}_{2^{\ell-1}}^u(v_i)$ for  $i \in [n]$ that have $v_i$ as their roots. By the induction hypothesis, the query label of an edge $(\tilde\sigma, k)$ in one of these switching networks is $((v_{\tilde\sigma_{\left( f_1(\tilde\sigma) \right)_2})},v_k)$, if $f_1(\tilde\sigma) \neq 0$, and $(v_i,v_k)$ otherwise. In the global switching network ${\cal N}_{2^\ell}(u)$, the same edge is encoded by $\ket{\sigma, k}$, where $\sigma = (1,i)\tilde\sigma$. Note that $f_1(\sigma) = f_1(\tilde\sigma)$ if $f_1(\tilde\sigma) \neq 0$ and $f_1(\sigma) = 1$ otherwise. This proves the statement for the copies of ${\cal N}_{2^{\ell-1}}$ labeled by $(1,i)$ with $ i \in [n]$, and thus concludes the proof.
\end{proof}

\noindent Finally, we describe a top-down approach to building up ${\cal N}_{2^{\ell+1}}$.
\begin{lemma}\label{lem:SN_top_down}
    Let ${\cal N}'$ be a switching network obtained from ${\cal N}_{2^\ell}$ by replacing each ${\cal N}_1$ block with an ${\cal N}_2$ block with the same boundary. Then ${\cal N}' = {\cal N}_{2^{\ell+1}}$.
\end{lemma}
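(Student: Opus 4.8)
The plan is to prove the equality of switching networks by unwinding both recursive constructions down to the base level and matching them block-by-block. The key observation is that the recursion defining ${\cal N}_{2^\ell}(u)$ from ${\cal N}_{2^{\ell-1}}(u)$ operates uniformly: it takes the whole graph ${\cal N}_{2^{\ell-1}}$, makes $2n+1$ relabeled copies, and glues them along their (extended) boundaries according to a fixed pattern that depends only on the alphabet $\Sigma$ and the root information, not on the internal structure of the ${\cal N}_{2^{\ell-1}}$ blocks. Consequently, ``expanding the recursion'' and ``replacing the innermost blocks'' are operations that commute. First I would make precise the notion of a switching network being ``built out of'' sub-blocks glued along matching boundaries (using \defin{gluing}), noting that by \eq{E-2-ell} the edge set of ${\cal N}_{2^\ell}$ is literally $\Sigma^\ell \times E_1$, i.e.\ it decomposes canonically into $|\Sigma|^{\ell}$ copies of the ${\cal N}_1$ edge set, each indexed by a string $\sigma \in \Sigma^\ell$, and each such copy sits inside ${\cal N}_{2^\ell}$ exactly as a ${\cal N}_1$-block with a well-defined boundary (its source $[u']$ and $n$ sinks $[u',v_i]$, where $u'$ is the root extracted from $\sigma$ and $u$).

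The main step is an induction on $\ell$. For the base case $\ell = 0$: ${\cal N}_1$ has a single ${\cal N}_1$-block (itself), and replacing it by an ${\cal N}_2$-block with the same boundary gives exactly ${\cal N}_2$ by definition, since ${\cal N}_2 = {\cal N}_{2^1}(u)$ is the recursive construction applied once with the same source/sink boundary. For the induction step, suppose the claim holds for $\ell-1$. Take ${\cal N}_{2^\ell}$; by its recursive definition it is obtained by gluing $2n+1$ copies ${\cal N}_{2^{\ell-1}}^\sigma$, $\sigma \in \Sigma$, along the fixed boundary pattern. Within each copy, the innermost ${\cal N}_1$-blocks are precisely the ${\cal N}_1$-blocks of ${\cal N}_{2^\ell}$ lying in that copy (this is the content of the canonical decomposition $E_{2^\ell} = \Sigma \times E_{2^{\ell-1}}$). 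Now replace every ${\cal N}_1$-block of ${\cal N}_{2^\ell}$ by an ${\cal N}_2$-block with the same boundary. Since the gluing in \defin{gluing} only identifies boundary vertices and takes a disjoint union of edge sets, this replacement can be performed independently inside each copy ${\cal N}_{2^{\ell-1}}^\sigma$ without affecting the inter-copy gluings. By the induction hypothesis applied to each copy, replacing the ${\cal N}_1$-blocks of ${\cal N}_{2^{\ell-1}}^\sigma$ by ${\cal N}_2$-blocks with the same boundary turns ${\cal N}_{2^{\ell-1}}^\sigma$ into ${\cal N}_{2^\ell}^\sigma$ (a relabeled copy of ${\cal N}_{2^\ell}$). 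Hence the replacement turns ${\cal N}_{2^\ell}$ into the switching network obtained by gluing $2n+1$ copies of ${\cal N}_{2^\ell}$ along the same boundary pattern — which is exactly the recursive definition of ${\cal N}_{2^{\ell+1}}$.

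The step I expect to require the most care is verifying that ``replace each ${\cal N}_1$-block'' is well-defined and interacts correctly with gluing: one must check that the boundary of each ${\cal N}_1$-block, as a sub-block of the larger network, matches the boundary of the ${\cal N}_2$-block that replaces it (same source vertex, same $n$ sink vertices, same associated vertex-subsets so that \defin{gluing} applies without ambiguity), and that performing all these local replacements simultaneously is consistent. This is where the bookkeeping from \lem{edge_encoding} — how the root $u'$ and the stored vertex-set of each block are recovered from the label $\sigma$ — is used: it guarantees that the ${\cal N}_1$-block labeled $\sigma$ in ${\cal N}_{2^\ell}$ and the ${\cal N}_1$-block labeled $\sigma$ inside the copy ${\cal N}_{2^{\ell-1}}^{\sigma_1}$ have the same root and same stored set, so the induction hypothesis is being applied to genuinely the same object. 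Also worth spelling out is that reversing edge orientations via ${\sf Rev}$ in the $(2,j)$-copies is compatible with the block-replacement, since ${\sf Rev}$ and the ${\cal N}_1 \to {\cal N}_2$ substitution commute (both act only on the block, and orientations in switching networks are cosmetic per \defin{graph}). Once this compatibility is established, the inductive argument above goes through and gives ${\cal N}' = {\cal N}_{2^{\ell+1}}$.
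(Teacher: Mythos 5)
Your proposal is correct and follows essentially the same route as the paper's own proof: induction on $\ell$, decomposing ${\cal N}_{2^\ell}$ into its $2n+1$ copies of ${\cal N}_{2^{\ell-1}}$, applying the induction hypothesis to each copy, and observing that gluing the resulting ${\cal N}_{2^\ell}$ blocks by the same pattern yields ${\cal N}_{2^{\ell+1}}$. Your additional care about the well-definedness of block replacement, the boundary matching via the edge-label bookkeeping, and compatibility with ${\sf Rev}$ is all sound and simply makes explicit what the paper's terser argument leaves implicit.
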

\begin{proof}
    We prove the statement by induction. The base case is trivial, ${\cal N}_1$ is entirely replaced with ${\cal N}_2$. For the induction step, assume that the statement holds for every ${\cal N}_{2^{\ell '}}$ such that $\ell ' < \ell$. To show it for ${\cal N}_{2^\ell}$, we observe that it consists of ${\cal N}_{2^{\ell - 1}}$ blocks. For each such block, if we replace every ${\cal N}_1$ block with ${\cal N}_2$, it becomes ${\cal N}_{2^\ell}$ by the induction hypothesis. Therefore, the whole switching network becomes ${\cal N}_{2^{\ell + 1}}$. 
\end{proof}

\subsubsection{Correctness of the construction}\label{sec:correctness}
Next, we show that this construction is indeed a switching network that simultaneously decides the connectivity of all $v_i$ to $u$. The proofs in this section follow the general ideas of~\cite[Chapter 3]{potechin2015analyzing}, adapted to the setting of our switching network.

\begin{lemma}\label{lem:SN_correctness}
    For every $\ell\in\{0,\dots,\log L\}$ and $u, v_i \in V$, $v_i$ is reachable from $u$ by a path of length at most $2^{\ell}$ in $G$ if and only if source $[ u ]$ and sink $[ u, v_i ]$ are connected in ${\cal N}_{2^\ell}(u)(G)$.
\end{lemma}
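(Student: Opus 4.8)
The plan is to prove \lem{SN_correctness} by induction on $\ell$, following the recursive structure of the switching network construction. The base case $\ell=0$ is immediate from the definition of ${\cal N}_1(u)$: the edge $e_i$ between $[u]$ and $[u,v_i]$ has query label $(u,v_i)$, so $[u]$ and $[u,v_i]$ are connected in ${\cal N}_1(u)(G)$ if and only if $(u,v_i)\in E(G)$, i.e.\ there is a path of length at most $1$ from $u$ to $v_i$.

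For the induction step, I would assume the claim holds for ${\cal N}_{2^{\ell-1}}(w)$ for all roots $w\in V$ (with the ``extended boundary'' interpretation: the $k$-th sink $[w,v_k]$ of ${\cal N}_{2^{\ell-1}}(w)$ is connected to its source iff $v_k$ is reachable from $w$ by a path of length $\leq 2^{\ell-1}$), and prove it for ${\cal N}_{2^\ell}(u)$. I would argue both directions. For the ``if'' direction ($\Leftarrow$), suppose $[u]$ and $[u,v_i]$ are connected in ${\cal N}_{2^\ell}(u)(G)$. Since the $i$-th sink of ${\cal N}_{2^\ell}(u)$ is the source of the copy ${\cal N}_{2^{\ell-1}}^{(2,i)}={\sf Rev}({\cal N}_{2^{\ell-1}}^{v_i}(u))$, there must be a path reaching it; I would trace this connecting path through the glued components. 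The key structural observation (the one flagged in \sec{alg-SN}) is that, by the ``only edges between $\{u_1,\dots,u_k\}$ and $\{u_1,\dots,u_k,u_{k+1}\}$'' property, any connection from $[u]$ to a vertex with associated set $\{u,v_i\}$ must pass through some glued vertex $[u,v_j]$ (the $j$-th boundary of ${\cal N}_{2^{\ell-1}}^0$, which equals the source of ${\cal N}_{2^{\ell-1}}^{(1,j)}={\cal N}_{2^{\ell-1}}^u(v_j)$) and then reach the glued vertex $[u,v_j,v_i]$. Using the induction hypothesis on the ${\cal N}_{2^{\ell-1}}^0$ block gives a path $u\rightsquigarrow v_j$ of length $\leq 2^{\ell-1}$; using it on ${\cal N}_{2^{\ell-1}}^u(v_j)$ (root $v_j$, augmented by $u$) gives a path $v_j\rightsquigarrow v_i$ of length $\leq 2^{\ell-1}$; concatenating gives a path $u\rightsquigarrow v_i$ of length $\leq 2^\ell$. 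For the ``only if'' direction ($\Rightarrow$), given a path $u=w_0,w_1,\dots,w_r=v_i$ in $G$ with $r\leq 2^\ell$, split it at the midpoint: let $v_j=w_{\lceil r/2\rceil}$ (or an appropriate split so both halves have length $\leq 2^{\ell-1}$). The first half $u\rightsquigarrow v_j$ witnesses connectivity of $[u]$ to $[u,v_j]$ in ${\cal N}_{2^{\ell-1}}^0(G)$ by induction; the second half $v_j\rightsquigarrow v_i$ witnesses connectivity of the source to the $i$-th sink in ${\cal N}_{2^{\ell-1}}^u(v_j)(G)$, i.e.\ of $[u,v_j]$ to $[u,v_j,v_i]$; and since $[u,v_j,v_i]$ is glued to the $i$-th sink of ${\cal N}_{2^{\ell-1}}^{(2,j)}$ whose source is the $j$-th sink $[u,v_j]\equiv[v_j,u]$ of ${\cal N}_{2^\ell}(u)$—wait, I need to instead route to the sink $[u,v_i]$: I would observe $[u,v_i]$ is the source of ${\cal N}_{2^{\ell-1}}^{(2,i)}$, and more carefully pick the split so that the path goes source $\to[u,v_j]\to[u,v_j,v_i]\to$ ($i$-th sink of ${\cal N}_{2^{\ell-1}}^{(2,j)}$, which is... ) — so I should be careful to realize the sink $[u,v_i]$ as the $i$-th sink, i.e.\ as the source of ${\cal N}_{2^{\ell-1}}^{(2,i)}$, and trace how a midpoint-split path lands there through the $(1,\cdot)$ and $(2,\cdot)$ blocks.

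The main obstacle I anticipate is bookkeeping the identifications of boundary vertices across the three families of glued blocks—precisely checking that the vertex the construction calls ``the $j$-th sink of ${\cal N}_{2^\ell}^{(1,i)}$'' really has associated set $\{u,v_i,v_j\}$ and is glued to ``the $i$-th sink of ${\cal N}_{2^\ell}^{(2,j)}$'', and that the sink $\t=[s,t]$ of the final ${\cal N}_L(s)$ is reached via exactly one of these blocks. The reversal (${\sf Rev}$) and the sign conventions from \eq{sign-explainer} are purely orientational and do not affect connectivity, so I would remark that they can be ignored for this lemma. The actual graph-theoretic content—that connectivity in the glued network decomposes as connectivity through the midpoint—is straightforward once the gluing map is pinned down; so the real work is a careful but routine case analysis of which copy of ${\cal N}_{2^{\ell-1}}$ a connecting path enters and exits, combined with the induction hypothesis applied to each copy with its appropriate root.
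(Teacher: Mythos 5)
Your forward direction (reachability in $G$ implies connectivity in ${\cal N}_{2^\ell}(u)(G)$, via a midpoint split and the three glued blocks ${\cal N}_{2^{\ell-1}}^0$, ${\cal N}_{2^{\ell-1}}^{(1,j)}$, ${\cal N}_{2^{\ell-1}}^{(2,i)}$) is exactly the paper's argument and is fine. The gap is in the converse direction. Your key structural claim — that any path from $[u]$ to $[u,v_i]$ must pass through some glued vertex $[u,v_j]$ and then through $[u,v_j,v_i]$, so that the induction hypothesis can be applied block by block — is not justified, and the case analysis you defer as ``routine'' is where the real difficulty lives. A path in the glued network can enter and leave a block repeatedly through \emph{different} boundary vertices; in particular, a segment of the path lying inside one copy of ${\cal N}_{2^{\ell-1}}$ may connect two of that copy's \emph{sinks} to each other without visiting its source (already in ${\cal N}_2$ one has $[u,v_j]\to[u,v_i,v_j]\to[u,v_i]\to[u,v_i,v_{j'}]\to[u,v_{j'}]$ avoiding $[u]$). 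The induction hypothesis as you state it only speaks about source-to-sink connectivity within a block, so it gives you nothing about such segments. Strengthening the hypothesis to all pairs of boundary vertices is possible in principle, but then the delicate part — certifying that the $uv_i$-path in $G$ you extract has length at most $2^\ell$, not merely that $v_i$ is reachable — requires a careful accounting that your sketch does not supply.

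The paper avoids this entirely by routing the converse direction through a pebbling game: \lem{SN_path_to_moves} shows every edge of ${\cal N}_{2^\ell}(u)(G)$ is a legal pebbling move (this is where the ``edges only go from $\{u_1,\dots,u_k\}$ to $\{u_1,\dots,u_k,u_{k+1}\}$ with query label $(u_i,u_{k+1})$'' property is used, uniformly over all blocks, with no induction on the recursive structure), so any connecting path yields a pebbling sequence from $[u]$ to $[u,v_i]$ using at most $\ell+2$ pebbles. Then \lem{pebbling_only_reachable_vertices} gives reachability, and \lem{pebbles_lower_bound} (via \lem{distance_vs_pebbles}) converts the pebble bound into the distance bound $2^{\ell}$. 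That last step — few pebbles forces short distance — is the idea missing from your proposal; without it, or an equivalent quantitative induction, the ``only if'' length bound does not follow.
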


\noindent To prove the statement of \lem{SN_correctness}, we define the following \emph{pebbling game} on the input graph $G$. 

\begin{itemize}
    \item Initially, there is one pebble on the vertex $u$;
    \item For vertices $v, v' \in V$ such that $(v,v')\in E$, if there is a pebble on $v$, it is legal to put a pebble on $v'$ or remove a pebble from $v'$.
\end{itemize}
In such a game, various choices of legal moves give rise to different ``pebblings'' -- sets of vertices containing pebbles -- of the graph. Clearly no vertex not reachable from $u$ can ever be pebbled (i.e., contain a pebble). 
Restricting the number of pebbles available may further restrict the possible configurations achievable, as the following two lemmas show. 

\begin{lemma}[\cite{li1996reversibility}]\label{lem:distance_vs_pebbles}
    Let $D(\ell)$ be the maximal distance from $u$ on which a vertex can be pebbled if $\ell$ pebbles are available in the game. Then $D(\ell) \leq 2^{\ell-1} - 1$.
\end{lemma}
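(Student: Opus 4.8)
The plan is to prove the bound $D(\ell) \leq 2^{\ell-1} - 1$ by induction on $\ell$, the number of available pebbles. The statement is about the reversible pebbling game on $G$ starting from a single pebble on $u$: $D(\ell)$ is the largest distance (in $G$, from $u$) of any vertex that can ever hold a pebble in a play that never uses more than $\ell$ pebbles simultaneously. For the base case $\ell = 1$, with only one pebble available we can never move it, since placing a pebble on a neighbour $v'$ of the currently pebbled vertex $v$ requires the pebble on $v$ to remain (the move ``put a pebble on $v'$'' does not remove the one on $v$), so we would momentarily need two pebbles. Hence the only reachable configuration is $\{u\}$, and $D(1) = 0 = 2^{0} - 1$.

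For the inductive step, suppose the claim holds for $\ell - 1$ and consider a play using at most $\ell$ pebbles that succeeds in pebbling some vertex $w$ at distance $d = \mathrm{dist}_G(u,w)$. Consider the first moment a pebble is placed on $w$. At that moment some neighbour $v$ with $(v,w) \in E$ carries a pebble, and since this is the first time $w$ is pebbled, the configuration just before placing the pebble on $w$ uses at most $\ell - 1$ pebbles (the $\ell$-th slot is exactly the one about to be occupied by $w$; more carefully, the configuration containing $v$ but not $w$ has size at most $\ell - 1$). Now I would split the history of the play into the portion that builds up to this configuration and argue that $v$ itself was reached using a sub-play that, once we ignore/freeze whatever is needed, fits within $\ell - 1$ pebbles. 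The key combinatorial point (this is the classical reversible-pebbling argument of~\cite{li1996reversibility}) is: to place a pebble on a vertex at distance $d$ from $u$ one must, at some earlier time, have had a pebble at distance $d-1$ present \emph{while another pebble was free to advance}; iterating, reaching distance $d$ forces at some point a simultaneous pebble count that grows. Concretely, one shows $D(\ell) \leq 2 D(\ell-1) + 1$: the vertex $v$ adjacent to $w$ is at distance $\geq d - 1$, and reaching $v$ with a pebble while keeping room to eventually also touch $w$ means $v$ was reached in a configuration of at most $\ell - 1$ pebbles, so $\mathrm{dist}(u,v) \leq D(\ell-1)$; but then after placing the pebble on $w$ and cleaning up back down, one must also be able to have reached $w$ from $v$'s ``side,'' contributing another $D(\ell - 1) + 1$ to the distance. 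Solving the recurrence $D(\ell) \leq 2 D(\ell - 1) + 1$ with $D(1) = 0$ gives $D(\ell) \leq 2^{\ell - 1} - 1$, as claimed.

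The delicate step — and the one I expect to be the main obstacle — is making the ``$D(\ell) \leq 2D(\ell-1)+1$'' recursion rigorous, i.e.\ correctly identifying which sub-play of a reversible pebbling is constrained to $\ell - 1$ pebbles and arguing that \emph{both} the ascent from $u$ to the midpoint and the ascent from the midpoint to $w$ are so constrained. The subtlety is that in a reversible game pebbles can be removed and re-placed, so one cannot simply ``cut'' the play at the first pebbling of $w$; one has to track a pebble that is present throughout a window and show that its presence reduces the budget available to the rest of the play. The cleanest way is to argue by strong induction on $\ell$ together with a careful choice of the vertex at which to ``pivot'': take $w$ at maximal distance $D(\ell)$ ever pebbled, look at a play achieving it, and at the first time $w$ is pebbled locate a neighbour $v$ holding a pebble; the configuration at that instant minus $w$ has $\leq \ell - 1$ pebbles, so on one hand $v$ was reached within budget $\ell - 1$, giving $\mathrm{dist}(u,v) \leq D(\ell - 1)$; and on the other hand, restricting attention to the moves that only ever touch vertices on $w$'s side while that pebble on $v$ sits frozen shows $w$ is at distance $\leq D(\ell-1) + 1$ from $v$ within the remaining budget. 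I would cite~\cite{li1996reversibility} for the precise bookkeeping but reproduce enough of it to be self-contained, since this recurrence is exactly what converts ``$\ell$ pebbles'' into the exponential distance bound that the rest of \sec{correctness} relies on.
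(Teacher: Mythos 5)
First, a point of comparison: the paper does not prove this lemma at all --- it is imported verbatim from \cite{li1996reversibility}, and the surrounding text only proves the subsequent \lem{pebbles_lower_bound} by a reversal argument that takes \lem{distance_vs_pebbles} as given. So there is no in-paper proof to match; your proposal has to stand on its own, and as written it does not.

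Your base case and the target recurrence $D(\ell)\le 2D(\ell-1)+1$ are the right skeleton, but the inductive step contains a genuine error. You infer ``the configuration just before $w$ is pebbled has at most $\ell-1$ pebbles and contains $v$, hence $v$ was reached within budget $\ell-1$, hence $\mathrm{dist}(u,v)\le D(\ell-1)$.'' This inference is false: a play may use all $\ell$ pebbles to push a pebble deep into the graph and then retract some, so that the \emph{current} configuration is small even though the \emph{history} was not. Concretely, on the path $u\to v_1\to v_2\to v_3$ with $\ell=3$, the play $\{u\}\to\{u,v_1\}\to\{u,v_1,v_2\}\to\{u,v_2\}$ ends in a $2$-pebble configuration containing $v_2$ at distance $2$, while $D(2)=1$. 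Worse, if your inference were valid it would prove the linear recurrence $D(\ell)\le D(\ell-1)+1$ (since your pivot $v$ is \emph{adjacent} to $w$), i.e.\ $D(\ell)\le\ell-1$, which contradicts \lem{log_path_pebbling} (distance $2^{\ell-2}$ is achievable with $\ell$ pebbles); this also exposes the incoherence of the claimed decomposition, since once $\mathrm{dist}(u,v)\le D(\ell-1)$ and $(v,w)\in E$ there is no ``second $D(\ell-1)+1$'' left to account for. The correct meet-in-the-middle argument must pivot at a vertex at distance roughly $d/2$ rather than at a neighbour of $w$ --- e.g.\ at the first moment a pebble crosses distance $D(\ell-1)+1$ --- and must argue that from that moment until $w$ is pebbled some pebble always sits at distance at least $D(\ell-1)+1$, which is what reduces the budget available to the rest of the play; one also needs the convention (implicit in the paper, and necessary for the statement to be true on general digraphs) that the pebble on $u$ is never removed. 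You correctly flag this bookkeeping as the crux and defer it to \cite{li1996reversibility}, as the paper itself does, but the sketch you give of it would not survive being made precise.
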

\begin{lemma}\label{lem:pebbles_lower_bound}
    Let $D_r(\ell)$ be the maximal distance from $u$ such that it is possible to obtain pebbling configuration $[u,v]$ for some $v \in V$ using only $\ell$ pebbles. Then $D_r(\ell) \leq 2^{\ell - 2}$
\end{lemma}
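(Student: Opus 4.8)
The plan is to prove this by induction on $\ell$, in the spirit of the proof of \lem{distance_vs_pebbles}. The central idea is to inspect the \emph{last} move that places the pebble on $v$ and to exploit the fact that the pebble budget forces one pebble of slack to be free at that instant.

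The base cases $\ell\le 2$ follow by inspection: for $\ell\le 1$ the budget rules out any configuration with two distinct pebbles, so $v=u$ and the distance is $0$; for $\ell=2$, the only configurations $[u,v]$ reachable from $\{u\}$ are $\{u\}$ itself and $\{u,v\}$ with $v$ an out-neighbour of $u$, so $D_r(2)\le 1=2^{0}$. For the inductive step, fix a legal pebbling using at most $\ell$ pebbles that starts from $\{u\}$ and reaches $[u,v]$ with $v\ne u$, and set $d:=\mathrm{dist}_G(u,v)$. Let $\tau^\ast$ be the last time step at which a pebble is placed on $v$; this exists because $v$ is pebbled at the end but not at the start. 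At time $\tau^\ast-1$ we have: (i) the configuration has at most $\ell-1$ pebbles, since the move at $\tau^\ast$ raises the pebble count by one and the count never exceeds $\ell$; and (ii) a pebble sits on some in-neighbour $w$ of $v$ (the support of the move), and since only vertices reachable from $u$ are ever pebbled, $\mathrm{dist}_G(u,w)$ is finite and $\ge d-1$. Thus it suffices to bound, by $2^{\ell-2}-1$, the distance from $u$ of any vertex that can carry a pebble in a configuration of at most $\ell-1$ pebbles reachable from $\{u\}$ with an $\ell$-pebble budget; together with $d\le\mathrm{dist}_G(u,w)+1$ this yields $d\le 2^{\ell-2}$.

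The remaining claim is where the real work lies, and in particular where the constant must be pushed from $2^{\ell-1}$ down to $2^{\ell-2}$. A naive one-step reduction does not suffice: the prefix up to time $\tau^\ast-1$ pebbles $w$ using $\le\ell$ pebbles, so \lem{distance_vs_pebbles} only gives $\mathrm{dist}_G(u,w)\le D(\ell)\le 2^{\ell-1}-1$, and one cannot simply lower the budget to $\ell-1$, because the prefix may genuinely require all $\ell$ pebbles at intermediate steps while still ending with only $\ell-1$ (for instance $\{u\}\to\{u,a\}\to\{u,a,b\}\to\{u,b\}$ reaches distance $2=2^{3-2}$ with a $3$-pebble prefix that ends with $2$ pebbles). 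Instead one decomposes the computation a second time: along a shortest $u$–$w$ path, identify the moment at which the pebbling ``commits'' to crossing the midpoint — the last time before which every pebble stays within distance $D(\ell-2)=2^{\ell-3}-1$ of the relevant source — observe that from then on one pebble is permanently dedicated to certifying that crossing, and then apply \lem{distance_vs_pebbles} at budget $\ell-2$ to the first half and the inductive hypothesis at budget $\ell-1$ to the (now effectively rooted) second half; the two bounds $2^{\ell-3}-1$ and $2^{\ell-3}$ add to $2^{\ell-2}-1$. I expect the main obstacle to be making this commitment/rooting step rigorous — precisely, verifying that exactly one pebble of slack survives the crossing, so that the tail is governed by the $(\ell-1)$-budget inductive hypothesis rather than the $\ell$-budget one — and this is the place where I would follow the bookkeeping of \cite[Chapter 3]{potechin2015analyzing} most closely.
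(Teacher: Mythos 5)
Your opening move (isolating the last placement of a pebble on $v$, extracting its support $w$, and noting $d\le \mathrm{dist}_G(u,w)+1$) coincides with the first step of the paper's proof, but the reduction you then perform is to a \emph{false} statement. You claim it suffices to show that any vertex carrying a pebble in a configuration with at most $\ell-1$ pebbles, reachable from $\{u\}$ under an $\ell$-pebble budget, lies at distance at most $2^{\ell-2}-1$ from $u$. \lem{log_path_pebbling} refutes this: the configuration $[u,v']$ with $\mathrm{dist}_G(u,v')=2^{\ell-2}$ is reachable using exactly $\ell$ pebbles and contains only two pebbles (so at most $\ell-1$ for $\ell\ge 3$), yet $v'$ sits at distance $2^{\ell-2}>2^{\ell-2}-1$. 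Consequently your "second decomposition" is aimed at proving something untrue, and no bookkeeping on the \emph{prefix} of the computation can rescue it: with budget $\ell$ the prefix genuinely permits $w$ to be pebbled at distance up to $D(\ell)=2^{\ell-1}-1$ while leaving a pebble of slack. The only thing that rules this out is the obligation to subsequently clean up to exactly $[u,v]$ -- information your argument never uses. (You also flag the "commitment/rooting" step yourself as unresolved, so even setting aside the falsity of the target, the proposal is not a proof.)

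The ingredient you are missing is the \emph{reversibility} of the pebble game, applied to the \emph{suffix}. In the paper's argument (proving $D_r(\ell+1)\le D(\ell)+1$), one observes that after the last placement of a pebble on $v$, one pebble remains frozen on $v$ until the end, and yet the pebble on $w$ must be removed at some point during this suffix, since the final configuration is $[u,v]$. Running that portion of the computation backwards is again a legal pebbling, and it \emph{places} a pebble on $w$ while one pebble is permanently tied up on $v$ -- i.e.\ with effectively one fewer pebble available. If $\mathrm{dist}_G(u,w)\ge D(\ell)+1$ this contradicts \lem{distance_vs_pebbles}, giving $D_r(\ell+1)\le D(\ell)+1\le 2^{\ell-1}=2^{(\ell+1)-2}$ with no further induction needed. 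Your proposal analyzes only the forward prefix and never examines the cleanup phase, and without that the bound cannot be pushed below $2^{\ell-1}$.
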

\begin{proof}
    We prove the statement by showing that $D_r(\ell + 1) \leq D(\ell) + 1$. The claim then follows from \lem{distance_vs_pebbles}. Assume for contradiction that it is possible to obtain a configuration $[u, u_{D(\ell) + 2}]$ for some $u_{D(\ell) + 2} \in V(G)$ such that the distance from $u$ to $u_{D(\ell) + 2}$ is at least $D(\ell) +2$. Before the last time a pebble is placed on $u_{D(\ell) + 2}$, there is a pebble on some $u_{D(\ell) + 1} \in V(G)$ such that the distance from $u$ to $u_{D(\ell) + 1}$ is at least $D(\ell) + 1$. After this, the pebble is removed from $u_{D(\ell) + 1}$ using only $\ell$ pebbles. Consider the sequence of moves that accomplishes this in reverse. It is a sequence of moves that allows to put a pebble on $u_{D(\ell) + 1}$ using only $\ell$ pebbles, since one pebble always stays on $u_{D(\ell) + 2}$. This is a contradiction, since the distance from $u$ to $u_{D(\ell) + 1}$ is at least $D(\ell) + 1$.
\end{proof}

\begin{lemma}\label{lem:log_path_pebbling}
If a vertex $v \in V(G)$ is reachable from $u$ by a path of length at most $L$ in $G$, then it is possible to obtain pebbling configuration $[u,v]$ using $\log L + 2$ pebbles and $L^{\log 3}$ moves of the pebbling game on $G$.
\end{lemma}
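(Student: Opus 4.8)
The plan is a recursive, reversible ``pebbling along the path'' argument, in the spirit of the reversible-simulation pebblings used in \cite[Chapter 3]{potechin2015analyzing} and behind \lem{distance_vs_pebbles}--\lem{pebbles_lower_bound}. First I would reduce to the case that $v$ is reachable from $u$ by a \emph{simple} directed path $\pi\colon u=w_0\to w_1\to\cdots\to w_d=v$ with $d\le L$ (disposing of $d=0$, where $[u,v]=[u]$ is already the starting configuration and $0$ moves suffice). Simplicity of $\pi$ will be used repeatedly: it guarantees that a pebble ``parked'' on one endpoint of a sub-path never sits on an internal vertex of a disjoint sub-path, so it neither helps nor hinders the moves there.

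The core is a recursion on $d$. For a simple path of length $d'$, let $P(d')$ and $M(d')$ be the peak number of pebbles and the number of moves needed to go from ``one pebble on the first vertex'' to ``pebbles on the first and last vertex'' (an argument with an arbitrary set $X$ of pebbles parked off the path adds exactly $|X|$ to the pebble count and nothing to the move count). Given $\pi$ of length $d$, pick the midpoint $w_m$ with $m=\lceil d/2\rceil$, so the prefix $w_0\to\cdots\to w_m$ has length $\lceil d/2\rceil$ and the suffix $w_m\to\cdots\to w_d$ has length $\lfloor d/2\rfloor$. \emph{Phase A:} recurse on the prefix to reach $[u,w_m]$, costing $P(\lceil d/2\rceil)$ pebbles and $M(\lceil d/2\rceil)$ moves. \emph{Phase B:} with the pebble on $u$ parked (and $u$ not internal to the suffix), recurse on the suffix starting from the pebble on $w_m$ to reach $[u,w_m,v]$, costing $P(\lfloor d/2\rfloor)+1$ pebbles and $M(\lfloor d/2\rfloor)$ moves. \emph{Phase C:} run the (reversible) move sequence of Phase A in reverse to delete the pebble on $w_m$, reaching $[u,v]$; since $v$ is not internal to the prefix, its parked pebble does not interfere, and this costs $P(\lceil d/2\rceil)+1$ pebbles and $M(\lceil d/2\rceil)$ moves. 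The base case $d=1$ just places a pebble on $w_1$, so $P(1)=2$, $M(1)=1$. This yields the recurrences $P(d)=P(\lceil d/2\rceil)+1$ and $M(d)=2M(\lceil d/2\rceil)+M(\lfloor d/2\rfloor)\le 3M(\lceil d/2\rceil)$.

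For $L$ a power of $2$ -- which is the regime in which the lemma is invoked -- these close cleanly: $P(2^k)=P(2^{k-1})+1$ gives $P(L)=\log L+2$, and $M(2^k)=3M(2^{k-1})$ gives $M(L)=3^{\log L}=L^{\log 3}$, exactly as claimed. Since any $uv$-path of length $\le L$ contains a simple such path of length $d\le L$, and $P,M$ are monotone, the bounds for $d$ imply those for $L$. (For general $L$ the same recursion gives $P(d)\le\lceil\log d\rceil+2$ and $M(d)\le 3^{\lceil\log d\rceil}$, matching the statement up to a harmless constant in the move count; one could also just peel off single edges for the non-power-of-two remainder. This case is not needed for the application.)

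The step I expect to be the only real subtlety -- it is bookkeeping rather than conceptual -- is keeping the pebble budget tight, in particular correctly accounting for the \emph{parked} pebbles. In Phase B the pebble on $u$ is idle, and in Phase C the pebble on $v$ is idle, and each raises the peak by one \emph{during that phase}; it is tempting (but wrong) to ignore the parked pebble during the Phase~C reversal. This ``$+1$'' is exactly what turns the cost $\log L+1$ of merely pebbling the endpoint while retaining the source into the cost $\log L+2$ of ending in the clean configuration $[u,v]$, and it is unavoidable in light of \lem{pebbles_lower_bound}. The other point to get right is that Phases B and C operate on sub-paths disjoint (apart from $w_m$) from where pebbles sit, which is precisely where simplicity of $\pi$ enters.
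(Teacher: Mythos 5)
Your proposal is correct and takes essentially the same approach as the paper: the same three-phase divide-and-conquer (pebble the midpoint, pebble the endpoint from the midpoint, reverse the first phase), the same base case, and the same recurrences yielding $\log L+2$ pebbles and $3^{\log L}=L^{\log 3}$ moves. Your extra care with the simple-path reduction and the non-power-of-two case is handled in the paper by a ``without loss of generality,'' but it is the same argument.
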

\begin{proof}
    Let $u = u_0, u_1, \ldots ,u_L = v$ be the path from $u$ to $v$ in $G$. Without loss of generality, its length is $L$, which is a power of $2$. We show by induction how to put pebbles on vertices of the path to obtain the configuration $[u,v]$. Assume that there is a sequence of $M_k$ moves that ends in the configuration $[u, u_{k}]$, $k \le L/2$. Then we can obtain $[u, u_{2k}]$ as follows, using $M_{2k}=3T_k$ moves.
    \begin{enumerate}
        \item Perform the sequence of $T_k$ moves to obtain $[u, v_{k}]$;
        \item Perform the same sequence of $T_k$ moves with respect to $v_k$ to obtain $[u, v_{k}, v_{2k}]$;
        \item Perform step 1 in reverse to obtain $[u, v_{2k}]$.
    \end{enumerate}
The base case is putting a pebble on $v_1$ while there is a pebble on $u$, which is a valid pebbling game move. 
In the base case, we use $2$ pebbles while traversing distance $1$ in $M_1=1$ moves. In the induction step, we double the distance from $u$ and use one additional pebble for this. Therefore, we need $\log (L) +2$ pebbles in order to obtain $[u,v]$, and $M_L = 3 M_{L/2} = 3^{\log L} = L^{\log 3}$.
\end{proof}

\begin{lemma}\label{lem:pebbling_only_reachable_vertices}
Let $S$ be a list of vertices of $G$. If pebbling configuration $[S]$ can be obtained in the pebbling game on $G$ then every $v \in S$ is reachable from $u$ in $G$.
\end{lemma}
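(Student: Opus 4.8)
\textbf{Proof plan for \lem{pebbling_only_reachable_vertices}.}
The plan is to induct on the number of moves in a pebbling sequence that produces the configuration $[S]$. The statement to prove by induction is: if a pebbling configuration $C$ (a set of pebbled vertices) is reachable from the initial configuration $\{u\}$ in $m$ moves, then every vertex in $C$ is reachable from $u$ in $G$. The base case $m=0$ is immediate, since the only $0$-move configuration is $\{u\}$, and $u$ is trivially reachable from itself.

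For the inductive step, suppose $C$ is obtained in $m+1$ moves, and let $C'$ be the configuration after the first $m$ moves, so $C$ is obtained from $C'$ by one legal move. By the induction hypothesis, every vertex of $C'$ is reachable from $u$. There are two cases for the last move. If the move removes a pebble from some vertex $v'$, then $C \subseteq C'$, and the conclusion is immediate. If the move places a pebble on some vertex $v'$ with $(v,v') \in E$ for some $v$ that currently holds a pebble (i.e. $v \in C'$), then $C \subseteq C' \cup \{v'\}$; every vertex of $C'$ is reachable from $u$ by the induction hypothesis, and $v'$ is reachable from $u$ because $v$ is reachable from $u$ and $(v,v')\in E$, so appending $v'$ to a $uv$-path gives a $uv'$-path.

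I do not expect any real obstacle here: the legality condition on moves is precisely what makes reachability an invariant maintained by every move, and the induction is a direct formalization of the remark already made in the text that ``no vertex not reachable from $u$ can ever be pebbled.'' The only minor care needed is to phrase the induction over arbitrary intermediate configurations (not just the final one) and to handle both kinds of legal moves (placement and removal); removal is trivial since it only shrinks the configuration. Finally, applying the invariant to the configuration $C = [S]$ gives that every $v \in S$ is reachable from $u$, as claimed.
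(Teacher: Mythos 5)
Your proposal is correct and follows essentially the same route as the paper: induction along the pebbling sequence, treating removal moves as trivial and placement moves via concatenation of a $uv$-path with the edge $(v,v')$. Your version is just a slightly more explicit formalization (inducting on the number of moves) of the paper's argument.
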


\begin{proof}
We prove this statement by induction. The base case is the starting configuration $[u]$, the starting vertex $u$ is reachable from itself. For the induction step we assume that $[S]$ is a pebbling configuration that can be obtained in the pebbling game and each vertex in $S$ is reachable from $u$ in $G$. Then, we consider the two types of possible pebbling game moves.
\begin{itemize}
    \item If we remove a pebble from a vertex in $S$, then all remaining vertices remain reachable from~$u$.
    \item If we put a pebble on a new vertex $v'$, then it is reachable by some vertex $v \in S$. Since $v$ is reachable from $u$ by the induction hypothesis, we can combine paths form $u$ to $v$ and from $v$ to $v'$ and conclude that $v'$ is reachable from $u$. \qedhere
\end{itemize}
\end{proof}

\begin{lemma}\label{lem:pebbling_iff_reachablity}
    It is possible to obtain pebbling configuration $[u,v]$ in the pebbling game on $G$ if and only if there is a path from $u$ to $v$ in $G$.
\end{lemma}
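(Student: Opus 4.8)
The plan is to prove the two directions separately, relying on the lemmas already established. The forward direction (pebbling configuration $[u,v]$ obtainable $\implies$ path from $u$ to $v$) is an immediate special case of \lem{pebbling_only_reachable_vertices}: take $S$ to be the list $\{u,v\}$; since $[S]=[u,v]$ can be obtained, the lemma says every vertex in $S$, in particular $v$, is reachable from $u$ in $G$, which by definition of reachability means there is a directed path from $u$ to $v$.

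For the reverse direction (path from $u$ to $v$ $\implies$ pebbling configuration $[u,v]$ obtainable), I would observe that if there is a path from $u$ to $v$, then there is a simple path, which has length at most $n-1$, hence at most $L$ if we are in a regime where $L\geq n$; more robustly, one should not assume anything about $L$ here and instead argue directly. The cleanest argument is to invoke \lem{log_path_pebbling}: if $v$ is reachable from $u$ by a path of length at most $L'$ for \emph{any} finite $L'$ (and a directed path of some length always exists by hypothesis), then rounding $L'$ up to the next power of $2$, \lem{log_path_pebbling} exhibits an explicit sequence of pebbling moves ending in configuration $[u,v]$. So the reverse direction follows directly from \lem{log_path_pebbling} applied with the path length in place of $L$.

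Putting these two together gives the biconditional. The main subtlety — and really the only thing to be careful about — is that \lem{log_path_pebbling} is stated for a specific parameter $L$ that is a power of $2$, so to apply it to an arbitrary path one must first pad or round the path length up to a power of $2$ (equivalently, note that a path of length $\ell_0$ is also ``a path of length at most $L$'' for $L=2^{\lceil\log \ell_0\rceil}$), after which the lemma applies verbatim. There is no real obstacle here; this lemma is essentially just a convenient repackaging of \lem{pebbling_only_reachable_vertices} and \lem{log_path_pebbling}, and its purpose is to serve as the clean interface used in the proof of \lem{SN_correctness}.
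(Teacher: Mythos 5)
Your proposal is correct and matches the paper's proof exactly: the forward direction of the biconditional is \lem{log_path_pebbling} (whose own proof already handles rounding the path length up to a power of $2$), and the reverse direction is the special case $S=\{u,v\}$ of \lem{pebbling_only_reachable_vertices}. Your extra care about the power-of-$2$ assumption is a fair observation but introduces no new content beyond what the cited lemmas already provide.
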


\begin{proof}
If there is a path from $u$ to $v$ in $G$, then, by \lem{log_path_pebbling}, configuration $[u,v]$ can be obtained in the pebbling game.

If it is possible to obtain pebbling configuration $[u,v]$ in the pebbling game on $G$, then, by \lem{pebbling_only_reachable_vertices}, $v$ is reachable from $u$ in $G$.
\end{proof}
    
\begin{lemma}\label{lem:SN_path_to_moves}
    Every path in ${\cal N}_{2^\ell}(u)(G)$ corresponds to a sequence of moves in the pebbling game on $G$ transforming the corresponding configurations.
\end{lemma}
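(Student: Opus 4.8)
The plan is to set up a correspondence between the two ``halves'' of the recursive switching network construction and the two ``halves'' of the pebbling recursion from \lem{log_path_pebbling}, and then push this through by induction on $\ell$. At the base level, a single ``on'' edge of ${\cal N}_1(u)(G)$ from $[u]$ to $[u,v_i]$ is exactly the statement that $(u,v_i)\in E(G)$ (by \lem{edge_encoding} with $f_1(\sigma)=0$), which is precisely a legal move placing a pebble on $v_i$ while a pebble sits on $u$. So traversing this edge in one direction corresponds to placing the pebble $v_i$, and traversing it in the other direction corresponds to removing it; either way the endpoint vertices $[u]$ and $[u,v_i]$ of the edge name the pebbling configurations before and after the move.

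For the inductive step I would use the ``top-down'' description of the network given by \lem{SN_top_down}: ${\cal N}_{2^{\ell+1}}(u)$ is obtained from ${\cal N}_{2^\ell}(u)$ by replacing each ${\cal N}_1$ block (an edge with query label, say, $(w,v_i)$, connecting a vertex whose associated set is $A$ to one whose associated set is $A\cup\{v_i\}$) by an ${\cal N}_2$ block with the same boundary. So it suffices to show: (i) a path through a single ${\cal N}_1$ block corresponds to a legal pebbling move (done in the base case), and more importantly (ii) a path through a single ${\cal N}_2$ block, from the vertex labelled by set $A$ to the vertex labelled by $A\cup\{v_i\}$, corresponds to a sequence of pebbling moves that, starting from configuration $[A]$, reaches $[A\cup\{v_i\}]$ while only ever touching pebbles ``at or after'' the vertex it is augmenting. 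Concretely, recall ${\cal N}_2(w)$ is glued from ${\cal N}_1(w)$, the augmented copies ${\cal N}_1^w(v_i)={\cal N}_1^{(1,i)}$, and the reversed augmented copies ${\sf Rev}({\cal N}_1^{v_j}(w))={\cal N}_1^{(2,j)}$; a path from $[w]$ to $[w,v_i]$ in ${\cal N}_2(w)(G)$ must traverse: an ${\cal N}_1(w)$ edge (pebble some intermediate $v_k$, i.e.\ go to $[w,v_k]$), then an ${\cal N}_1^{v_k}(w)$-type edge inside the $(1,k)$-copy (pebble $v_i$ from $v_k$, reaching $[w,v_k,v_i]$, legal since $(v_k,v_i)\in E(G)$ is exactly that the edge is ``on''), then a reversed ${\cal N}_1^{v_i}(w)$ edge in the $(2,i)$-copy (remove the pebble on $v_k$, since that edge ``on'' means $(w,v_k)\in E(G)$ and a reversed traversal is a pebble removal), ending at $[w,v_i]$. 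Composing these three pebbling moves is exactly the depth-one instance of the recursion in \lem{log_path_pebbling}, and the pebble on $v_i$ is never disturbed. Applying the induction hypothesis blockwise, every path in ${\cal N}_{2^{\ell+1}}(u)(G)$ — being a concatenation of sub-paths through ${\cal N}_2$ blocks, each of which we now know maps to a valid sequence of pebbling moves between the configurations named by its endpoints — maps to a valid sequence of pebbling moves on $G$ transforming the corresponding configurations.

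Two bookkeeping points need care. First, a path in the undirected graph ${\cal N}_{2^\ell}(u)(G)$ need not respect the ``left-to-right'' orientation of edges, and may backtrack; I would handle this by noting that traversing any edge against its orientation simply corresponds to the inverse (pebble-removal vs.\ pebble-placement) move, which is always legal in the pebbling game because the game's moves are symmetric (if $(v,v')\in E$ and there is a pebble on $v$, one may either add or remove a pebble on $v'$), so reversal of a path corresponds to the reversed sequence of legal moves. Second, I must make sure the identification of a switching-network vertex with a pebbling configuration is consistent under the gluings: \defin{gluing} only glues vertices with the same associated vertex-set, and the construction (as noted in \sec{alg-SN}, following \cite{potechin2014SwitchingNetworkforDSTCON}) was set up precisely so that the associated set of any switching-network vertex reachable from the source along ``on'' edges consists of vertices reachable from $u$ in $G$; this is what lets us declare ``vertex $[u_1,\dots,u_k]$ $\leftrightarrow$ pebbling configuration with pebbles exactly on $u_1,\dots,u_k$'' unambiguously.

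The main obstacle I expect is getting the inductive hypothesis stated at the right level of generality: it is not enough to say ``a path corresponds to some pebbling sequence,'' one needs the stronger invariant that a path through an ${\cal N}_{2^\ell}$-block with boundary sets $A$ and $A\cup\{v\}$ yields a pebbling sequence from $[A]$ to $[A\cup\{v\}]$ \emph{that never removes the pebbles of $A$ and in which $v$ is only ever manipulated using pebbles among $A\cup\{v\}$} — i.e.\ the locality that makes the blockwise composition legal (the moves from one block don't illegally rely on pebbles that another block has since removed). Once this refined invariant is in place, the induction is essentially forced by \lem{SN_top_down}, and the converse direction (pebblings $\to$ paths), which is the companion statement, will go through symmetrically.
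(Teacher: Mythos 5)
Your proof is correct in substance, but it takes a considerably longer route than the paper, and in doing so it manufactures a difficulty that is not actually there. The paper's proof is essentially two sentences: by construction, \emph{every} edge of ${\cal N}_{2^\ell}(u)$ (at any level of the recursion) joins vertices with associated sets $S$ and $S\cup\{v'\}$ and carries query label $(v,v')$ for some $v\in S$; hence an ``on'' edge traversed in one direction is the legal move ``place a pebble on $v'$ using the pebble on $v$,'' and traversed in the other direction it is the legal removal of that pebble. Since a path is just a sequence of edge traversals, and the configuration after each traversal is exactly the associated set of the switching-network vertex reached, the correspondence is immediate edge-by-edge — no induction on $\ell$, no blockwise decomposition via \lem{SN_top_down}, and no appeal to the structure of the canonical three-edge path through an ${\cal N}_2$ block is needed. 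In particular, the ``stronger locality invariant'' you flag as the main obstacle (that a block's moves must not rely on pebbles another block has removed) is a non-issue: legality of a pebbling move depends only on the \emph{current} configuration, which is tracked exactly by the current switching-network vertex, so composition of legal moves is automatically legal. What your longer argument does buy is an explicit inductive verification of the structural fact that every edge has the form $\{[S],[S,v']\}$ with label $(v,v')$, $v\in S$ — something the paper asserts ``by construction'' (it is set up in \sec{alg-SN} and implicit in \lem{edge_encoding}) — but once that fact is in hand, the rest of your machinery can be discarded in favor of the direct edge-by-edge argument.
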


\begin{proof}
Every vertex of the switching network contains a list of vertices of $G$ and represents a configuration of pebbles. By construction, every edge that is ``on" is of the form $\{[S], [S,v']\}$ labeled by $(v,v') \in E(G)$ for some $v \in S$. Therefore, depending on the direction, an edge of a path in the switching network corresponds either to adding a pebble to a vertex of $G$ or removing a pebble from a vertex of $G$. 
\end{proof}
    
\begin{proof}[Proof of \lem{SN_correctness}]

We prove by induction that if $v_i$ is reachable from $u$ by a path of length $2^\ell$, then the source $[u]$ and the sink $[u,v_i]$ are connected in $\mathcal{N}_{2^\ell}(u)(G)$.

\paragraph{Base case.} For $\ell=0$, consider $\mathcal{N}_{1}(u)(G)$. If the edge $(u,v_i)$ is present in $G$, then the edge with query label $(u,v_i)$ and endpoints $[u]$ and $[u,v_i]$ is present in $\mathcal{N}_{1}(u)(G)$. Consequently, $[u]$ and $[u,v_i]$ are connected.

\paragraph{Induction step.} Assume the claim holds for $2^{\ell-1}$. Let $p$ be a path of length at most $2^\ell$ from $u$ to $v_i$ in $G$, and let $u'$ denote the midpoint of $p$. Then $u'$ is reachable from $u$ by a path of length at most $2^{\ell-1}$, and $v_i$ is reachable from $u'$ by a path of length at most $2^{\ell-1}$. By the induction hypothesis:
\begin{itemize}
    \item $[u]$ and $[u,u']$ are connected in $\mathcal{N}_{2^{\ell-1}}(u)(G)$;
    \item $[u,u']$ and $[u,u',v_i]$ are connected in $\mathcal{N}^{u}_{2^{\ell-1}}(u')(G)$;
    \item $[u,u',v_i]$ and $[u,v_i]$ are connected in ${\sf Rev}(\mathcal{N}^{v_i}_{2^{\ell-1}}(u)(G))$.
\end{itemize}
Concatenating these paths yields a path from $[u]$ to $[u,v_i]$ in $\mathcal{N}_{2^\ell}(u)(G)$. This completes the induction.

If source $[ u ]$ and sink $[ u, v_i ]$ are connected in ${\cal N}_{2^\ell}(u)(G)$, then, by \lem{SN_path_to_moves}, there is a sequence of pebbling game moves on $G$ that transforms $[u]$ into $[u, v_i]$. Therefore, by \lem{pebbling_iff_reachablity}, there is a path from $u$ to $v_i$ in $G$. Note that the configurations in the sequence contain at most $\ell + 2$ vertices of $G$, by construction of ${\cal N}_{2^\ell}(u)(G)$. Hence, by \lem{pebbles_lower_bound}, the path in $G$ is of length at most~$2^\ell$.    
\end{proof}

\subsection{Complexity analysis of the switching network}\label{sec:alg-wit}

\begin{lemma}\label{lem:positive_witness}
    Fix any $\ell\in\{0,\dots,\log L\}$ and $u,v\in V$ such that there is a path from $u$ to $v$ in $G$ of length at most $2^\ell$.
    Then there is a path connecting the source $[u]$ and the sink $[u,v]$ in ${\cal N}_{2^\ell}(u)(G)$ of length at most $2^{\ell\log 3}$.
\end{lemma}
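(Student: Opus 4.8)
The natural approach is induction on $\ell$, mirroring the recursive structure of the switching network and the recursion in \lem{log_path_pebbling}. The base case $\ell=0$ is immediate: if $(u,v)\in E$, then the edge $([u],[u,v])$ is on in ${\cal N}_1(u)(G)$, giving a path of length $1 = 2^{0\cdot\log 3}$. For the induction step, suppose there is a $uv$-path in $G$ of length at most $2^\ell$, and let $u'$ be its midpoint, so that $u'$ is reachable from $u$ by a path of length at most $2^{\ell-1}$ and $v$ is reachable from $u'$ by a path of length at most $2^{\ell-1}$. Following the proof of \lem{SN_correctness}, the connecting path in ${\cal N}_{2^\ell}(u)(G)$ is the concatenation of three pieces: a path from $[u]$ to $[u,u']$ inside the copy ${\cal N}_{2^{\ell-1}}^0 = {\cal N}_{2^{\ell-1}}(u)$; a path from $[u,u']$ to $[u,u',v]$ inside ${\cal N}_{2^{\ell-1}}^{(1,\cdot)} = {\cal N}_{2^{\ell-1}}^u(u')$; and a path from $[u,u',v]$ to $[u,v]$ inside ${\sf Rev}({\cal N}_{2^{\ell-1}}^v(u))$.

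The plan is then to bound the length of each of these three pieces by the inductive hypothesis. The first and third pieces are (copies of, or reversals of copies of, augmented versions of) ${\cal N}_{2^{\ell-1}}$ with the relevant root, connecting a source to the appropriate sink along a path in $G$ of length at most $2^{\ell-1}$; augmenting every vertex-tuple with a fixed extra vertex and reversing edge orientations does not change path lengths, so each of these contributes at most $2^{(\ell-1)\log 3}$ edges. The middle piece is ${\cal N}_{2^{\ell-1}}^u(u')$ connecting its source $[u,u']$ to its sink $[u,u',v]$, which again by the inductive hypothesis (applied with root $u'$ and the length-$\le 2^{\ell-1}$ path from $u'$ to $v$) has length at most $2^{(\ell-1)\log 3}$. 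Summing, the total path length in ${\cal N}_{2^\ell}(u)(G)$ is at most $3\cdot 2^{(\ell-1)\log 3} = 3^{\ell} = 2^{\ell\log 3}$, which closes the induction. One should also note that this path genuinely lives in ${\cal N}_{2^\ell}(u)(G)$, i.e.\ all its edges are on under $G$: this is inherited from the fact that each of the three sub-paths consists of on-edges in its respective copy, and the gluing identifies the endpoints consistently.

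The routine bookkeeping — checking that the gluing in \sec{graph-construction} identifies exactly the vertices $[u,v_i]$, $[u,v_i,v_j]$ we need, and that the orientation-reversal in the $(2,j)$-blocks and the vertex-augmentation in the $(1,i)$-blocks preserve the property ``there is an on-path of length $\le 2^{(\ell-1)\log 3}$ between the designated boundary vertices'' — is the part that needs care but is not conceptually hard; it is essentially the same bookkeeping already carried out in the proof of \lem{SN_correctness}. \textbf{The main obstacle} I anticipate is purely one of careful exposition rather than mathematical depth: one must be precise about \emph{which} boundary vertex of each sub-copy plays the role of ``source'' and ``sink'' for the inductive hypothesis, since the recursive construction permutes these roles (the source of a $(2,j)$-block is a \emph{sink} of the parent, and the middle block has root $u'$ rather than $u$). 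Getting these identifications straight — and observing that the inductive statement is robust under the $V^k$-tuple augmentation and edge reversal because a switching-network path length is just a count of edges in an undirected graph — is where the argument must be written carefully, but once \lem{SN_correctness}'s decomposition is in hand, the length bound falls out by the same $3$-fold recursion that gives $L^{\log 3}$ in \lem{log_path_pebbling}.
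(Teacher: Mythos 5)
Your proposal is correct and takes essentially the same route as the paper: the paper's proof of this lemma is literally the one-line observation that it "follows directly from the proof of \lem{SN_correctness}", and your write-up just makes explicit the three-fold midpoint decomposition and the resulting recursion $M_\ell \le 3 M_{\ell-1}$, $M_0 = 1$, giving $3^\ell = 2^{\ell\log 3}$. No gaps; your version is simply a more detailed rendering of the intended argument.
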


\begin{proof}
    The statement of this lemma follows directly from the proof of \lem{SN_correctness}.
\end{proof}

\begin{lemma}\label{lem:negative_witness}
    For any $\ell\in\{0,\dots,\log L\}$, and $u\in V$, $|E({\cal N}_{2^\ell}(u))|\leq(2n+1)^{\ell} n$.
\end{lemma}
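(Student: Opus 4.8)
The claim $|E({\cal N}_{2^\ell}(u))|\leq(2n+1)^{\ell} n$ follows directly from the recursive edge-count identity already established in the excerpt, so the proof should just be a short induction on $\ell$.

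\bigskip

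\noindent\textbf{Proof proposal.}

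The plan is to prove this by induction on $\ell$, relying on the explicit description of the edge set of ${\cal N}_{2^\ell}(u)$ given in~\eq{E-2-ell}. For the base case $\ell = 0$, the switching network ${\cal N}_1(u)$ has edge set $E_1 = \{\ket{e_i} : i\in [n]\}$, so $|E({\cal N}_1(u))| = n = (2n+1)^0 n$, as required.

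For the induction step, assume $|E({\cal N}_{2^{\ell-1}}(u'))| \leq (2n+1)^{\ell-1} n$ for every root $u' \in V$. By construction, ${\cal N}_{2^\ell}(u)$ is obtained by gluing together the $2n+1$ copies $\{{\cal N}_{2^{\ell-1}}^\sigma : \sigma\in\Sigma\}$ of switching networks of the form ${\cal N}_{2^{\ell-1}}(u')$ (possibly augmented by an extra vertex in every tuple, which does not change the edge set). Since gluing identifies only vertices and takes the edge set to be the disjoint union of the constituent edge sets (see \defin{gluing}), and since ${\sf Rev}$ leaves the edge set unchanged, we have, as recorded in~\eq{E-2-ell},
$$|E({\cal N}_{2^\ell}(u))| = \sum_{\sigma\in\Sigma}|E({\cal N}_{2^{\ell-1}}^\sigma)| \leq (2n+1)\cdot (2n+1)^{\ell-1} n = (2n+1)^{\ell} n,$$
where the inequality uses $|\Sigma| = 2n+1$ together with the induction hypothesis applied to each of the $2n+1$ copies. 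This completes the induction.

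\bigskip

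\noindent There is essentially no obstacle here: the only thing to be careful about is confirming that augmenting tuples with an extra vertex and applying ${\sf Rev}$ do not alter the cardinality of the edge set — both are immediate from the definitions — and that the gluing operation in \defin{gluing} genuinely takes $E({\cal N})$ to be the \emph{disjoint} union $E({\cal N}_1)\sqcup E({\cal N}_2)$, which is stated explicitly there. Indeed, since the excerpt already asserts the exact identity $E_{2^\ell} = \Sigma^\ell \times E_1$ in~\eq{E-2-ell}, one could alternatively skip the induction entirely and just note $|E({\cal N}_{2^\ell}(u))| = |\Sigma|^\ell |E_1| = (2n+1)^\ell n$, giving equality rather than merely an inequality.
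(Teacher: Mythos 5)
Your proof is correct and follows essentially the same approach as the paper, which simply unrolls the recursion $|E({\cal N}_{2^\ell}(u))| = (2n+1)|E({\cal N}_{2^{\ell-1}}(u))|$ down to $|E({\cal N}_1(u))| = n$. Your closing observation that \eq{E-2-ell} already gives the exact equality $|E_{2^\ell}| = |\Sigma|^\ell |E_1| = (2n+1)^\ell n$ is also consistent with the paper, whose proof in fact states the count as an equality.
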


\begin{proof}$\abs{E({\cal N}_{2^\ell}(u))}  = (2n+1)\abs{E({\cal N}_{2^{\ell - 1}}(u))}
  = (2n+1)^{\ell}\abs{E({\cal N}_{1}(u))} = (2n+1)^{\ell} n.$
\end{proof}

\subsection{Basis of the space of $\s\t$-flows}\label{sec:alg-basis}

In order to apply \thm{SN_algorithm}, we need to describe a basis for ${\cal B}^\bot$, and give an efficient procedure for generating it. By \lem{B_perp}, it is enough to describe a  basis for the $\s\t$-flow space ${\cal F}({\cal N}_{L})$ (see \defin{flow-spaces}), and by \lem{opt_flows_oplus_circulations}, we can further decompose this task into finding a basis for the circulation space ${\cal C}({\cal N}_L)$ (see \defin{flow-spaces}) and an optimal $\s\t$-flow. We first define the basis, and then describe a procedure for generating it.

\subsubsection{Basis definition}

In this section, we prove the following theorem, which, if we take $u=s$ and $v_j=t$, gives us a working basis of ${\cal B}^\bot$ for the switching network ${\cal N}_L(s,t)$.

\begin{theorem}\label{thm:basis_flows_circulations}
    For $j\in \{0,1\}^{\log n}$ and $\ell\in [\log(L)]$, let $\theta_j(2^{\ell-1})$ be the optimal unit $[u], [u,v_j]$-flow  in ${\cal N}_{2^{\ell - 1}}$, and let $\ket{\bar \theta_j(2^{\ell-1})}$ be as in \eq{cropped}. For $i,j \in \{0,1\}^{\log n}$, let 
    \begin{align*}
        \ket{p_{ij}(2^\ell)} = \ket{\bar\theta^0_i(2^{\ell-1})} + \ket{\bar\theta^{1i}_{j}(2^{\ell-1})} - \ket{\bar\theta^{2j}_{i}(2^{\ell-1})}=\ket{0,\bar{0}}\ket{\bar\theta_i(2^{\ell-1})} + \ket{1,i}\ket{\bar\theta_{j}(2^{\ell-1})} + \ket{2,j}\ket{\bar\theta_{i}(2^{\ell-1})},
    \end{align*} 
    where the superscript encodes the copy of ${\cal N}_{2^{\ell - 1}}$ in ${\cal N}_{2^\ell}$ (see~\fig{p_ij}), and for each $x,z\in \{0,1\}^{\log n}$, let  
    \[\ket{\psi_{z,x}(2^{\ell})} = \sum_{j \in \{0,1\}^{\log n}} (-1)^{x \cdot j} \sum_{i \in \{0,1\}^{\log n}} (-1)^{z \cdot i} \ket{p_{ij}(2^{\ell})}.\]
    Finally, let $\ket{\theta_j(L)}$ be as in \eq{flow_state}. Then
    \[\bigcup_{\ell = 1}^{\log(L)}\bigcup_{\sigma \in \Sigma^{\log(L) - \ell}}    \Big\{ \underbrace{\ket{\sigma}\ket{\psi_{z,x}(2^\ell)}}_{=\pm\ket{\psi_{z,x}^\sigma(2^\ell)}} : z,x \in \{0,1\}^{\log n}, z \neq \bar{0} \Big\} \cup \left\{\ket{\theta_{j}(L)}, \ket{[u]}, \ket{[u,v_j]}\right\}\]
    is an orthogonal basis of ${\cal B}^\perp$.
\end{theorem}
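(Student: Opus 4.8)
The plan is to prove this by induction on $\log L$, exploiting the top-down recursive description of ${\cal N}_{2^{\ell+1}}$ in \lem{SN_top_down}, together with the dimension count of \lem{flow_dimensions} and the decomposition ${\cal F}({\cal N}_L)={\cal F}^\textsc{OPT}({\cal N}_L)\oplus{\cal C}({\cal N}_L)$ from \lem{opt_flows_oplus_circulations}. By \lem{B_perp}, ${\cal B}^\perp={\cal F}({\cal N}_L)\oplus\mathrm{span}\{\ket{[u]},\ket{[u,v_j]}\}$, so since $\ket{[u]}$ and $\ket{[u,v_j]}$ are clearly orthogonal to everything else and to each other, it suffices to show that $\ket{\theta_j(L)}$ together with the collection of $\ket{\psi_{z,x}^\sigma(2^\ell)}$ states forms an orthogonal basis of ${\cal F}({\cal N}_L)$. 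Since $\ket{\theta_j(L)}$ spans ${\cal F}^\textsc{OPT}({\cal N}_L)$ (it is \emph{the} optimal unit $\s\t$-flow, by \lem{flow_properties}), and it is orthogonal to ${\cal C}({\cal N}_L)$, the real content is: the states $\{\ket{\psi_{z,x}^\sigma(2^\ell)}\}$ form an orthogonal basis of the circulation space ${\cal C}({\cal N}_L)$.

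First I would verify that each $\ket{\psi_{z,x}(2^\ell)}$ (with $z\neq\bar0$) is a circulation in ${\cal N}_{2^\ell}$, and hence $\ket{\psi_{z,x}^\sigma(2^\ell)}$ is a circulation in ${\cal N}_L$ for any $\sigma\in\Sigma^{\log L-\ell}$. For this I check that $\ket{p_{ij}(2^\ell)}$ is a flow with boundary exactly $\{[u],[u,v_i],[u,v_j],[u,v_i,v_j]\}$ — it is a sum of three unit flows across the three copies of ${\cal N}_{2^{\ell-1}}$ (with the sign on the $2j$-copy accounting for the edge-reversal convention of \eq{sign-explainer-function}), glued so that flow is conserved at the internal boundary vertices $[u,v_i]$ and $[u,v_i,v_j]$ but enters at $[u]$ and exits at $[u,v_j]$ with unit value. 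Then taking the Fourier-type combination $\sum_j(-1)^{x\cdot j}\sum_i(-1)^{z\cdot i}\ket{p_{ij}(2^\ell)}$ with $z\neq\bar0$ kills the net flow at $[u]$, at $[u,v_j]$ for each $j$, and at $[u,v_i,v_j]$, leaving a genuine circulation (the $z\neq\bar0$ condition is precisely what makes $\sum_i(-1)^{z\cdot i}=0$, cancelling the boundary at $[u]$). Orthogonality: states with different values of $\ell$ or different $\sigma$ live on disjoint edge-supports (distinct copies of the smaller switching networks), so are automatically orthogonal; within a fixed $\ell$ and $\sigma$, orthogonality across different $(z,x)$ follows from orthogonality of the characters $(-1)^{z\cdot i}(-1)^{x\cdot j}$ on $\{0,1\}^{\log n}\times\{0,1\}^{\log n}$, combined with the fact that the $\ket{p_{ij}(2^\ell)}$ have controlled overlaps (the cross terms $\braket{p_{ij}}{p_{i'j'}}$ depend only on which copies coincide, and summing against distinct characters annihilates them). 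I also need $\ket{\theta_j(L)}$ orthogonal to all the $\ket{\psi}$'s, which is immediate from \lem{flow_properties} since each $\ket{\psi_{z,x}^\sigma}$ is a circulation.

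The main obstacle is the \textbf{dimension count} — showing the collection is not just orthogonal but \emph{spanning}. By \lem{flow_dimensions}, $\dim{\cal C}({\cal N}_L)=|E({\cal N}_L)|-|V({\cal N}_L)|+1$, and I would compute both $|E|$ (which is $(2n+1)^{\log L}n$ by \lem{negative_witness}) and $|V({\cal N}_L)|$ via the recursion $|V({\cal N}_{2^\ell})|=(2n+1)|V({\cal N}_{2^{\ell-1}})|-(\text{glued vertices})$, carefully accounting for the identifications in the gluing. On the other side, the proposed basis for ${\cal C}$ has, for each $\ell\in\{1,\dots,\log L\}$ and each $\sigma\in\Sigma^{\log L-\ell}$ (of which there are $(2n+1)^{\log L-\ell}$), exactly $(n-1)\cdot n$ vectors (choices of $z\neq\bar 0$ and $x$). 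Summing $\sum_{\ell=1}^{\log L}(2n+1)^{\log L-\ell}(n-1)n$ should telescope to exactly match $|E({\cal N}_L)|-|V({\cal N}_L)|+1$; getting this arithmetic to close is the crux, and is where the induction via \lem{SN_top_down} is cleanest — assuming the basis is correct for ${\cal N}_{2^\ell}$, replacing each ${\cal N}_1$ block by an ${\cal N}_2$ block adds exactly the new $\ell=1$-level circulations (one fresh family $\ket{\psi_{z,x}(2)}$ per ${\cal N}_1$-block-turned-${\cal N}_2$-block), and one checks the count and orthogonality are preserved. Once orthogonality plus a matching dimension count are both in hand, linear independence is automatic and the proof is complete.
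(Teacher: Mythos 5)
Your overall route is the same as the paper's: reduce via \lem{B_perp} and \lem{opt_flows_oplus_circulations} to showing that the $\ket{\psi_{z,x}^\sigma(2^\ell)}$ states form an orthogonal basis of ${\cal C}({\cal N}_L)$, verify the circulation property of each $\ket{\psi_{z,x}}$ through the boundary structure of the $\ket{p_{ij}}$ and the character sums (your $z\neq\bar 0$ observation is exactly the paper's), handle same-level orthogonality via character orthogonality against the structured Gram matrix of the $\ket{p_{ij}}$, and close with a dimension count matching $\dim{\cal C}({\cal N}_L)=|E|-|V|+1$ against $(n^2-n)\sum_{\ell}(2n+1)^{\log L-\ell}$. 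All of that is faithful to the paper's \lem{circulations_basis} and \lem{optimal_flow_form} (the paper just organizes the count recursively rather than flattening it).

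There is, however, one genuine gap: your claim that states with \emph{different values of $\ell$} are ``automatically orthogonal'' because they ``live on disjoint edge-supports'' is false. Disjointness of supports holds only for two blocks neither of which is nested in the other. If $\ell'<\ell$ and $\sigma'\in\Sigma^{\log L-\ell'}$ extends $\sigma\in\Sigma^{\log L-\ell}$, then ${\cal N}_{2^{\ell'}}^{\sigma'}$ sits \emph{inside} ${\cal N}_{2^{\ell}}^{\sigma}$, and the supports genuinely overlap: the restriction of $\ket{\psi_{z,x}(2^{\ell})}$ to the copy ${\cal N}_{2^{\ell-1}}^{(1,i)}$ is $(-1)^{z\cdot i}\sum_j(-1)^{x\cdot j}\ket{\bar\theta_j(2^{\ell-1})}$, which is a nonzero vector supported throughout that sub-block, where the lower-level $\ket{\psi_{z',x'}^{\sigma'}}$ also live. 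The correct argument — and the one the paper's recursive \lem{circulations_basis} is built around — is that the restriction of a level-$\ell$ state to any level-$(\ell-1)$ block is a linear combination of \emph{optimal} unit flows of that block, while every lower-level basis state is (inductively) a \emph{circulation} of that block, and optimal flows are orthogonal to circulations by \lem{flow_properties}. You already invoke \lem{flow_properties} for the orthogonality of $\ket{\theta_j(L)}$ to the $\ket{\psi}$'s, so the tool is in your hands; but as written, the cross-level orthogonality step of your proof fails and must be replaced by this blockwise optimal-flow-versus-circulation argument. (A minor wording slip elsewhere: you first say $p_{ij}$ has boundary $\{[u],[u,v_i],[u,v_j],[u,v_i,v_j]\}$, but as your own next clause makes clear, flow is conserved at $[u,v_i]$ and $[u,v_i,v_j]$, so the boundary is exactly $\{[u],[u,v_j]\}$.)
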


This result follows directly from \lem{circulations_basis} (below), which recursively constructs an orthogonal basis for the circulation space, ${\cal C}({\cal N}_{2^\ell})$, of ${\cal N}_{2^\ell}$ (see \defin{flow-spaces}); \lem{optimal_flow_form} (below), which recursively constructs an optimal flow state, which is a basis for the space of optimal flows;
and \lem{opt_flows_oplus_circulations}, which states that the space of flows decomposes as the direct sum of the circulation space and the (one-dimensional) space of optimal flows.

\begin{figure}[h]
\centering
\includegraphics[width=\textwidth]{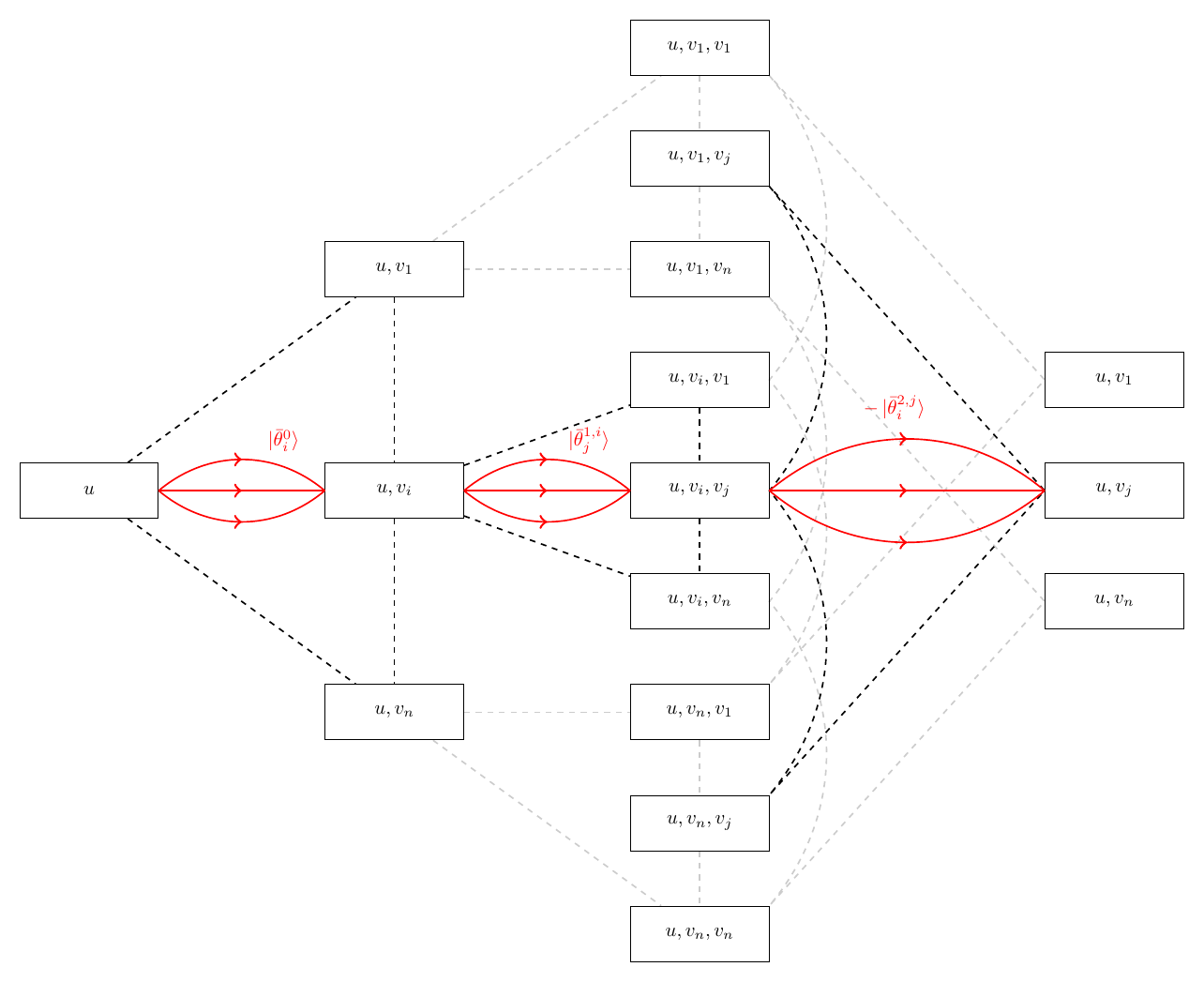}
\caption{Visualization of the state $\ket{p_{ij}} = \ket{\bar\theta^0_i} + \ket{\bar\theta^{1i}_{j}} - \ket{\bar\theta^{2j}_{i}}=\ket{0,\bar{0}}\ket{\bar\theta_i} + \ket{1,i}\ket{\bar\theta_{j}} + \ket{2,j}\ket{\bar\theta_{i}}$ in ${\cal N}_{2^\ell}$, where $j\in \{0,1\}^{\log n}$ and $\ell\in [\log(L)]$. Here $\theta_j$ denotes the optimal unit $[u], [u,v_j]$-flow  in ${\cal N}_{2^{\ell - 1}}$, and $\ket{\bar \theta_j}$ is as in \eq{cropped}. The superscript encodes the copy of ${\cal N}_{2^{\ell - 1}}$ in ${\cal N}_{2^\ell}$.}\label{fig:p_ij}
\end{figure}
\begin{lemma}\label{lem:circulations_basis}
    For $j\in \{0,1\}^{\log n}$ and $\ell\in [\log(L)]$, let $\theta_j$ be the optimal unit $[u], [u,v_j]$-flow  in ${\cal N}_{2^{\ell - 1}}$, and let $\ket{\bar \theta_j}$ be as in \eq{cropped}. For $i,j \in \{0,1\}^{\log n}$, let $\ket{p_{ij}}=\ket{p_{ij}(2^\ell)}$ and $\ket{\psi_{z,x}}=\ket{\psi_{z,x}(2^\ell)}$ be as in \thm{basis_flows_circulations}.     
    Let $\{\ket{b_1},\dots,\ket{b_D}\}$ be an orthogonal basis for $\mathcal{C}({\cal N}_{2^{\ell - 1}})$.
Then
    \[\{\ket{\sigma}\ket{b_d}:\sigma\in\Sigma,d\in [D]\} \cup  \left\{ \ket{\psi_{z,x}} : z,x \in \{0,1\}^{\log n}, z \neq \bar{0} \right\},\]
    is an orthogonal basis of the space of circulations $\mathcal{C}({\cal N}_{2^\ell})$.
\end{lemma}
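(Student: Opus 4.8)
\textbf{Proof proposal for Lemma~\ref{lem:circulations_basis}.}
The plan is to show that the proposed set is (i) contained in the circulation space $\mathcal{C}({\cal N}_{2^\ell})$, (ii) linearly independent (in fact orthogonal), and (iii) of the right cardinality, which by \lem{flow_dimensions} must equal $|E({\cal N}_{2^\ell})| - |V({\cal N}_{2^\ell})| + 1$. Together these three facts imply it is an orthogonal basis.

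First I would verify membership. The states $\ket{\sigma}\ket{b_d}$ are (signed) copies of circulations of ${\cal N}_{2^{\ell-1}}$ supported entirely inside a single block ${\cal N}_{2^{\ell-1}}^\sigma$; since the gluing in \defin{gluing} only affects the \emph{boundary} vertices of each block, and a circulation of ${\cal N}_{2^{\ell-1}}$ has zero divergence even at its own source and sink, the divergence of $\ket{\sigma}\ket{b_d}$ vanishes at every vertex of ${\cal N}_{2^\ell}$ (the boundary vertices of the block receive contributions only from this one block). For the states $\ket{\psi_{z,x}}$, the key point is that each $\ket{p_{ij}}$ is a sum of three \emph{unit} flows: $\bar\theta_i^0$ from $[u]$ to $[u,v_i]$, $\bar\theta_j^{1i}$ from $[u,v_i]$ to $[u,v_i,v_j]$, and $-\bar\theta_i^{2j}$, which (accounting for the reversal sign in \eq{sign-explainer-function}) is a unit flow from $[u,v_i,v_j]$ back to $[u,v_j]$ — see \fig{p_ij}. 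So $\ket{p_{ij}}$ has divergence $+1$ at $[u]$, $-1$ at $[u,v_j]$, and $0$ everywhere else: it is a unit $[u],[u,v_j]$-flow on ${\cal N}_{2^\ell}$. Now $\ket{\psi_{z,x}} = \sum_j (-1)^{x\cdot j}\sum_i (-1)^{z\cdot i}\ket{p_{ij}}$; the inner sum over $i$ with $z\neq\bar 0$ has coefficient sum $\sum_i (-1)^{z\cdot i}=0$, so the divergence at $[u]$ cancels, and the outer sum over $j$ likewise leaves divergence $0$ at every $[u,v_j]$ only after we check it — actually the divergence at $[u,v_j]$ from $\ket{p_{ij}}$ is $-1$ independent of $i$, so summing over $i$ gives $-\sum_i(-1)^{z\cdot i}=0$ when $z\neq\bar 0$. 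Hence $\ket{\psi_{z,x}}\in\mathcal{C}({\cal N}_{2^\ell})$ for $z\neq\bar 0$.

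Next, orthogonality. The $\ket{\sigma}\ket{b_d}$ for distinct $\sigma$ are supported on disjoint blocks, hence orthogonal, and for fixed $\sigma$ they are orthogonal since $\{\ket{b_d}\}$ is an orthogonal basis. The $\ket{\psi_{z,x}}$ for distinct $(z,x)$ are orthogonal by a Fourier/Hadamard argument: writing out $\braket{\psi_{z,x}}{\psi_{z',x'}}$ and using that the blocks labelled $0$, $(1,i)$, $(2,j)$ are edge-disjoint, one expands $\braket{p_{ij}}{p_{i'j'}}$ — it is supported only when the index patterns match in the relevant blocks — and the characters $(-1)^{z\cdot i}, (-1)^{x\cdot j}$ then enforce $z=z'$, $x=x'$ after summing. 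Cross terms $\braket{\sigma}{\psi_{z,x}}$-type: here I need the fact that $\ket{\psi_{z,x}}$, restricted to any block ${\cal N}_{2^{\ell-1}}^\sigma$, is a multiple of an \emph{optimal} (unit) flow of that block, hence lies in $\mathcal{F}^{\textsc{OPT}}({\cal N}_{2^{\ell-1}})$, which by \lem{flow_properties} is orthogonal to $\mathcal{C}({\cal N}_{2^{\ell-1}})$ and therefore to each $\ket{b_d}$. This uses that $\theta_i, \theta_j$ are defined to be the \emph{optimal} unit flows, so each summand $\bar\theta^0_i, \bar\theta^{1i}_j, \bar\theta^{2j}_i$ is optimal within its block, and scalar multiples/copies of optimal flows are still optimal.

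Finally, the count: there are $|\Sigma|\cdot D = (2n+1)\bigl(|E({\cal N}_{2^{\ell-1}})|-|V({\cal N}_{2^{\ell-1}})|+1\bigr)$ states of the first type (using the inductive formula for $\dim\mathcal{C}$, i.e. \lem{flow_dimensions}), plus $(2^{\log n}-1)\cdot 2^{\log n} = n(n-1)$ states $\ket{\psi_{z,x}}$. I would then check this sums to $|E({\cal N}_{2^\ell})| - |V({\cal N}_{2^\ell})| + 1$ using $|E({\cal N}_{2^\ell})| = (2n+1)|E({\cal N}_{2^{\ell-1}})|$ from \lem{negative_witness} and a parallel recurrence for $|V({\cal N}_{2^\ell})|$ coming from the gluing (the number of glued vertices per level is $n$ sink-to-source identifications plus $n^2$ sink-to-sink identifications, so $|V({\cal N}_{2^\ell})| = (2n+1)|V({\cal N}_{2^{\ell-1}})| - (n + n^2)$, roughly). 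Matching these is bookkeeping. \textbf{The main obstacle} I anticipate is not any single step but pinning down the signs and orientations correctly — the reversal factor $(-1)^{|\sigma|_2}$ from \eq{sign-explainer-function} interacts with the $-\ket{\bar\theta^{2j}_i}$ term in $\ket{p_{ij}}$, and one must be careful that the three pieces of $\ket{p_{ij}}$ genuinely chain head-to-tail as a single flow from $[u]$ to $[u,v_j]$ rather than accidentally creating divergence at the glued vertices $[u,v_i]$ and $[u,v_i,v_j]$ — and separately, verifying that the block-restrictions of $\ket{\psi_{z,x}}$ really are optimal flows of the sub-blocks (needed for the cross-orthogonality) rather than merely flows.
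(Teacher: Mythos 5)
Your proposal follows the same route as the paper's proof: membership of both families in $\mathcal{C}({\cal N}_{2^\ell})$ via divergence computations at $[u]$ and the $[u,v_j]$'s using $z\neq\bar 0$, orthogonality to the block circulations $\ket{\sigma}\ket{b_d}$ via optimality of the $\bar\theta$'s and \lem{flow_properties}, and a dimension count via \lem{flow_dimensions} with exactly the vertex recurrence $|V({\cal N}_{2^\ell})| = (2n+1)|V({\cal N}_{2^{\ell-1}})| - n^2 - n$ yielding the $n^2-n$ extra states. One correction to your orthogonality sketch, which is the only delicate step: $\braket{p_{ij}}{p_{i'j'}}$ is \emph{not} supported only when the index patterns match — the block labelled $0$ contributes $\braket{\bar\theta_i}{\bar\theta_{i'}}$ for every pair $i,i'$, matching or not. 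What makes the character-sum argument go through is the symmetry fact that $\braket{\bar\theta_j}{\bar\theta_{j'}}$ takes only two values according to whether $j=j'$, so that $\braket{p_{ij}}{p_{i'j'}}$ depends only on the Boolean pattern $([i=i'],[j=j'])$; the paper isolates this in \clm{inner-prod-p-psi} and then still needs a short case analysis (the quantity $A_{x,j}$) because $x$, unlike $z$, is permitted to equal $\bar 0$, so the $j$-sums do not all vanish automatically. With that fact supplied, your plan is exactly the paper's argument.
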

\begin{proof}
Clearly, the vectors $\{\ket{\sigma}\ket{b_d}\}_{\sigma,d}$ are pairwise orthogonal. 
Moreover, for each $\sigma\in \Sigma$, $\mathrm{span}\{\ket{\sigma}\ket{b_d}:d\in [D]\}$ is contained in $\mathcal{C}({\cal N}_{2^\ell})$, since any circulation on a ${\cal N}_{2^{\ell - 1}}$ block -- in this case, the one labeled by $\sigma$ -- is a circulation on the full graph ${\cal N}_{2^\ell}$. Finally, each state $\ket{\psi_{z,x}}$ is orthogonal to every subspace $\mathrm{span}\{\ket{\sigma}\ket{b_d}:d\in [D]\}$, as it is constructed from optimal flows within the ${\cal N}_{2^{\ell - 1}}$ blocks that are orthogonal to their respective circulation subspaces, by \lem{flow_properties}, and don't overlap other blocks.

Next, we compute the dimension of $\mathcal{C}({\cal N}_{2^\ell})$. By \lem{flow_dimensions}, $\dim \mathcal{C}({\cal N}_{2^\ell}) = \abs{E({\cal N}_{2^\ell})} - \abs{V({\cal N}_{2^\ell})} + 1$. From the construction of ${\cal N}_{2^\ell}$, $\abs{E({\cal N}_{2^\ell})} = (2n+1)\abs{E({\cal N}_{2^{\ell - 1}})}$, and  $\abs{V({\cal N}_{2^\ell})} = (2n+1)\abs{V({\cal N}_{2^{\ell - 1}})} - n^2 - n$. Therefore,
\begin{align*}
\dim \mathcal{C}({\cal N}_{2^\ell}) &= \abs{E({\cal N}_{2^\ell})} - \abs{V({\cal N}_{2^\ell})} + 1\\
&= (2n+1)\abs{E({\cal N}_{2^{\ell - 1}})} - (2n+1) \abs{V({\cal N}_{2^{\ell - 1}})} + n^2 + n + 1\\
&= (2n+1) \underbrace{\left(\abs{E({\cal N}_{2^{\ell - 1}})} - \abs{V({\cal N}_{2^{\ell - 1}})} + 1\right)}_{\dim \mathcal{C}({\cal N}_{2^{\ell - 1}})} +n^2 -n.
\end{align*}
This implies that a basis for $\mathcal{C}({\cal N}_{2^\ell})$ can be constructed by taking orthogonal bases of the circulation spaces of each ${\cal N}_{2^{\ell - 1}}$ block within ${\cal N}_{2^\ell}$, and adding $n^2 - n$ additional states that are orthogonal to these subspace bases.

Next, we argue that $\ket{\psi_{z,x}}$ are indeed circulations. We can interpret $\psi_{z,x}$ as a function on the edges in the natural way: $\psi_{z,x}(e)=\braket{\psi_{z,x}}{e}$. Then, as usual, $\psi_{z,x}(\u)$ denotes the total flow on $\u$: $\psi_{z,x}(\u)=\sum_{e\in E^{\rightarrow}(\u)}\psi_{z,x}(e)-\sum_{e\in E^{\leftarrow}(\u)}\psi_{z,x}(e)$.

Note that each $p_{ij}$ is a $[u],[u,v_j]$-flow in ${\cal N}_{2^\ell}$, as shown in \fig{p_ij}, so in particular, it has boundary $B=\{[u],[u,v_1],\dots,[u,v_n]\}$. 
Hence, each ${\psi_{z,x}}$ is a flow in ${\cal N}_{2^\ell}$ with boundary $B$ as well. Therefore, $\psi_{z,x}(\v) = 0$ for any $\v \in V({\cal N}_{2^\ell}) \setminus B$. It remains to show that $\psi_{z,x}(\v) = 0$ for any $ \v \in B$ as well. Since each $\ket{p_{ij}}$ is a unit $[u],[u,v_j]$-flow and $z \neq \bar{0}$, we get
\begin{align*}
    \psi_{z,x}([u]) &= \sum_{j\in\{0,1\}^n} (-1)^{x \cdot j} \sum_{i\in\{0,1\}^n} (-1)^{z \cdot i}\underbrace{p_{ij}([u])}_{=1} = 0\\
    \psi_{z,x}([u,v_{j'}]) &= \sum_{j\in\{0,1\}^n}(-1)^{x \cdot j} \sum_{i\in\{0,1\}^n} (-1)^{z \cdot i}\underbrace{p_{i,j}([u,v_{j'}])}_{=\delta_{j,j'}} = (-1)^{x \cdot j'} \sum_{i\in\{0,1\}^n} (-1)^{z \cdot i} = 0.
\end{align*}
Hence, each $\ket{\psi_{z,x}}$ is a circulation. To complete the proof, it remains to show that the states $\ket{\psi_{z,x}}$ are pairwise orthogonal, towards which the following claim is helpful. 

\begin{claim}\label{claim:inner-prod-p-psi}
    There exist real numbers $c,c'$ such that for all $i,j,x,z\in\{0,1\}^n$ such that $z\neq\bar{0}$, 
    $$\braket{p_{ij}}{\psi_{z,x}}=(-1)^{z\cdot i}(c(-1)^{x\cdot j}+c'A_{x,j}),$$
    where $A_{x,j}=\sum_{j'\in\{0,1\}^{\log n}:j'\neq j}(-1)^{x\cdot j'}$.
\end{claim}
\begin{proof}
First, we observe that, due to symmetry of the graph ${\cal N}_{2^{\ell - 1}}$, inner products between optimal flows $\ket{\bar\theta_j}$ in this graph are constant. More precisely, there exist $c_0,c_1 \in \mathbb{R}$ such that
\[ \braket{\bar\theta_j}{\bar\theta_{j'}} = \begin{cases}
			c_0, & \text{if $j\neq j'$ }\\
            c_1, & \text{if $j = j'$. } 
		 \end{cases}\]
Therefore, the inner product $\braket{p_{ij}}{p_{i'j'}}$ only depends on whether $i=i'$ and $j=j'$. Indeed,
\begin{equation}\label{eq:braket_p}
    \begin{split}
    \braket{p_{ij}}{p_{i'j'}} &= \braket{\bar\theta^0_i}{\bar\theta^0_{i'}} + \braket{\bar\theta^{1i}_j}{\bar\theta^{1i'}_{j'}} + \braket{\bar\theta^{2j}_i}{\bar\theta^{2j'}_{i'}}\\
    &= \begin{cases}
			c_{00} := c_0 + 0 + 0, & \text{if $i \neq i', j\neq j'$ }\\
            c_{01} := c_0 + 0 + c_0, & \text{if $i \neq i', j = j'$ }\\
            c_{10} := c_1 + c_0 + 0 , & \text{if $i = i', j\neq j'$ }\\
            c_{11} := c_1 + c_1 + c_1, & \text{if $i = i', j = j'$. } 
		 \end{cases}
    \end{split}
\end{equation}
Since $z\neq\bar{0}$, 
$$\sum_{i'\in\{0,1\}^{\log n}:i'\neq i}(-1)^{z\cdot i'} = \sum_{i'\in\{0,1\}^{\log n}}(-1)^{z\cdot i'}-(-1)^{z\cdot i} = - (-1)^{z\cdot i}.$$
We use this to compute:
    \begin{align*}
        \braket{p_{ij}}{\psi_{z,x}} &= \sum_{i',j'\in\{0,1\}^{\log n}}(-1)^{x\cdot j'+z\cdot i'}\braket{p_{ij}}{p_{i'j'}}\\
        &= (-1)^{x\cdot j+z\cdot i}\braket{p_{ij}}{p_{ij}}
        +\sum_{i'\in\{0,1\}^{\log n}:i'\neq i}(-1)^{x\cdot j+z\cdot i'}\braket{p_{ij}}{p_{i'j}}\\
        &\qquad+\sum_{j'\in\{0,1\}^{\log n}:j'\neq j}(-1)^{x\cdot j'+z\cdot i}\braket{p_{ij}}{p_{ij'}}
        +\sum_{i',j'\in\{0,1\}^{\log n}:i'\neq i,j'\neq j}(-1)^{x\cdot j'+z\cdot i'}\braket{p_{ij}}{p_{i'j'}}\\
        &= (-1)^{x\cdot j+z\cdot i}c_{11}
        -(-1)^{x\cdot j+z\cdot i}c_{01}
        +\left((-1)^{z\cdot i}c_{10}
        -(-1)^{z\cdot i}c_{00}\right)\sum_{j'\in\{0,1\}^{\log n}:j'\neq j}(-1)^{x\cdot j'},
    \end{align*}
from which the result follows by taking $c=c_{11}-c_{01}$ and $c'=c_{10}-c_{00}$.
\end{proof}

This allows us to compute the inner product $\braket{\psi_{z,x}}{\psi_{z',x'}}$ with $(z,x) \neq (z',x')$ and $z,z' \neq \bar{0}$:
\begin{align*}
\braket{\psi_{z',x'}}{\psi_{z,x}} &= \sum_{i,j\in\{0,1\}^{\log n}}(-1)^{x'\cdot j+z'\cdot i}\braket{p_{ij}}{\psi_{z,x}}\\
&= \sum_{i,j\in\{0,1\}^{\log n}}(-1)^{x'\cdot j+z'\cdot i}(-1)^{z\cdot i}(c(-1)^{x\cdot j}+c'A_{x,j}) & \text{by \clm{inner-prod-p-psi}}\\
&= \sum_{i\in\{0,1\}^{\log n}}(-1)^{(z'+z)\cdot i} \cdot \sum_{j\in\{0,1\}^{\log n}}(c(-1)^{(x'+x)\cdot j}+c'(-1)^{x'\cdot j}A_{x,j}).
\end{align*}
If $z\neq z'$, then the first product term is 0, and so $\braket{\psi_{z',x'}}{\psi_{z,x}}=0$. Suppose $z=z'$, but $x\neq x'$, which is the only other way to have $(z,x)\neq (z',x')$. Then 
\begin{align*}
\braket{\psi_{z',x'}}{\psi_{z,x}} &= n \cdot \sum_{j\in\{0,1\}^{\log n}}(c(-1)^{(x'+x)\cdot j}+c'(-1)^{x'\cdot j}A_{x,j})
= c'n\sum_{j\in \{0,1\}^{\log n}}(-1)^{x'\cdot j}A_{x,j}.
\end{align*}
Using 
$$A_{x,j}=\sum_{j'\in\{0,1\}^{\log n}:j'\neq j}(-1)^{x\cdot j'}=\left\{\begin{array}{ll}
    n-1 & \mbox{if }x=\bar{0} \\
   -(-1)^{x\cdot j}  & \mbox{else,}
\end{array}\right.$$
we see that if $x\neq \bar{0}$, then
\begin{align*}
\braket{\psi_{z',x'}}{\psi_{z,x}} &= -c'n\sum_{j\in \{0,1\}^{\log n}}(-1)^{(x'+x)\cdot j}=0,
\end{align*}
and otherwise, we must have $x'\neq \bar{0}$, so
\begin{align*}
\braket{\psi_{z',x'}}{\psi_{z,x}} &= c'n(n-1)\sum_{j\in \{0,1\}^{\log n}}(-1)^{x'\cdot j}=0,
\end{align*}
completing the proof.\qedhere
\end{proof}
We now describe the form of optimal flows in ${\cal N}_{2^\ell}$, by recursively combining the optimal flows in ${\cal N}_{2^{\ell-1}}$. For the base case ${\cal N}_1$, the optimal $[u],[u,v_j]$-flow simply assigns a unit of flow to the single edge between $[u]$ and $[u,v_j]$.
\begin{lemma}\label{lem:optimal_flow_form}

For $i,j \in \{0,1\}^{\log n}$, let $\ket{p_{ij}}=\ket{p_{ij}(2^\ell)}$ be as in \thm{basis_flows_circulations}.     
    Then for any $j$, the optimal unit  $[u], [u,v_j]$-flow in ${\cal N}_{2^\ell}(u,v_j)$ is
    \[
    \ket{\theta_{j}(2^\ell)} = -\ket{\leftarrow,[u]} + \frac{1}{n}\sum_{i\in \{0,1\}^{\log n}} \ket{p_{ij}} + \ket{\rightarrow,[u,v_j]}.
    \]   
\end{lemma}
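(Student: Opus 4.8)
The strategy is to verify directly that the claimed vector $\ket{\theta_j(2^\ell)}$ is (i) a unit $[u],[u,v_j]$-flow on ${\cal N}_{2^\ell}(u,v_j)$, and (ii) orthogonal to the circulation space ${\cal C}({\cal N}_{2^\ell})$. By \lem{flow_properties}, the optimal unit $\s\t$-flow is the unique unit $\s\t$-flow orthogonal to all circulations, so establishing (i) and (ii) pins it down uniquely and finishes the proof. I would set this up as a short induction on $\ell$, with the base case $\ell=0$ being the one-edge flow described just before the lemma statement, which is trivially optimal since it is the only unit flow on a single edge.

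\textbf{Step 1 (flow conditions).} First I would confirm that $\ket{\theta_j(2^\ell)} = -\ket{\leftarrow,[u]} + \frac1n\sum_i \ket{p_{ij}} + \ket{\rightarrow,[u,v_j]}$ has the right boundary. Each $\ket{p_{ij}}$ is, as noted in the proof of \lem{circulations_basis} and in \fig{p_ij}, a unit $[u],[u,v_j]$-flow: on its $0$-copy it carries $\theta_i$ from $[u]$ to $[u,v_i]$, then $\theta_j$ from $[u,v_i]$ to $[u,v_i,v_j]$ in the $(1,i)$-copy, then $\theta_i$ (with the sign from ${\sf Rev}$) from $[u,v_i,v_j]$ back to $[u,v_j]$ in the $(2,j)$-copy — so flow is conserved at every internal switching-network vertex and at the glued vertices $[u,v_i]$, $[u,v_i,v_j]$, with net $+1$ out of $[u]$ and $+1$ into $[u,v_j]$. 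Averaging over $i$ preserves this, so $\frac1n\sum_i\ket{p_{ij}}$ is a unit $[u],[u,v_j]$-flow on $E({\cal N}_{2^\ell})$, and prepending $-\ket{\leftarrow,[u]}$ and appending $\ket{\rightarrow,[u,v_j]}$ turns it into the boundary-flow vector $\ket{\theta_j(2^\ell)}$ of the form \eq{flow_state}. (One should also check conservation at the \emph{other} sinks $[u,v_{j'}]$, $j'\neq j$: since each $p_{ij}$ already has zero net flow there, so does the average — this is exactly the computation $p_{ij}([u,v_{j'}])=\delta_{j,j'}$ used in \lem{circulations_basis}.)

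\textbf{Step 2 (orthogonality to circulations).} By \lem{circulations_basis}, ${\cal C}({\cal N}_{2^\ell})$ is spanned by (a) circulations $\ket{\sigma}\ket{b_d}$ supported inside a single ${\cal N}_{2^{\ell-1}}$-block, and (b) the states $\ket{\psi_{z,x}(2^\ell)}$ with $z\neq\bar 0$. For type (a): restricted to the $\sigma$-block, $\ket{\theta_j(2^\ell)}$ is (a multiple of) an \emph{optimal} flow of that sub-network — in the $0$-copy it is $\frac1n\sum_i\ket{\bar\theta_i(2^{\ell-1})}$, which lies in the span of optimal flows of ${\cal N}_{2^{\ell-1}}$, and similarly in the $(1,i)$- and $(2,j)$-copies it restricts to a multiple of $\ket{\bar\theta_j(2^{\ell-1})}$ resp. $\ket{\bar\theta_i(2^{\ell-1})}$ — hence it is orthogonal to every circulation $\ket{b_d}$ of that block by \lem{flow_properties}. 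For type (b), I would compute $\braket{\psi_{z,x}}{\theta_j(2^\ell)} = \frac1n\sum_i \braket{\psi_{z,x}}{p_{ij}}$ (the boundary terms $\ket{\leftarrow,[u]},\ket{\rightarrow,[u,v_j]}$ don't appear in $\ket{\psi_{z,x}}$, which lives only on edges), and invoke \clm{inner-prod-p-psi}: $\braket{p_{ij}}{\psi_{z,x}} = (-1)^{z\cdot i}\bigl(c(-1)^{x\cdot j} + c' A_{x,j}\bigr)$. Summing over all $i\in\{0,1\}^{\log n}$ kills this because $z\neq\bar0$ forces $\sum_i (-1)^{z\cdot i}=0$. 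So $\ket{\theta_j(2^\ell)}\perp{\cal C}({\cal N}_{2^\ell})$.

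\textbf{Main obstacle.} The routine parts are bookkeeping, but the one place that needs genuine care is \emph{Step 1}: correctly tracking the edge orientations and the $(-1)^{|\sigma|_2}$ signs from \eq{sign-explainer-function}, and verifying that the three sub-flows $\ket{\bar\theta_i^0}, \ket{\bar\theta_j^{1i}}, -\ket{\bar\theta_i^{2j}}$ glue with the \emph{right} signs so that flow is conserved (and not doubled or cancelled) at the glued vertices $[u,v_i]$ and $[u,v_i,v_j]$ — the minus sign on the third term and the ${\sf Rev}$ in the definition of the $(2,j)$-copy are precisely what make this work, and this is the content of \fig{p_ij}. Once the flow structure of $\ket{p_{ij}}$ is nailed down, everything else (Step 2, the averaging, the uniqueness via \lem{flow_properties}) is immediate, and the induction closes with no extra hypotheses needed beyond what the base case gives.
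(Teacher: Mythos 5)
Your proposal is correct and follows essentially the same route as the paper's proof: verify that $\ket{\theta_j(2^\ell)}$ is a unit $[u],[u,v_j]$-flow, then show orthogonality to the circulation basis of \lem{circulations_basis} — to the block circulations $\ket{\sigma}\ket{b_d}$ because the restriction to each block is a multiple of an optimal flow (\lem{flow_properties}), and to the $\ket{\psi_{z,x}}$ via \clm{inner-prod-p-psi} and $\sum_i(-1)^{z\cdot i}=0$ for $z\neq\bar 0$ — concluding optimality from \lemlem{flow_properties}{opt_flows_oplus_circulations}. Your Step 1 spells out the gluing and sign bookkeeping that the paper leaves as "easily seen," which is a fine addition but not a different argument.
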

\begin{proof}
Note that each $p_{ij}$ is a $[u],[u,v_j]$-flow in ${\cal N}_{2^\ell}$, as shown in \fig{p_ij}, though it is not optimal, since it doesn't spread out through the $(1, i')$ copies of ${\cal N}_{2^{\ell-1}}$ for $i'\neq i$. Intuitively, $\theta_j(2^\ell)$, which is also easily seen to be a unit $[u],[u,v_j]$-flow on ${\cal N}_{2^\ell}$, is optimal because it spreads the flow across all values of $i$. We prove that it is optimal by showing that $\ket{\theta_{j}(2^\ell)}$ is orthogonal to the circulation subspace $\mathcal{C}({\cal N}_{2^\ell})$. By \lem{flow_properties} and \lem{opt_flows_oplus_circulations}, this orthogonality implies that $\ket{\theta_{j}(2^\ell)}$ is the optimal unit flow.

By \lem{circulations_basis}, the space $\mathcal{C}({\cal N}_{2^\ell})$ has the following orthogonal basis:
\[
\{\ket{\sigma}\ket{b_d}:\sigma\in\Sigma,d\in [D]\}\cup  \left\{ \ket{\psi_{z,x}} : z,x \in \{0,1\}^{\log n}, z \neq \bar{0} \right\},
\]
where $\ket{\psi_{z,x}}=\ket{\psi_{z,x}(2^\ell)}$ is as in \thm{basis_flows_circulations},

and $\{\ket{b_1},\dots,\ket{b_D}\}$ is an orthogonal basis for the circulation space $\mathcal{C}({\cal N}_{2^{\ell - 1}})$.
All $\ket{\theta_{j}(2^\ell)}$ are composed of optimal unit flows of the ${\cal N}_{2^{\ell-1}}$ blocks and boundary states, and therefore 
orthogonal to all vectors $\ket{\sigma}\ket{b_d}$, which are circulations on these blocks. Hence, it remains to show that $\braket{\theta_{j}(2^\ell)}{\psi_{z,x}}=0$ for every $j,z,x \in \{0,1\}^{\log n}$ such that $z \neq \bar 0$. We apply \clm{inner-prod-p-psi} to get:
\begin{equation*}
    n\cdot \braket{\theta_{j}(2^\ell)}{\psi_{z,x}} = \sum_{i \in\{0,1\}^{\log n}} \braket{p_{ij}}{\psi_{z,x}}
    =\sum_{i \in\{0,1\}^{\log n}}(-1)^{z\cdot i} (c(-1)^{x\cdot j}+c'A_{x,j})=0. \qedhere
\end{equation*}
\end{proof}

\subsubsection{Basis Generation}
We show in this section how to prepare the basis from \thm{basis_flows_circulations} in $\tO (1)$ time. All states considered in this section lie in the subspace in which each edge $e \in E({\cal N}_{2^{\ell}})$ is represented by $\frac{1}{\sqrt{2}}\left( \ket{\rightarrow,e} - \ket{\leftarrow,e} \right)$. Without loss of generality, we may therefore describe states in terms of the 
canonical edge states $\{\ket{e} = \ket{\sigma,i} : e \in  E({\cal N}_{2^{\ell}})=\Sigma^\ell\times \{0,1\}^{\log n} \}$, and obtain the isomorphism with the above subspace by appending the auxiliary state $\ket{-} \;=\; \tfrac{1}{\sqrt{2}} \left( \ket{\rightarrow} - \ket{\leftarrow} \right)$ in an additional register.

Recall from \eq{E-2-ell} that edges $E_{2^\ell}$ of ${\cal N}_{2^\ell}$ are labeled $(\sigma,e_i)=(\sigma,i)$ for $\sigma\in\Sigma^{\ell}$ and $i\in\{0,1\}^{\log n}$. For any $\sigma\in\Sigma^\ell$, letting $\bar\sigma\in\{0,1,2\}^\ell$ be such that for all $t\in [\ell]$, $\sigma_t=(\bar{\sigma}_t,i)$ for some $i\in\{0,1\}^{\log n}$ (that is, $\bar\sigma$ encodes the first part of each entry of $\sigma$), we can break $E_{2^\ell}$ into $3^{\ell}$ \emph{layers}, defined, for each $\tau\in \{0,1,2\}^\ell$:
\begin{equation}\label{eq:E-sigma}
E_{\tau}:=\{(\sigma,i)\in E_{2^{\ell}}:\bar\sigma=\tau\}
=\{(\sigma,i)\in \Sigma^\ell\times \{0,1\}^{\log n}:\bar\sigma=\tau\}.
\end{equation}
A crucial step in our basis generation subroutine will be taking uniform superpositions over these layers. We first prove two lemmas about the size of these layers that will enable this.

\begin{lemma}\label{lem:layer_size} Fix $\ell\in [\log(L)]$.
    Let $\tau \in \{0,1,2\}^{\ell}$ encode a layer $E_{\tau}$ of edges in ${\cal N}_{2^\ell}$. Then $\abs{E_{\tau}} = n^{1 + \abs{\tau} - \abs{\tau}_0}$, where $\abs{\tau}_0$ denotes the number of zeros in $\tau$.
\end{lemma}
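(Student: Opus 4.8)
The plan is to prove the formula $\abs{E_\tau} = n^{1+\abs{\tau}-\abs{\tau}_0}$ by induction on $\ell$, tracking how an edge layer in ${\cal N}_{2^\ell}$ decomposes across the $2n+1$ recursive copies of ${\cal N}_{2^{\ell-1}}$. The base case is $\ell=0$ — wait, actually the lemma is stated for $\ell\in[\log(L)]$, so the natural base case is $\ell=1$. In ${\cal N}_{2}(u)={\cal N}_{2^1}(u)$, the edge set is $\Sigma\times E_1 = \Sigma\times\{0,1\}^{\log n}$, and a layer is indexed by $\tau\in\{0,1,2\}^1$, i.e. a single symbol $\tau\in\{0,1,2\}$. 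For $\tau=0$: the only $\sigma\in\Sigma$ with $\bar\sigma=0$ is $\sigma=(0,\bar 0)$, so $E_\tau = \{((0,\bar0),i):i\in\{0,1\}^{\log n}\}$ has size $n=n^{1+1-1}$, matching $\abs{\tau}=1,\abs{\tau}_0=1$. For $\tau=1$: the $\sigma$ with $\bar\sigma=1$ are exactly $(1,i')$ for $i'\in\{0,1\}^{\log n}$, so $\abs{E_\tau}=n\cdot n = n^2 = n^{1+1-0}$, matching $\abs{\tau}=1,\abs{\tau}_0=0$; and symmetrically for $\tau=2$.

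For the induction step, I would use the decomposition $E_{2^\ell}=\bigsqcup_{s\in\Sigma}E({\cal N}_{2^{\ell-1}}^s)$ from \eqref{eq:E-2-ell}, and observe that writing $\tau=\tau_1\tau'$ with $\tau_1\in\{0,1,2\}$ and $\tau'\in\{0,1,2\}^{\ell-1}$, the layer $E_\tau$ of ${\cal N}_{2^\ell}$ is a disjoint union, over all first symbols $s\in\Sigma$ with $\bar s=\tau_1$, of (copies of) the layer $E_{\tau'}$ of ${\cal N}_{2^{\ell-1}}$. The number of such $s$ is $1$ if $\tau_1=0$ (only $s=(0,\bar0)$) and $n$ if $\tau_1\in\{1,2\}$ (the symbols $(1,i')$, resp. $(2,j')$). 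Hence $\abs{E_\tau} = c_{\tau_1}\cdot\abs{E_{\tau'}}$ where $c_0=1$ and $c_1=c_2=n$, i.e. $c_{\tau_1}=n^{1-[\tau_1=0]}$. By the induction hypothesis $\abs{E_{\tau'}}=n^{1+\abs{\tau'}-\abs{\tau'}_0}$, and since $\abs{\tau}=1+\abs{\tau'}$ and $\abs{\tau}_0=[\tau_1=0]+\abs{\tau'}_0$, we get $\abs{E_\tau}=n^{1-[\tau_1=0]}\cdot n^{1+\abs{\tau'}-\abs{\tau'}_0}=n^{1+\abs{\tau}-\abs{\tau}_0}$, completing the induction.

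The only mild subtlety — and the step I'd be most careful about — is making sure the combinatorial bookkeeping of which first symbols $s\in\Sigma$ satisfy $\bar s=\tau_1$ is exactly right: $\Sigma=\{(0,\bar0)\}\cup\{(1,i):i\in\{0,1\}^{\log n}\}\cup\{(2,j):j\in\{0,1\}^{\log n}\}$, so there is a single symbol projecting to $0$ and $n$ symbols each projecting to $1$ and to $2$. Everything else is a routine exponent computation. An alternative, perhaps cleaner, presentation is fully non-recursive: an edge of a layer $E_\tau$ of ${\cal N}_{2^\ell}$ is determined by a string $\sigma\in\Sigma^\ell$ with $\bar\sigma=\tau$ together with $i\in\{0,1\}^{\log n}$; the number of $\sigma$ with $\bar\sigma=\tau$ is $\prod_{t=1}^\ell \bigl(\text{number of symbols with first coordinate }\tau_t\bigr) = 1^{\abs{\tau}_0}\cdot n^{\abs{\tau}-\abs{\tau}_0}=n^{\abs{\tau}-\abs{\tau}_0}$, and then multiplying by the $n$ choices of $i$ gives $\abs{E_\tau}=n^{1+\abs{\tau}-\abs{\tau}_0}$. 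I would likely present this direct counting argument as the main proof since it avoids setting up an induction, and it makes transparent where each factor comes from.
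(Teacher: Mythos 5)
Your inductive argument is correct and matches the paper's proof essentially step for step: same base case $\ell=1$, same decomposition of $E_{\tau_1\tau'}$ into copies of $E_{\tau'}$ indexed by the first symbols $s\in\Sigma$ with $\bar{s}=\tau_1$ (one such symbol when $\tau_1=0$, $n$ when $\tau_1\in\{1,2\}$), and the same exponent bookkeeping. Your closing direct-counting variant (multiplying the per-position symbol counts and the $n$ choices of $i$) is just the unrolled induction and is equally valid.
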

\begin{proof}
We prove the statement by induction on $\ell$. For the base case, let $\ell=1$, so we have for any $\tau\in\{0,1,2\}$:
$$|E_\tau|=|\{(\sigma,i)\in \Sigma\times\{0,1\}^{\log n}:\bar{\sigma}=\tau\}|=n|\{\sigma\in \Sigma:\bar{\sigma}=\tau\}|.$$
If $\tau=0$, the only $\sigma\in \Sigma=\{(0,\bar{0}),(1,i),(2,j):i,j\in\{0,1\}^{\log n}\}$ such that $\bar{\sigma}=\tau$ is $\sigma=(0,\bar{0})$, and so
$$|E_\tau|=n\cdot 1=n^{1+|\tau|-|\tau|_0}$$
since $|\tau|=|\tau|_0=1$. Otherwise, $\tau\in\{1,2\}$, and we have:
$$|E_\tau|=n|\{(\tau,i):i\in\{0,1\}^{\log n}\}|=n^2=n^{1+|\tau|-|\tau|_0}$$
since $|\tau|=1$ and $|\tau|_0=0$. 

For the induction step, assume $\ell>1$. 
Referring to \eq{E-sigma}, we have, for any $\tau\in\{0,1,2\}^{\ell-1}$:
\begin{equation}\label{eq:E-split}
\begin{split}
    E_{0\tau} &= \{((0,\bar{0})\sigma,i)\in E_{2^{\ell}}:\bar{\sigma}=\tau\}
=\{((0,\bar{0}),e):e\in E_{\tau}\}\\
\mbox{and for $a\in\{1,2\}$, }E_{a\tau} &= \{((a,j)\sigma,i)\in E_{2^{\ell}}:\bar{\sigma}=\tau\}
=\{((a,j),e):e\in E_{\tau},j\in\{0,1\}^{\log n}\}.
\end{split}
\end{equation}
From this, and the induction hypothesis, we have:
\begin{equation*}
\begin{split}
    |E_{0\tau}| &= |E_{\tau}| = n^{1+|\tau|-|\tau|_0}=n^{1+|\tau|+1-(|\tau|_0+1)}=n^{1+|0\tau|-|0\tau|_0}\\
\mbox{and for $a\in\{1,2\}$, }|E_{a\tau}| &= n|E_\tau|= n\cdot n^{1+|\tau|-|\tau|_0} = n^{1+(1+|\tau|)-|\tau|_0} = n^{1+|a\tau|-|a\tau|_0}.
\end{split}
\end{equation*}
Thus, for any $\tau'\in \{0,1,2\}^{\ell}$, we can conclude $|E_{\tau'}|=n^{1+|\tau'|-|\tau'|_0}$.
\end{proof}

\begin{lemma}\label{lem:sum_of_level_sizes}
    For all $\ell\in [\log(L)]$, $\sum_{\tau\in\{0,1,2\}^{\ell}}\frac{1}{|E_\tau|}=\frac{(n+2)^{\ell}}{n^{\ell+1}}=\frac{1}{n}\left(1+\frac{2}{n}\right)^{\ell}$.
\end{lemma}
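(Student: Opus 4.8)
The plan is to reduce the sum to a closed form by plugging in the explicit layer sizes from \lem{layer_size} and then recognizing the resulting expression as a product over the $\ell$ coordinates of $\tau$ (an inductive argument would work equally well, but the product form is cleaner). First I would observe that every $\tau$ appearing in the sum has $\abs{\tau}=\ell$, so \lem{layer_size} gives $\abs{E_\tau}=n^{1+\ell-\abs{\tau}_0}$, hence
\[
\frac{1}{\abs{E_\tau}}=n^{-1-\ell+\abs{\tau}_0}=\frac{n^{\abs{\tau}_0}}{n^{\ell+1}}.
\]
Summing over all $\tau\in\{0,1,2\}^\ell$ and pulling out the constant factor $n^{-(\ell+1)}$, it remains to evaluate $\sum_{\tau\in\{0,1,2\}^\ell}n^{\abs{\tau}_0}$.

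Next I would write this sum as a product over coordinates. Since $n^{\abs{\tau}_0}=\prod_{t=1}^{\ell}n^{[\tau_t=0]}$, distributing the sum over the independent coordinates yields
\[
\sum_{\tau\in\{0,1,2\}^\ell}n^{\abs{\tau}_0}=\prod_{t=1}^{\ell}\Bigl(\sum_{a\in\{0,1,2\}}n^{[a=0]}\Bigr)=\prod_{t=1}^{\ell}(n+1+1)=(n+2)^\ell.
\]
Combining the two displays gives $\sum_{\tau}\frac{1}{\abs{E_\tau}}=\frac{(n+2)^\ell}{n^{\ell+1}}$, and factoring out $1/n$ rewrites this as $\frac{1}{n}\bigl(1+\frac{2}{n}\bigr)^\ell$, which is exactly the claimed identity.

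There is no real obstacle here: the only point requiring a moment's care is that $\abs{\tau}$ in the exponent from \lem{layer_size} is the \emph{length} $\ell$ of the string $\tau$ (so it is the same for every term of the sum), as opposed to $\abs{\tau}_0$, the number of zeros, which is what actually varies and drives the geometric factor $(1+2/n)^\ell$. Everything else is a routine rearrangement, and an alternative one-line induction on $\ell$ using the split $E_{0\tau'},E_{1\tau'},E_{2\tau'}$ from \eq{E-split} would give the same recursion $\sum_{\tau\in\{0,1,2\}^\ell}\frac{1}{\abs{E_\tau}}=\frac{n+2}{n}\sum_{\tau'\in\{0,1,2\}^{\ell-1}}\frac{1}{\abs{E_{\tau'}}}$ with base case $\ell=1$ giving $\frac{1}{n}+\frac{1}{n^2}+\frac{1}{n^2}=\frac{n+2}{n^2}$.
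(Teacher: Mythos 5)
Your proof is correct and follows essentially the same route as the paper's: both substitute the layer-size formula from Lemma~\ref{lem:layer_size} and evaluate $\sum_{\tau}n^{\abs{\tau}_0}=(n+2)^{\ell}$, the paper by grouping strings according to their number of zeros and invoking the binomial theorem, you by factoring the sum coordinate-wise --- an equivalent computation. No gaps.
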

\begin{proof}
We first use \lem{layer_size} to compute:
    \begin{align*}
        \sum_{\tau\in\{0,1,2\}^{\ell}}\frac{1}{|E_\tau|} &= \sum_{\tau\in\{0,1,2\}^{\ell}}\frac{1}{n^{1+|\tau|-|\tau|_0}}
        =\sum_{t=0}^{\ell}\binom{\ell}{t}2^{\ell-t}n^{t-1-\ell}.
    \end{align*}
    Above, we used the fact that for any $t\in\{0,\dots,\ell\}$, there are $\binom{\ell}{t}2^{\ell-t}$ strings in $\{0,1,2\}^{\ell}$ with exactly $t$ 0s. Continuing, we have:
    \begin{align*}
        \sum_{\tau\in\{0,1,2\}^{\ell}}\frac{1}{|E_\tau|} 
        &=\frac{1}{n^{\ell+1}}\sum_{t=0}^{\ell}\binom{\ell}{t}2^{\ell-t}n^{t}
        = \frac{1}{n^{\ell+1}}(n+2)^{\ell},
    \end{align*}
    by the binomial theorem. The result follows.
\end{proof}

Next, we describe a subroutine for generating superpositions over all optimal flows, which will be a key subroutine in our basis generation.

\begin{lemma}\label{lem:sum_of_flows}
For all $\ell \in [\log(L)]$, a map that acts as
$$\ket{0}\mapsto \propto \sum_{i\in \{0,1\}^{\log n}}\ket{\bar\theta_i(2^\ell)}$$
can be implemented in $\tO(1)$ steps.
\end{lemma}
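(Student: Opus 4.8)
The plan is to proceed by induction on $\ell$, using the recursive structure of $\mathcal{N}_{2^\ell}$ together with the explicit form of the optimal flow from \lem{optimal_flow_form} and the layer-size bookkeeping from \lem{layer_size} and \lem{sum_of_level_sizes}. The target state is
$\ket{\Theta_\ell}\propto\sum_{i}\ket{\bar\theta_i(2^\ell)}$, and the strategy is to express $\ket{\Theta_\ell}$ in terms of $\ket{\Theta_{\ell-1}}$ so that, given a $\tO(1)$-step implementation of the latter, we can build the former in $\tO(1)$ additional steps.

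First I would unwind the recursion: by \lem{optimal_flow_form}, $\ket{\bar\theta_i(2^\ell)}=\frac1n\sum_{i'}\ket{p_{i'i}(2^\ell)}$, and $\ket{p_{i'i}(2^\ell)}=\ket{0,\bar0}\ket{\bar\theta_{i'}(2^{\ell-1})}+\ket{1,i'}\ket{\bar\theta_i(2^{\ell-1})}+\ket{2,i}\ket{\bar\theta_{i'}(2^{\ell-1})}$ (dropping the irrelevant boundary terms $\ket{\leftarrow,[u]},\ket{\rightarrow,[u,v_j]}$, since $\ket{\bar\theta}$ is the ``cropped'' version of \eq{cropped}). Summing over $i$ and $i'$, the three blocks contribute, respectively, $\ket{0,\bar0}\otimes\bigl(n\sum_{i'}\ket{\bar\theta_{i'}(2^{\ell-1})}\bigr)$, $\sum_{i'}\ket{1,i'}\otimes\bigl(\sum_i\ket{\bar\theta_i(2^{\ell-1})}\bigr)$, and $\sum_i\ket{2,i}\otimes\bigl(\sum_{i'}\ket{\bar\theta_{i'}(2^{\ell-1})}\bigr)$. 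In each case the second tensor factor is proportional to $\ket{\Theta_{\ell-1}}$, and the first factor is either $\ket{0,\bar0}$ or a uniform superposition over $\ket{1,i'}$ (resp. $\ket{2,i}$) — i.e. a uniform superposition over one of the layer-prefix symbols in $\Sigma$. So, up to normalization, $\ket{\Theta_\ell}$ is (a constant multiple of $\ket{0,\bar0}$ plus uniform superpositions over the two index registers) tensored with $\ket{\Theta_{\ell-1}}$; concretely it has the shape $\bigl(\alpha\ket{0,\bar0}+\beta\sum_{i}\ket{1,i}+\beta\sum_{i}\ket{2,i}\bigr)\otimes\ket{\Theta_{\ell-1}}$ for explicit constants $\alpha,\beta$ (here $\alpha=n\beta$).

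The implementation is then: recursively prepare $\ket{\Theta_{\ell-1}}$ on the inner registers in $\tO(1)$ steps by the inductive hypothesis; independently prepare the small state $\alpha\ket{0,\bar0}+\beta\sum_i\ket{1,i}+\beta\sum_i\ket{2,i}$ on the $\Sigma$-prefix register; and tensor them. The small state is a superposition over $O(n)$ basis states with partial sums that are trivial to compute, so \lem{superposition_preparation} (with \rem{superposition_preparation_012} for the size-3 first coordinate) prepares it in $\tO(1)$ time. Because the recursion depth is $\ell\le\log L=\tO(1)$ and each level adds only $\tO(1)$ work, the total cost is $\tO(1)$; the base case $\ell=0$, where $\ket{\bar\theta_i(2^0)}=\ket{e_i}$ and the sum is just the uniform superposition $\sum_i\ket{i}$, is immediate (a layer of Hadamards, or \lem{superposition_preparation} again). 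One subtlety to check carefully is consistency of the sign conventions — the $(2,j)$-labeled copy carries the sign $(-1)^{|\sigma|_2}$ from \eq{sign-explainer-function}, and I would verify that the $+\ket{2,j}\ket{\bar\theta_i(2^{\ell-1})}$ written in \lem{optimal_flow_form} and \thm{basis_flows_circulations} already accounts for this (it does: the minus sign in $-\ket{\bar\theta^{2j}_i}$ cancels the reversal sign), so no extra phases are needed.

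The main obstacle I expect is purely bookkeeping rather than conceptual: tracking the normalization constant as a function of $\ell$ and confirming it stays bounded away from $0$ (so the preparation is well-conditioned) — this is where \lem{layer_size} and \lem{sum_of_level_sizes} come in, since $\norm{\ket{\Theta_\ell}}^2=\sum_\tau n^2/|E_\tau|\cdot(\text{stuff})$ unwinds to the closed form $\tfrac1n(1+\tfrac2n)^\ell$ computed there, which is $\Theta(1/n)$ and hence harmless. Everything else — that the claimed state really is $\sum_i\ket{\bar\theta_i(2^\ell)}$ up to scalar, and that each preparation step costs $\tO(1)$ — follows directly from \lem{optimal_flow_form}, \lem{superposition_preparation}, and the recursive construction of \sec{graph-construction}.
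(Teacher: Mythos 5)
Your proposal is correct, and it takes a genuinely different route from the paper. The key computation you do checks out: summing the recursion $\sum_j\ket{\bar\theta_j(2^\ell)}=\frac1n\sum_{i,j}\ket{p_{ij}(2^\ell)}$ really does collapse to the tensor product $\bigl(\ket{0,\bar0}+\frac1n\sum_i\ket{1,i}+\frac1n\sum_j\ket{2,j}\bigr)\otimes\sum_i\ket{\bar\theta_i(2^{\ell-1})}$, so the target state factorizes level by level into $\ell$ identical single-symbol factors tensored with a uniform superposition over $\{0,1\}^{\log n}$, each preparable in $O(\log n)$ gates, for $O(\log^2 n)=\tO(1)$ total; your sign check for the $(2,j)$ copies is also right. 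The paper instead first proves by induction the closed-form layer decomposition $\sum_j\ket{\bar\theta_j(2^\ell)}=n\sum_{\tau\in\{0,1,2\}^\ell}\frac{1}{|E_\tau|}\sum_{e\in E_\tau}\ket{e}$, then prepares the state in two stages: a Grover--Rudolph-style preparation (\lem{superposition_preparation} with \rem{superposition_preparation_012}) of $\sum_\tau|E_\tau|^{-1/2}\ket{\tau}$ using explicitly computed prefix partial sums from \lem{layer_size} and \lem{sum_of_level_sizes}, followed by expanding each $\ket{\tau}$ into a uniform superposition over its layer. Your factorization observation buys a more elementary argument that sidesteps the partial-sum machinery entirely (and makes the normalization worry moot, since each tensor factor is normalized independently); the paper's layer-wise closed form is slightly heavier here but is reused downstream, e.g.\ to derive the norm $N_{\bar0}(2^\ell)$ needed in \lem{Fourier_flows} and \lem{flow_preparation}, so if you adopted your version you would still want to record that closed form (which your product state reproduces immediately).
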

\begin{proof}
We first show by induction that 
\begin{equation}\label{eq:bar-theta_j-eqs}
\sum_{j\in\{0,1\}^{\log n}}\ket{\bar\theta_j(2^\ell)} = n\cdot \sum_{\tau \in \{0,1,2\}^{\ell}} \frac{1}{\abs{E_{\tau}}} \sum_{e \in E_{\tau}} \ket{e}.
\end{equation}
Using the definition in the statement of \thm{basis_flows_circulations}, we can write
\[
\sum_{j\in\{0,1\}^{\log n}}^n\ket{\bar\theta_j(2^\ell)} = \frac{1}{n} \sum_{j\in \{0,1\}^{\log n}} \sum_{i\in\{0,1\}^{\log n}} \ket{p_{ij}(2^\ell)}.
\]
In the base case of ${\cal N}_{2}$ ($\ell=1$), referring to the definition of $\ket{p_{ij}}$ in \thm{basis_flows_circulations}, we have 
\begin{align*}
    \ket{p_{ij}(2)} &= \ket{\bar{\theta}_i^0(1)}+\ket{\bar\theta_j^{1i}(1)}-\ket{\bar\theta_i^{2j}(1)}
    = \ket{0,\bar{0}}\ket{\bar\theta_i}+\ket{1,i}\ket{\bar\theta_j}+\ket{2,j}\ket{\bar\theta_i}\\
    &= \ket{0,\bar{0}}\ket{e_i}+\ket{1,i}\ket{e_j}+\ket{2,j}\ket{e_i}
\end{align*}
by \eq{sign-explainer-function}, and the fact that an optimal $[u],[u,v_i]$-flow in ${\cal N}_1$ simply assigns a unit of flow to $e_i$. 
$$\ket{p_{ij}(2)}=\ket{e_i^0}+\ket{e_j^{1i}}+\ket{e_i^{2j}}.$$
We thus get
\begin{align*}
    \sum_{j\in\{0,1\}^{\log n}}\ket{\bar\theta_j(2)} &= \frac{1}{n} \sum_{j\in\{0,1\}^{\log n}} \sum_{i\in\{0,1\}^{\log n}} \ket{p_{ij}(2)}\\
    &= n \cdot \frac{1}{n} \sum_{i\in\{0,1\}^{\log n}} \ket{0,\bar{0},e_i} +  n \cdot \frac{1}{n^2} \sum_{i,j\in\{0,1\}^{\log n}} \ket{1,i,e_j} + n \cdot \frac{1}{n^2} \sum_{i,j\in\{0,1\}^{\log n}} \ket{2,j,e_i}.
\end{align*}
Since the layer encoded by $0$ consists of $n$ edges $E_{0}=\{((0,\bar{0}),e_i): i \in \{0,1\}^{\log n}\}$, and the layers encoded by $1$ and $2$ consist of $n^2$ edges $E_{1}=\{{((1,i),e_j)}: i, j \in\{0,1\}^{\log n}\}$ and $E_2=\{{((2,j),e_i)}: i, j \in\{0,1\}^{\log n}\}$ respectively, we conclude that the equality holds in the base case.

Assume, for the induction step, that the equality from \eq{bar-theta_j-eqs} holds for ${\cal N}_{2^{\ell-1}}$. We have
\begin{multline*}
    \sum_{j\in\{0,1\}^{\log n}}\ket{\bar\theta_j(2^\ell)} = \frac{1}{n} \!\!\sum_{i,j\in\{0,1\}^{\log n}} \ket{p_{ij}(2^\ell)}
    = \frac{1}{n}\!\! \sum_{i,j\in\{0,1\}^{\log n}}  \left( \ket{\bar\theta^0_i(2^{\ell-1})} + \ket{\bar\theta^{1i}_j(2^{\ell-1})} - \ket{\bar\theta^{2j}_i(2^{\ell-1})} \right)\\
    = \sum_{i\in\{0,1\}^{\log n}} \ket{0,\bar{0}}\ket{\bar\theta_i(2^{\ell-1})} + \frac{1}{n} \sum_{i,j\in\{0,1\}^{\log n}} \ket{1,i}\ket{\bar\theta_j(2^{\ell-1})} + \frac{1}{n} \sum_{i,j\in\{0,1\}^{\log n}}\ket{2,j}\ket{\bar\theta_i(2^{\ell-1})},
\end{multline*}
where we again used \eq{sign-explainer-function}. 
We can use the induction hypothesis on each of the three terms to get:
\begin{multline*}
    \sum_{j\in\{0,1\}^{\log n}}\ket{\bar\theta_j(2^\ell)}
    = n\ket{0,\bar{0}} \sum_{\tau \in \{0,1,2\}^{\ell - 1}} \frac{1}{\abs{E_{\tau}}} \sum_{e \in E_{\tau}} \ket{e}
    + \sum_{i\in\{0,1\}^{\log n}} \ket{1,i} \sum_{\tau \in \{0,1,2\}^{\ell - 1}} \frac{1}{\abs{E_{\tau}}} \sum_{e \in E_{\tau}} \ket{e} \\
    + \sum_{j\in\{0,1\}^{\log n}} \ket{2,j} \sum_{\tau \in \{0,1,2\}^{\ell - 1}} \frac{1}{\abs{E_{\tau}}} \sum_{e \in E_{\tau}} \ket{e}.
\end{multline*}
From \eq{E-split}, and \lem{layer_size}, it follows that 
\begin{align*}
    \sum_{j\in\{0,1\}^{\log n}}\ket{\bar\theta_j(2^\ell)}
    &= \sum_{\tau \in \{0,1,2\}^{\ell - 1}} \frac{1}{n^{|\tau|-|\tau|_0}} \sum_{e \in E_{\tau}} \ket{(0,\bar{0}),e}
    \\
    &\qquad\qquad\qquad\qquad\qquad\quad+\sum_{a\in\{1,2\}}\sum_{\tau \in \{0,1,2\}^{\ell - 1}} \frac{1}{n^{1+|\tau|-|\tau|_0}} \sum_{i\in\{0,1\}^{\log n},e \in E_{\tau}} \ket{(a,i),e}\\
    &= \sum_{\tau \in \{0,1,2\}^{\ell - 1}} \frac{1}{n^{|0\tau|-|0\tau|_0}} \sum_{e' \in E_{0\tau}} \ket{e'}
    + \sum_{a\in\{1,2\}}\sum_{\tau \in \{0,1,2\}^{\ell - 1}} \frac{1}{n^{|a\tau|-|a\tau|_0}} \sum_{e' \in E_{a\tau}} \ket{e'} \\
    &= n \sum_{\tau'\in \{0,1,2\}^{\ell}}\frac{1}{|E_{\tau'}|}\sum_{e'\in E_{\tau'}}\ket{e'},
\end{align*}
as desired.
Next, we show how to prepare this state in two steps.

\paragraph{Step 1.} Prepare a superposition of layer names with the correct amplitudes:
\[
\frac{1}{\sqrt{\sum_{\tau' \in \{0,1,2\}^{\ell}} \frac{1}{\abs{E_{\tau'}}}}} \sum_{\tau \in \{0,1,2\}^{\ell}} \frac{1}{\sqrt{\abs{E_{\tau}}}} \ket{\tau}.
\]
By \rem{superposition_preparation_012}, this superposition can be prepared in $\tO(1)$ time if the amplitudes $$\frac{1}{\sqrt{\sum_{\tau' \in \{0,1,2\}^{\ell}} \frac{1}{\abs{E_{\tau'}}}}} \times \frac{1}{\sqrt{\abs{E_{\tau}}}}$$ and partial sums $$S(p) = \sum_{\tau:\tau \text{ has prefix } p} \frac{1}{\abs{E_{\tau}}}$$ can be computed in $\tO (1)$ time. 
We find closed-form expressions for these quantities, which shows that they can indeed be evaluated efficiently. We start by applying \lem{layer_size} and \lem{sum_of_level_sizes} to compute the amplitudes:
$$\frac{1}{\sqrt{\sum_{\tau' \in \{0,1,2\}^{\ell}} \frac{1}{\abs{E_{\tau'}}}}} \frac{1}{\sqrt{\abs{E_{\tau}}}} = \left( \frac{(n+2)^{\ell}}{n^{\ell+1}}\right)^{-1/2} \left( n^{1 + \ell - \abs{\tau}_0} \right)^{-1/2}
= \sqrt{\frac{n^{|\tau|_0}}{(n+2)^{\ell}}}.$$
Finally, we compute the partial sums, again using \lem{layer_size} and \lem{sum_of_level_sizes}. Let $p \in \{ 0,1,2 \}^k$ for $k \in [\ell]$ be a fixed prefix. Then 
\begin{align*}
    S(p) &= \sum_{\tau:\tau \text{ has prefix } p} \frac{1}{\abs{E_{\tau}}}
    = \sum_{\tau' \in \{ 0,1,2 \}^{\ell -k}} \frac{1}{\abs{E_{p \tau'}}}
    = \sum_{\tau' \in \{ 0,1,2 \}^{\ell -k}} \frac{1}{n^{1 + \ell - (\abs{p}_0 + \abs{\tau'}_0)}}\\
    &= \frac{1}{n^{k - \abs{p}_0}} \sum_{\tau' \in \{ 0,1,2 \}^{\ell -k}} \frac{1}{n^{1 + (\ell - k) - \abs{\tau'}_0}}
    = \frac{1}{n^{k - \abs{p}_0}} \sum_{\tau' \in \{ 0,1,2 \}^{\ell -k}} \frac{1}{\abs{E_{\tau'}}}\\
    &= \frac{1}{n^{k - \abs{p}_0 +1}} \left(1+\frac{2}{n}\right)^{\ell - k}
    = \frac{(n+2)^{\ell-k}}{n^{\ell+1-|p|_0}}.
\end{align*}

\paragraph{Step 2.} Map each $\ket{\tau}$ to the uniform superposition over the edges in the layer $E_{\tau}$.  
Since these edges are encoded as $\ket{\sigma, i}$ with $\sigma \in \Sigma^{\ell}$, $i \in \{0,1\}^{\log n}$, and $\bar{\sigma} = \tau$, this mapping can be performed in $\tO(1)$ time, as follows. For each $k\in [\ell]$, if $\tau_k\in\{1,2\}$, generate a uniform superposition over $j\in\{0,1\}^{\log n}$, to get a superposition over $\sigma_k$ such that $\bar{\sigma}_k=\tau_k$. Finally, generate a uniform superposition over $i\in\{0,1\}^{\log n}$. Letting $S(\tau_k)=\{\bar{0}\}$ if $\tau_k=0$ and $\{0,1\}^{\log n}$ otherwise, this mapping acts as:
\begin{align*}
\ket{\tau}=\ket{\tau_1}\otimes\dots\otimes\ket{\tau_\ell}
&\mapsto \left(\bigotimes_{k=1}^\ell \ket{\tau_k}\sum_{j\in S(\tau_k)}\frac{1}{\sqrt{|S(\tau_k)|}}\ket{j} \right) \otimes \sum_{i\in\{0,1\}^{\log n}}\frac{1}{\sqrt{n}}\ket{i} \\
&= \frac{1}{\sqrt{n^{1+\ell-|\tau|_0}}}\sum_{e\in E_\tau}\ket{e} = \frac{1}{\sqrt{|E_\tau|}}\sum_{e\in E_\tau}\ket{e}.
\end{align*}
We thus end up with a state proportional to 
\begin{equation}\label{eq:sum_of_flows_normalized}
     \sum_{\tau \in \{0,1,2\}^{\ell}} \frac{1}{\sqrt{\abs{E_{\tau}}}} \sum_{e\in E_{\tau}} \frac{1}{\sqrt{\abs{E_{\tau}}}} \ket{e}
        = \sum_{\tau \in \{0,1,2\}^{\ell}} \frac{1}{\abs{E_{\tau}}} \sum_{e\in E_{\tau}}  \ket{e},
\end{equation}
which is proportional to $\sum_{j\in\{0,1\}^{\log n}}\ket{\bar\theta_j(2^\ell)}$
by \eq{bar-theta_j-eqs}.
\end{proof}

Building on the previous lemma, we describe how to generate states that will shortly be shown to make up part of the states $\ket{\psi_{z,x}(2^\ell)}$ from the basis in \thm{basis_flows_circulations}.

\begin{lemma}\label{lem:Fourier_flows}
    For all $\ell > 0$, there is a circuit $C_{2^\ell}$ that acts, for all $x\in \{0,1\}^{\log n}$, as
    $$\ket{x}\mapsto \propto \sum_{j\in\{0,1\}^{\log n}}(-1)^{x\cdot j}\ket{\bar\theta_j(2^\ell)}$$
    and uses $\tO(1)$ gates.
\end{lemma}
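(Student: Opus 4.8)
The plan is to mimic the two-step structure of the proof of \lem{sum_of_flows}, but carrying along the Fourier phases $(-1)^{x\cdot j}$ throughout. First I would establish a closed-form ``layer expansion'' analogous to \eq{bar-theta_j-eqs}: I claim that
\[
\sum_{j\in\{0,1\}^{\log n}}(-1)^{x\cdot j}\ket{\bar\theta_j(2^\ell)}
= n\sum_{\tau\in\{0,1,2\}^\ell}\frac{1}{|E_\tau|}\sum_{(\sigma,i)\in E_\tau}(-1)^{x\cdot g_\tau(\sigma,i)}\ket{\sigma,i},
\]
for an appropriate ``phase index'' function $g_\tau$ that reads off, from the string $\sigma$ (and $i$), which of the index variables gets hit by $x$. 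Concretely, since $\ket{p_{ij}(2^\ell)}=\ket{0,\bar 0}\ket{\bar\theta_i(2^{\ell-1})}+\ket{1,i}\ket{\bar\theta_j(2^{\ell-1})}+\ket{2,j}\ket{\bar\theta_i(2^{\ell-1})}$ and $\ket{\bar\theta_j(2^\ell)}=-\ket{\leftarrow,[u]}+\frac1n\sum_i\ket{p_{ij}(2^\ell)}+\ket{\rightarrow,[u,v_j]}$ (dropping the boundary kets, which don't enter $\ket{\bar\theta}$), the sum over $j$ weighted by $(-1)^{x\cdot j}$ distributes over the three copies: in the ``$0$'' copy the $j$-dependence sits only inside $\ket{\bar\theta_i(2^{\ell-1})}$ summed freely over $i$ with no $x$-phase on this level but an $x$-phase inherited recursively; in the ``$1i$'' copy the $j$ appears as the \emph{argument} $\ket{\bar\theta_j(2^{\ell-1})}$, so the $(-1)^{x\cdot j}$ becomes exactly the object produced by the circuit $C_{2^{\ell-1}}$; and in the ``$2j$'' copy the $j$ appears as the \emph{label} $\ket{2,j}$, so the phase $(-1)^{x\cdot j}$ lands on the label register directly. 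This gives a clean recursion, and I'd prove the displayed layer identity by induction on $\ell$ exactly as in \lem{sum_of_flows}, the base case $\ell=1$ being $\ket{p_{ij}(2)}=\ket{0,\bar 0,e_i}+\ket{1,i,e_j}+\ket{2,j,e_i}$ so that $\sum_j(-1)^{x\cdot j}\ket{\bar\theta_j(2)}$ puts the phase on $e_j$ in the ``$0$''-block, trivially ($x$-free) on the ``$1i$''-block's $e_j$ wait — more carefully, on the $j$-register wherever it lives.

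Given the layer identity, Step 2 is the implementation. The key point is that, reading off the structure above, each index register $j\in\{0,1\}^{\log n}$ that carries an $x$-phase is prepared not as a uniform superposition $\frac1{\sqrt n}\sum_j\ket j$ but as $\frac1{\sqrt n}\sum_j(-1)^{x\cdot j}\ket j$, i.e.\ a Hadamard transform applied to $\ket x$ (or, in the layer-name picture, a controlled Hadamard depending on whether $\tau_k\in\{1,2\}$). So I would: (i) prepare the layer-name superposition $\sum_\tau |E_\tau|^{-1/2}\ket\tau$ exactly as in Step 1 of \lem{sum_of_flows} — the amplitudes and partial sums are identical and were already shown computable in $\tO(1)$ time; (ii) map $\ket\tau$ to the appropriate signed superposition over $E_\tau$, where for each coordinate $k$ with $\tau_k\in\{1,2\}$ we inject $\frac1{\sqrt n}\sum_j(-1)^{x\cdot j_k^{(\tau)}}\ket j$ via a Hadamard on a fresh register (controlled on $\tau_k\neq 0$), and for the final $i$-register we inject a uniform or signed superposition according to whether $i$ carries an $x$-phase in $g_\tau$. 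Each of these is $O(\log n)$ one- and two-qubit gates, and there are $O(\ell)=\tO(1)$ coordinates, so the whole circuit $C_{2^\ell}$ uses $\tO(1)$ gates. The recursion on $\ell$ is ``unrolled'' into this flat layer construction, so there is no depth-$d$ blowup — this is exactly the advantage of having the closed-form layer expansion rather than literally calling $C_{2^{\ell-1}}$ recursively.

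The main obstacle, and the step I'd be most careful about, is pinning down the phase-index function $g_\tau$ correctly — that is, verifying that when the recursion is fully unrolled, the total phase accumulated on a basis edge $\ket{\sigma,i}$ is $(-1)^{x\cdot j^\ast}$ for \emph{exactly one} coordinate $j^\ast$ read off from $\sigma$ (and which one it is), rather than a product of several phases or none. This is precisely where the asymmetry between the three copies matters: in the ``$1i$'' copy the summation variable $j$ is consumed as the \emph{root index} of the smaller network (so its phase is produced one level down), whereas in the ``$2j$'' copy $j$ is consumed as a \emph{label} (so its phase is produced at this level), and in the ``$0$'' copy there is no fresh $j$ at all at this level. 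Getting the bookkeeping of ``which level contributes the $x$-phase for which $\sigma$'' right — and checking it is consistent with \eq{sign-explainer-function}'s sign convention $(-1)^{|\sigma|_2}$, which is orthogonal to (and does not interfere with) the Fourier phase — is the crux. Once that is settled, both the layer identity and the gate count follow by the same inductive template as \lem{sum_of_flows} and \lem{layer_size}.
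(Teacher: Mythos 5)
Your core observation --- that distributing $(-1)^{x\cdot j}$ over the three types of copies makes the $0$-copy term vanish for $x\neq\bar 0$, pushes the phase into the recursive flow argument in the $(1,i)$-copies, and puts it directly on the label register in the $(2,j)$-copies --- is exactly the engine of the paper's proof, which implements it as a literal recursion: $C_{2^\ell}$ prepares a weighted superposition of the branches $\ket{1}$ and $\ket{2}$ (with an angle $\alpha$ computed from closed-form norms $N_x$ and $N_{\bar 0}$), performs a controlled swap and Hadamard, and calls $C_{2^{\ell-1}}$ on $\ket{x}$ or $\ket{\bar 0}$. Your plan to instead unroll the recursion into a flat layer expansion is a reasonable alternative route, but the identity you claim is false for $x\neq\bar 0$, and the implementation step built on it fails. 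Because the $0$-copy term vanishes at \emph{every} level of the recursion (not just the top), the state $\sum_j(-1)^{x\cdot j}\ket{\bar\theta_j(2^\ell)}$ has zero amplitude on every layer $E_\tau$ with $\tau$ not of the form $1^k2\tau'$ or $1^\ell$; on the surviving layers the phase sits on the label of $\sigma_{k+1}=(2,j)$ (respectively on the final edge index when $\tau=1^\ell$). Already for $\ell=1$ and $x\neq\bar 0$ the layer $E_0$ carries amplitude $0$, not $\pm n/|E_0|$. Hence the amplitudes and partial sums $S(p)$ are \emph{not} identical to those in Step 1 of \lem{sum_of_flows}: preparing $\propto\sum_{\tau}|E_\tau|^{-1/2}\ket{\tau}$ as there and then applying your signed-superposition map would leave spurious weight on the dead layers and yield a state that is not proportional to the target.

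The gap is repairable within your framework: one can check that on the supported layers the per-edge magnitude is indeed $n/|E_\tau|$, so what is missing is a re-derivation of the partial sums restricted to $\{1^k2\tau'\}\cup\{1^\ell\}$ (these still admit closed forms by summing over $k$ and invoking \lem{layer_size}); this restricted normalization is precisely the information the paper instead packages into the quantities $N_{\bar 0}(2^{\ell-1})$ and $N_x(2^{\ell-1})$ of \eq{sum_of_flows_norm} and \eq{N_xL} to set the branch rotation. Note also that a single formula covering all $x$ obscures the fact that $\ket{\Theta_{\bar 0}}$ and $\ket{\Theta_x}$ for $x\neq\bar 0$ have different layer supports; the paper treats $x=\bar 0$ as a separate case handled by \lem{sum_of_flows}. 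Until the restricted layer superposition (or an equivalent branch weighting) is worked out, the proposal does not yet constitute a proof, although the phase-routing insight and the flat-unrolling strategy are both sound.
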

\begin{proof}
We start by noticing that if $x = \bar 0$ then the desired state is $\sum_{j\in\{0,1\}^{\log n}}\ket{\bar\theta_j(2^\ell)}$. By \lem{sum_of_flows}, a state proportional to this one can be prepared in $\tO (1)$. Hence, we assume that $x \neq \bar 0$ for the rest of the proof. We prove the statement by induction. Consider ${\cal N}_{2}$ for the base case. For a fixed $x \neq \bar 0$, we can write the following.
\begin{equation}\label{eq:Fourier_flows_base}
\begin{split}
    \sum_{j\in\{0,1\}^{\log n}}\!\!\!\!(-1)^{x\cdot j}\ket{\bar\theta_j(2)} &= \sum_{j\in\{0,1\}^{\log n}}\!\!\!\!(-1)^{x\cdot j} \sum_{i\in\{0,1\}^{\log n}} \left(\ket{e^0_i} + \ket{e^{1i}_j} - \ket{e^{2j}_i}\right)\\
    &= \underbrace{\sum_{j\in\{0,1\}^{\log n}}\!\!\!\!(-1)^{x\cdot j}}_{=0} \sum_{i\in\{0,1\}^{\log n}}\!\!\!\! \ket{e^0_i} +\!\!\!\! \sum_{i,j\in\{0,1\}^{\log n}}\!\!\!\! (-1)^{x\cdot j} \ket{e^{1i}_j} -\!\!\!\! \sum_{i,j\in\{0,1\}^{\log n}} \!\!\!\!(-1)^{x\cdot j} \ket{e^{2j}_i}\\
    &= \sum_{i,j\in\{0,1\}^{\log n}} \!\!\!\!(-1)^{x\cdot j} \ket{1,i}\ket{j} + \sum_{i,j\in\{0,1\}^{\log n}}  \!\!\!\!(-1)^{x\cdot j} \ket{2,j}\ket{i}\\
    &\propto \ket{1}H^{\otimes \log n}\ket{\bar{0}}H^{\otimes \log n}\ket{x}+\ket{2} H^{\otimes \log n}\ket{x}H^{\otimes \log n}\ket{\bar{0}}.
\end{split}
\end{equation}
We used the fact that edges are encoded as $\ket{e^{1i}_j} = \ket{1, i}\ket{j}$ and $\ket{e^{2j}_i} = -\ket{2, j}\ket{i}$ (see \eq{sign-explainer}).
Thus, to prepare this state from $\ket{x}$, first make a uniform superposition over $\ket{1}$ and $\ket{2}$, and then swap the second and third register controlled on the first register:
$$\ket{0,\bar{0}}\ket{x} \mapsto \frac{1}{\sqrt{2}}\ket{1,\bar{0}}\ket{x}+\frac{1}{\sqrt{2}}\ket{2,\bar{0}}\ket{x}
\mapsto \frac{1}{\sqrt{2}}\ket{1}\ket{\bar{0}}\ket{x}+\frac{1}{\sqrt{2}}\ket{2}\ket{x}\ket{\bar{0}}.$$
We can complete the computation by applying Hadamards to all but the first register. This can be done in $O(\log n)$ gates. 

Assume, for the induction step, that there is a circuit $C_{2^{\ell-1}}$ with $\tO(1)$ gates that implements 
$$\ket{x}\mapsto \frac{\sum_{j\in\{0,1\}^{\log n}}(-1)^{x\cdot j}\ket{\bar\theta_j(2^{\ell-1})}}{\norm{\sum_{j\in\{0,1\}^{\log n}}(-1)^{x\cdot j}\ket{\bar\theta_j(2^{\ell-1})}}}$$
for any $x$.
We rewrite the desired state analogously to the base case (all sums below are over $\{0,1\}^{\log n}$).
\begin{equation*}
\begin{split}
    \sum_{j\in\{0,1\}^{\log n}}(-1)^{x\cdot j}\ket{\bar\theta_j(2^\ell)} &= \sum_{j}(-1)^{x\cdot j} \sum_{i} \left(\ket{\bar\theta^0_i(2^{\ell-1})} + \ket{\bar\theta^{1i}_j(2^{\ell-1})} - \ket{\bar\theta^{2j}_i(2^{\ell-1})}\right)\\
    &= \underbrace{\sum_{j}(-1)^{x\cdot j}}_{=0} \sum_{i} \ket{\bar\theta^0_i(2^{\ell-1})} + \sum_{j}\sum_i (-1)^{x\cdot j} \ket{\bar\theta^{1i}_j(2^{\ell-1})} - \sum_{j}\sum_i (-1)^{x\cdot j} \ket{\bar\theta^{2j}_i(2^{\ell-1})}\\
    &= \sum_{j} \sum_{i} (-1)^{x\cdot j} \ket{1,i}\ket{\bar\theta_j(2^{\ell-1})} + \sum_{j} \sum_{i} (-1)^{x\cdot j}  \ket{2,j}\ket{\bar\theta_i(2^{\ell-1})}.
\end{split}
\end{equation*}
We used $\ket{\bar\theta_j^{1i}(2^{\ell-1})}=\ket{1,i}\ket{\bar\theta_j(2^{\ell-1})}$ and $\ket{\bar\theta_i^{2j}(2^{\ell-1})}=\ket{2,j}\ket{\bar\theta_i(2^{\ell-1})}$, by \eq{sign-explainer}. Continuing, we use the induction hypothesis to compute (again, all sums are over $\{0,1\}^{\log n}$):
\begin{equation}\label{eq:theta-j-sum-recurse}
\begin{split}
\sum_{j\in\{0,1\}^{\log n}}^n(-1)^{x\cdot j}\ket{\bar\theta_j(2^\ell)} &= \sum_{i}\ket{1,i}\otimes \sum_{j} (-1)^{x\cdot j}\ket{\bar\theta_j(2^{\ell-1})} + \sum_{j} (-1)^{x\cdot j}\ket{2,j}\otimes \sum_{i}\ket{\bar\theta_i(2^{\ell-1})}\\
    &=\ket{1}\otimes \sqrt{n}  H^{\otimes \log n}\ket{\bar{0}} \otimes \norm{\sum_{j\in\{0,1\}^{\log n}}(-1)^{x\cdot j}\ket{\bar\theta_j(2^{\ell-1})}} C_{2^{\ell-1}}\ket{x}\\
    & \qquad\qquad +\ket{2}\otimes \sqrt{n}H^{\otimes \log n}\ket{x}\otimes \norm{\sum_{j\in\{0,1\}^{\log n}}\ket{\bar\theta_j(2^{\ell-1})}} C_{2^{\ell-1}}\ket{\bar{0}}.
\end{split}\end{equation}

Thus, as in the base case, we first put appropriate weight on $\ket{1}$ and $\ket{2}$, using a simple rotation:
$$\ket{0,\bar{0}}\ket{x}\mapsto \left(\sqrt{\alpha}\ket{1}+ \sqrt{1-\alpha}\ket{2}\right)\ket{\bar{0}}\ket{x},$$
where 
$$\alpha = \frac{\norm{\sum_{j\in\{0,1\}^{\log n}} (-1)^{x\cdot j} \ket{\bar\theta_j(2^{\ell-1})}}^2}{\norm{\sum_{j\in\{0,1\}^{\log n}} (-1)^{x\cdot j} \ket{\bar\theta_j(2^{\ell-1})}}^2 + \norm{\sum_{i\in\{0,1\}^{\log n}} \ket{\bar\theta_i(2^{\ell-1})}}^2}.$$
We will describe how to do this first rotation shortly.
As in the base step, we next swap the second and third registered controlled on the first, and then complete the computation by applying $H^{\otimes \log n}$ to the second register and $C_{2^{\ell-1}}$ to the third register.

We complete the proof by describing how to do the first rotation in terms of $\alpha$. By \lem{superposition_preparation}, such a superposition can be prepared in $\tO (1)$ time if $\alpha$ can be computed in $\tO (1)$ time. Hence, it suffices to compute closed-form expressions for the two norms in the definition of $\alpha$. From \eq{bar-theta_j-eqs} and \lem{sum_of_level_sizes}, we can conclude that for any $\ell>0$:
\begin{equation}\label{eq:sum_of_flows_norm}
    N_{\bar{0}}(2^\ell)\coloneq\norm{\sum_{i\in\{0,1\}^{\log n}} \ket{\bar\theta_i(2^\ell)}}^2 = n^2\sum_{\tau\in\{0,1,2\}^\ell}\frac{1}{|E_\tau|}
    = n^2\frac{(n+2)^\ell}{n^{\ell+1}}.
\end{equation}
Letting $N_x(2^\ell) \coloneq \norm{\sum_{j\in\{0,1\}^{\log n}} (-1)^{x\cdot j} \ket{\bar\theta_j(2^\ell)}}^2$, we show by induction that for $\ell\geq 1$:
\begin{equation}\label{eq:N_xL}
    N_x(2^\ell) = n^{\ell +1} + \frac{n^3}{(n-2)(n+1)} \left(n^{\ell}-\frac{(n+2)^\ell}{n^\ell}\right).
\end{equation}
For the base step, we can observe from \eq{Fourier_flows_base} that $N_x(2)=2n^2$, which is equal to the right-hand-side of \eq{N_xL} when $\ell=1$. For the induction step, we can use \eq{theta-j-sum-recurse} to show that for $\ell>1$:
\begin{equation*}\begin{split}
    N_x(2^\ell) &= \norm{\sqrt{n}H^{\otimes\log n}\ket{\bar{0}}\otimes \sqrt{N_x(2^{\ell-1})}C_{2^{\ell-1}}\ket{x}}^2+\norm{\sqrt{n}H^{\otimes\log n}\ket{x}\otimes \sqrt{N_{\bar{0}}(2^{\ell-1})}C_{2^{\ell-1}}\ket{\bar{0}}}^2\\
    &=n N_x(2^{\ell-1})+n N_{\bar{0}}(2^{\ell-1})\\
    &= n\left(n^{\ell}+\frac{n^3}{(n-2)(n+1)} \left(n^{\ell-1}-\frac{(n+2)^{\ell-1}}{n^{\ell-1}}\right)\right)+n^2\frac{(n+2)^{\ell-1}}{n^{\ell-1}},
\end{split}\end{equation*}
by induction. This is easily shown to be equal to the right-hand-side of \eq{N_xL}.
\end{proof}

We now describe how to generate the states $\ket{\psi_{z,x}(2^\ell)}$, which form most of the basis described in \thm{basis_flows_circulations}.
\begin{lemma}\label{lem:circ_preparation}
    For all $\ell \in [\log (L)]$, a map that acts, for all $x,z\in [n]$ such that $ z \neq \bar 0$, as
    $$\ket{x,z}\mapsto \propto \ket{\psi_{z,x}(2^\ell)}$$
    can be implemented in $\tO(1)$ steps.
\end{lemma}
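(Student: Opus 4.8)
The plan is to build the state directly from the recursive definition, using the circuit $C_{2^{\ell-1}}$ from \lem{Fourier_flows} (for $\ell\ge 2$; for $\ell=1$ we take $C_{2^0}:=H^{\otimes\log n}$, which realizes $\ket x\mapsto\propto\sum_j(-1)^{x\cdot j}\ket{\bar\theta_j(1)}$ since $\ket{\bar\theta_j(1)}=\ket{e_j}=\ket j$) together with Hadamard layers. First I would substitute the definition of $\ket{p_{ij}(2^\ell)}$ into $\ket{\psi_{z,x}(2^\ell)}$, apply \eq{sign-explainer-function} to turn the superscripted flows into tensor factors of the form $\ket{\text{symbol}}\ket{\bar\theta_\bullet(2^{\ell-1})}$ (the sign $(-1)^{|\sigma|_2}$ for $\sigma=(2,j)$ cancels the minus sign in front of $\ket{\bar\theta^{2j}_i(2^{\ell-1})}$), and use $\sum_j(-1)^{x\cdot j}=n\,\delta_{x,\bar 0}$, to obtain $\ket{\psi_{z,x}(2^\ell)}=T_0+T_1+T_2$, where (splitting the edge register $E({\cal N}_{2^\ell})=\Sigma\times E({\cal N}_{2^{\ell-1}})$ of a leading symbol into a type in $\{0,1,2\}$ and an index in $\{0,1\}^{\log n}$, followed by an edge register of ${\cal N}_{2^{\ell-1}}$)
\begin{align*}
T_0 &= n\,\delta_{x,\bar 0}\;\ket{0}\ket{\bar 0}\otimes\sum_{i}(-1)^{z\cdot i}\ket{\bar\theta_i(2^{\ell-1})},\\
T_1 &= \ket{1}\otimes\Big(\sum_i(-1)^{z\cdot i}\ket i\Big)\otimes\Big(\sum_j(-1)^{x\cdot j}\ket{\bar\theta_j(2^{\ell-1})}\Big),\\
T_2 &= \ket{2}\otimes\Big(\sum_j(-1)^{x\cdot j}\ket j\Big)\otimes\Big(\sum_i(-1)^{z\cdot i}\ket{\bar\theta_i(2^{\ell-1})}\Big).
\end{align*}
The key structural fact is that $T_0$ vanishes unless $x=\bar 0$. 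Using $\sum_i(-1)^{z\cdot i}\ket i=\sqrt n\,H^{\otimes\log n}\ket z$ and $\sum_j(-1)^{x\cdot j}\ket{\bar\theta_j(2^{\ell-1})}=\sqrt{N_x(2^{\ell-1})}\;C_{2^{\ell-1}}\ket x$, where $N_x(2^{\ell-1}):=\big\|\sum_j(-1)^{x\cdot j}\ket{\bar\theta_j(2^{\ell-1})}\big\|^2$ has a closed form by \eq{sum_of_flows_norm} for $x=\bar 0$ and by \eq{N_xL} for $x\neq\bar 0$ (and $N_x(2^0)=n$ in all cases), we get $\|T_0\|^2=n^2N_z(2^{\ell-1})\,\delta_{x,\bar 0}$, $\|T_1\|^2=n\,N_x(2^{\ell-1})$, $\|T_2\|^2=n\,N_z(2^{\ell-1})$; since $z\neq\bar 0$ throughout, all three are computable in $\tO(1)$ time.

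Next I would describe the circuit on three registers: a $2$-qubit branch register $A_1$ (initialized to $\ket 0$), a $\log n$-qubit register $A_2$ holding $z$, and a register $B$ storing an edge of ${\cal N}_{2^{\ell-1}}$, with $x$ placed in the input slot of $B$ expected by $C_{2^{\ell-1}}$ and the remaining qubits of $B$ at $\ket 0$; note $A_1A_2B$ together form the edge register of ${\cal N}_{2^\ell}$. \textbf{Step 1:} compute the flag $f=[x=\bar 0]$ and, controlled on $f$, rotate $A_1$ into a state $\propto \|T_0\|\ket 0+\|T_1\|\ket 1+\|T_2\|\ket 2$ when $f=1$ and $\propto\ket 1+\ket 2$ when $f=0$ (a rotation on a constant-size register with $\tO(1)$-computable angle, e.g.\ by \lem{superposition_preparation}), then uncompute $f$. \textbf{Step 2:} controlled on $A_1\in\{0,2\}$ swap $A_2$ with the input slot of $B$; then apply $H^{\otimes\log n}$ to $A_2$ controlled on $A_1\in\{1,2\}$, and apply $C_{2^{\ell-1}}$ to $B$ unconditionally. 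A branch-by-branch check then shows that branch $b$ carries exactly $\|T_b\|\cdot(T_b/\|T_b\|)$: in branch $1$ nothing is swapped, so $C_{2^{\ell-1}}$ produces $\propto\sum_j(-1)^{x\cdot j}\ket{\bar\theta_j(2^{\ell-1})}$ in $B$ and $H^{\otimes\log n}$ produces $\propto\sum_i(-1)^{z\cdot i}\ket i$ in $A_2$; in branch $2$ the swap routes $z$ into $B$ and $x$ into $A_2$, giving the two factors of $T_2$; and in branch $0$, which is populated only when $x=\bar 0$, the swap leaves $A_2$ holding $\bar 0$ -- exactly the index demanded by the symbol $(0,\bar 0)$ -- while $C_{2^{\ell-1}}$ acts on $z$ in $B$. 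Summing over branches yields $\propto T_0+T_1+T_2=\ket{\psi_{z,x}(2^\ell)}$. The gate count is $\tO(1)$: Step 1 is $\tO(1)$, the controlled swap and Hadamard layer are $O(\log n)$, and $C_{2^{\ell-1}}$ is $\tO(1)$ by \lem{Fourier_flows} (or a single Hadamard layer when $\ell=1$).

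The step I expect to be the main obstacle is the bookkeeping in the decomposition and in Step 2, not any hard calculation: one must track the $(-1)^{|\sigma|_2}$ signs from \eq{sign-explainer-function} carefully enough to see that all three of $T_0,T_1,T_2$ enter with a $+$ sign, and -- most importantly -- one must observe that branch $0$ is populated only when $x=\bar 0$, which is what legitimizes routing $z$ (rather than a hard-wired $\bar 0$) through the input slot of $B$ while leaving $A_2$ untouched in that branch. Everything else (the closed forms for the norms, the $\tO(1)$ gate counts) follows from \lem{Fourier_flows} and its proof.
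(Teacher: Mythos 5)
Your proposal is correct and follows essentially the same route as the paper: the same three-branch decomposition of $\ket{\psi_{z,x}(2^\ell)}$ with the observation that the $0$-branch survives only when $x=\bar 0$, the same use of $C_{2^{\ell-1}}$ from \lem{Fourier_flows} plus Hadamard layers and a controlled amplitude rotation computed from the closed forms for $N_x$ and $N_{\bar 0}$. Your circuit wiring (which registers are swapped under which branch controls) and your branch norms $\|T_0\|^2=n^2N_z\delta_{x,\bar 0}$, $\|T_1\|^2=nN_x$, $\|T_2\|^2=nN_z$ are worked out correctly and self-consistently, and you additionally make the $\ell=1$ base case $C_{2^0}=H^{\otimes\log n}$ explicit.
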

\begin{proof}
For a fixed $z,x \in [n], z \neq \bar 0$, we write down the state $\ket{\psi_{z,x}}$ by definition its definition (see \thm{basis_flows_circulations}) and split it into three orthogonal terms. (All sums below are over $\{0,1\}^{\log n}$).
\begin{align*}
    \ket{\psi_{z,x}(2^{\ell})} &= \sum_{j \in \{0,1\}^{\log n}} (-1)^{x \cdot j} \sum_{i \in \{0,1\}^{\log n}} (-1)^{z \cdot i} \ket{p_{ij}(2^\ell)}\\
    &= \sum_{j \in \{0,1\}^{\log n}} (-1)^{x \cdot j} \sum_{i \in \{0,1\}^{\log n}} (-1)^{z \cdot i} \left(\ket{\bar\theta^0_i(2^{\ell - 1})} + \ket{\bar\theta^{1i}_j(2^{\ell - 1})} - \ket{\bar\theta^{2j}_i(2^{\ell - 1})}\right)\\
    &= \sum_{j} (-1)^{x \cdot j} \sum_{i} (-1)^{z \cdot i} \ket{\bar\theta^0_i(2^{\ell - 1})}
    +\sum_{i,j} (-1)^{z \cdot i+x\cdot j} \ket{\bar\theta^{1i}_j(2^{\ell - 1})}
    - \sum_{i,j} (-1)^{z\cdot i+x \cdot j}\ket{\bar\theta^{2j}_i(2^{\ell - 1})}.
\end{align*}
By \eq{sign-explainer-function}, we can write $\ket{\bar\theta_i^{0}}=\ket{0,\bar{0}}\ket{\bar\theta_i}$, $\ket{\bar\theta_j^{1i}}=\ket{1,i}\ket{\bar\theta_j}$ and $\ket{\bar\theta_i^{2j}}=-\ket{2,j}\ket{\bar\theta_i}$.
If $x\neq \bar 0$, the first sum vanishes, otherwise it adds up to $n$, giving: 
\begin{align*}
    \ket{\psi_{z,x}(2^\ell)} & = \delta_{x,\bar{0}}n\cdot \ket{0,\bar{0}}\sum_{i \in \{0,1\}^{\log n}} (-1)^{z \cdot i} \ket{\bar\theta_i(2^{\ell - 1})}
    +\sum_{i \in \{0,1\}^{\log n}} (-1)^{z \cdot i} \ket{1,i}\sum_{j \in \{0,1\}^{\log n}} (-1)^{x \cdot j} \ket{\bar\theta_j(2^{\ell - 1})}\\
    &\qquad + \sum_{j \in \{0,1\}^{\log n}} (-1)^{x \cdot j}\ket{2,j} \sum_{i \in \{0,1\}^{\log n}} (-1)^{z \cdot i} \ket{\bar\theta_i(2^{\ell - 1})}\\
    & = \delta_{x,\bar{0}}n\ket{0,\bar{0}}\otimes \sqrt{N_z(2^{\ell - 1})}C_{2^{\ell - 1}}\ket{z}\\
    &\qquad +\ket{1}H^{\otimes\log n}\ket{z}\otimes \sqrt{N_x(2^{\ell - 1})}C_{2^{\ell - 1}}\ket{x}
    + \ket{2}H^{\otimes \log n}\ket{x}\otimes \sqrt{N_z(2^{\ell - 1})}C_{2^{\ell - 1}}\ket{z}
\end{align*}
where $C_{2^{\ell - 1}}$ is the circuit from \lem{Fourier_flows}, and $N_z(2^{\ell - 1})$ is as in \eq{N_xL}. Note by \eq{N_xL} that, for non-zero $z$, this is independent of $z$, so in that case, we simply write $N(2^{\ell - 1})$.

To prepare the superposition of two or three terms, we begin by creating a superposition of $\ket{0}$, $\ket{1}$ and $\ket{2}$ with appropriate amplitudes:
$$\ket{0}\ket{\bar{0}}\ket{x}\ket{z}\mapsto \left\{\begin{array}{ll}
  \frac{n\sqrt{N(2^{\ell - 1})}\ket{0}+\sqrt{N_{\bar{0}}(2^{\ell - 1})}\ket{1}+\sqrt{N(2^{\ell - 1})}\ket{2}}{\sqrt{(n^2+1)N(2^{\ell - 1})+N_{\bar{0}}(2^{\ell - 1})}}\ket{\bar{0}}\ket{x}\ket{z} & \mbox{if }x=\bar{0}\\
  \left(\frac{1}{\sqrt{2}}\ket{1}+\frac{1}{\sqrt{2}}\ket{2}\right)\ket{\bar{0}}\ket{x}\ket{z} & \mbox{else.}
\end{array}\right.$$
We have used the fact that when $x\neq \bar{0}$, $N_x(2^{\ell - 1})=N_z(2^{\ell - 1})$, so the amplitudes on $\ket{1}$ and $\ket{2}$ are equal. Using the closed-form expressions in \eq{sum_of_flows_norm} and \eq{N_xL}, we can compute these amplitudes in $\tO(1)$ steps, so we can implement this rotation in this number of steps as well. 

Next, conditioned on $\ket{1}$ in the first register, we swap the second and third registers. Then we apply a Hadamard to the second register conditioned on $\ket{1}$ or $\ket{2}$ in the first register, and apply $C_{2^{\ell - 1}}$ to the last register. \qedhere 
\end{proof}

Finally, if we take $\ell=\log L$, and $v_j=t$, in the following lemma, this gives a procedure for preparing a state proportional to the optimal unit flow $\ket{\theta_j(L)}$, which is the final state needed for the basis in \thm{basis_flows_circulations}.

\begin{lemma}\label{lem:flow_preparation}
    For all $\ell \in[\log(L)]$, a map that acts, for all $j\in \{0,1\}^{\log n}$, as
    $$\ket{j}\mapsto\propto \ket{\bar\theta_{j}(2^\ell)} = \frac{1}{n}\sum_{i\in \{0,1\}^{\log n}} \ket{p_{ij}(2^\ell)} $$
    can be implemented in $\tO(1)$ steps.
\end{lemma}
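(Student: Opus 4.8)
The plan is to prove this by induction on $\ell$, following the recursive state-preparation template of \lem{sum_of_flows}, \lem{Fourier_flows}, and \lem{circ_preparation}. Since $\ket{\bar\theta_j(2^\ell)}$ is used (through \lem{optimal_flow_form}) only with $j$ fixed to the index of $t$, and since the states $\{\ket{\bar\theta_j(2^\ell)}\}_j$ are not pairwise orthogonal (their overlaps are the constant $c_0$ appearing in the proof of \clm{inner-prod-p-psi}, which is nonzero for $\ell\geq1$), it is enough to exhibit, for each fixed $j$, a $\tO(1)$-gate circuit preparing $\frac{1}{\sqrt{M(2^\ell)}}\ket{\bar\theta_j(2^\ell)}$ from $\ket{\bar{0}}$, where $M(2^\ell):=\norm{\ket{\bar\theta_j(2^\ell)}}^2$ is independent of $j$ by symmetry of ${\cal N}_{2^\ell}$.

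The starting point is the identity obtained by substituting the definition of $\ket{p_{ij}(2^\ell)}$ from \thm{basis_flows_circulations} into $\ket{\bar\theta_j(2^\ell)}=\frac1n\sum_i\ket{p_{ij}(2^\ell)}$ and summing over $i$. Writing each $\ket{p_{ij}(2^\ell)}$ in the ``$\ket{\sigma}\ket{\text{state}}$'' form (which, via \eq{sign-explainer-function}, absorbs the orientation sign of the $(2,j)$-labelled copy), this decomposes $\ket{\bar\theta_j(2^\ell)}$ into three summands that are pairwise orthogonal because their leading $\Sigma$-``type'' registers take the distinct values $0,1,2$ (all sums over $i\in\{0,1\}^{\log n}$):
\[
\ket{\bar\theta_j(2^\ell)} \;=\; \tfrac1n\ket{0,\bar{0}}\otimes\!\sum_{i}\ket{\bar\theta_i(2^{\ell-1})}
\;+\;\tfrac{1}{\sqrt n}\ket{1}\bigl(H^{\otimes\log n}\ket{\bar{0}}\bigr)\ket{\bar\theta_j(2^{\ell-1})}
\;+\;\tfrac1n\ket{2}\ket{j}\otimes\!\sum_{i}\ket{\bar\theta_i(2^{\ell-1})}.
\]
Every ingredient is available: $\sum_i\ket{\bar\theta_i(2^{\ell-1})}$ by \lem{sum_of_flows}, $\ket{\bar\theta_j(2^{\ell-1})}$ by the induction hypothesis, and the index-register factors $\ket{\bar{0}}$, $H^{\otimes\log n}\ket{\bar{0}}$, $\ket{j}$ trivially. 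So the circuit first rotates a fresh ``type'' qutrit $\ket{0}\mapsto\frac{1}{\sqrt{2a+b}}\bigl(\sqrt{a}\,\ket{0}+\sqrt{b}\,\ket{1}+\sqrt{a}\,\ket{2}\bigr)$ with $a=N_{\bar{0}}(2^{\ell-1})/n^2$ and $b=M(2^{\ell-1})/n$ (notation as in \eq{sum_of_flows_norm}); then, controlled on that register, it fixes the index register to $\ket{\bar{0}}$, $H^{\otimes\log n}\ket{\bar{0}}$, or $\ket{j}$, and calls \lem{sum_of_flows} (on the type-$0$ and type-$2$ branches) or the recursive circuit (on the type-$1$ branch) on the remaining ${\cal N}_{2^{\ell-1}}$-register. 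Since $M(2^\ell)=2a+b$, the normalizations match term by term. The base case $\ell=1$ is direct: as the optimal $[u],[u,v_i]$-flow in ${\cal N}_1$ puts a unit on $\ket{e_i}=\ket{i}$, one gets $\ket{\bar\theta_j(2)}\propto\ket{0,\bar{0}}\bigl(H^{\otimes\log n}\ket{\bar{0}}\bigr)+\ket{1}\bigl(H^{\otimes\log n}\ket{\bar{0}}\bigr)\ket{j}+\ket{2}\ket{j}\bigl(H^{\otimes\log n}\ket{\bar{0}}\bigr)$, which is prepared in $O(\log n)$ gates.

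Two points require care. First, the type-qutrit rotation costs $\tO(1)$ only if $a,b$ are $\tO(1)$-time computable: $N_{\bar{0}}(2^{\ell-1})$ already has the closed form \eq{sum_of_flows_norm}, and $M$ satisfies the first-order recurrence $M(2^\ell)=\frac{2}{n^2}N_{\bar{0}}(2^{\ell-1})+\frac1n M(2^{\ell-1})$ with $M(1)=1$, which (after multiplying by $n^\ell$) unrolls to $M(2^\ell)=\frac{1}{n^\ell}\bigl(1+\frac{2((n+2)^\ell-1)}{n+1}\bigr)$ — and is anyway evaluable by iterating the recurrence $\ell=\tO(1)$ times — so the rotation can be done in $\tO(1)$ steps by \rem{superposition_preparation_012}. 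Second, one must keep track of the register layout and the $(-1)^{\abs{\sigma}_2}$ signs from \sec{graph-construction} so that the three recursive pieces glue into $\ket{\bar\theta_j(2^\ell)}$ rather than a signed variant; I expect this bookkeeping to be the only genuinely fiddly part, and it is entirely analogous to the sign-tracking already done in \lem{Fourier_flows} and \lem{circ_preparation} — indeed, expressing each term directly in the ``$\ket{\sigma}\ket{\text{state}}$'' form, as above, makes the signs cancel automatically.
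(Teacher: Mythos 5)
Your proposal is correct and follows essentially the same route as the paper's proof: the same induction on $\ell$, the same three-way orthogonal decomposition of $\ket{\bar\theta_j(2^\ell)}$ by the leading $\Sigma$-type symbol, the same use of the sum-of-flows circuit on the type-$0$ and type-$2$ branches and the recursive circuit on the type-$1$ branch, and the same recurrence $M(2^\ell)=\tfrac{2}{n^2}N_{\bar 0}(2^{\ell-1})+\tfrac{1}{n}M(2^{\ell-1})$ for the normalization (your closed form agrees with the paper's $F_j(2^\ell)=\tfrac{1}{n^\ell}\cdot\tfrac{2(n+2)^\ell+n-1}{n+1}$). The only cosmetic difference is that you also observe the norm could be evaluated by unrolling the recurrence rather than via the closed form, which is equally valid.
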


\begin{proof}
We prove the statement by induction. Let $\ell=1$ for the base case.
\begin{equation}\label{eq:last-base}
\begin{split}
    \ket{\bar\theta_j(2)} &= \frac{1}{n}\sum_{i\in\{0,1\}^{\log n}} \left(\ket{e^0_i} + \ket{e^{1i}_j} - \ket{e^{2j}_i}\right)\\
    &= \frac{1}{n}\sum_{i\in\{0,1\}^{\log n}} \ket{(0,\bar{0}),i} + \frac{1}{n}\sum_{i\in\{0,1\}^{\log n}} \ket{(1,i),j} + \frac{1}{n}\sum_{i\in\{0,1\}^{\log n}} \ket{(2,j),i}.
\end{split}
\end{equation}
Clearly, a state proportional to this can be prepared in $\tO(1)$ time.

For the induction step, assume there is an efficient circuit $C_{2^{\ell - 1}}'$ that implements the map
$$\ket{j}\mapsto \frac{\ket{\bar\theta_j(2^{\ell - 1})}}{\norm{\ket{\bar\theta_j(2^{\ell - 1})}}}$$
for any $j\in\{0,1\}^{\log n}$.
From \lem{optimal_flow_form} (see also \eq{cropped}) and the definition of $\ket{p_{ij}}$ in \thm{basis_flows_circulations}, we have:
 \begin{equation}\label{eq:theta_j-bar-L}
 \begin{split}
     &{n}\ket{\bar\theta_j(2^\ell)}
     =\sum_{i\in\{0,1\}^{\log n}} \ket{p_{ij}(2^{\ell})}\\
     ={}& \ket{0,\bar{0}}\sum_{i\in\{0,1\}^{\log n}} \ket{\bar\theta_i (2^{\ell - 1})} + \sum_{i\in\{0,1\}^{\log n}} \ket{1,i}\ket{\bar\theta_j (2^{\ell - 1})} + \ket{2,j}\sum_{i\in\{0,1\}^{\log n}} \ket{\bar\theta_i (2^{\ell - 1})}\\
     ={}& \ket{0,\bar{0}}\otimes \sqrt{N_{\bar{0}}(2^{\ell - 1})}C_{2^{\ell - 1}}\ket{\bar{0}} + \sqrt{n}\ket{1}H^{\otimes \log n}\ket{\bar{0}}\otimes \norm{\ket{\bar\theta_j (2^{\ell - 1})}}C_{2^{\ell - 1}}'\ket{j} + \ket{2,j}\otimes \sqrt{N_{\bar{0}}(2^{\ell - 1})}C_{2^{\ell - 1}}\ket{\bar{0}}
 \end{split}\end{equation}
 where $C_{2^{\ell - 1}}$ is the circuit from \lem{Fourier_flows}, and $N_{\bar{0}}(2^{\ell - 1})$ is as in \eq{sum_of_flows_norm}.

To prepare a state proportional to this, we first generate the superposition 
\begin{align*}
\ket{j}\mapsto & \frac{\sqrt{N_{\bar{0}}(2^{\ell - 1})}\ket{0}
+ \sqrt{n} \cdot \norm{\ket{\bar\theta_j(2^{\ell - 1})}} \ket{1}
+ \sqrt{N_{\bar{0}}(2^{\ell - 1})}\ket{2}}{\sqrt{n \cdot \norm{\ket{\bar\theta_j(2^{\ell - 1})}}^2 + 2 \cdot N_{\bar{0}}(2^{\ell - 1})}}\ket{j},
\end{align*}
before mapping $\ket{j}$ to the correct state, controlled on $\ket{0}$, $\ket{1}$, or $\ket{2}$, using swap, $C_{2^{\ell-1}}$, and $H^{\otimes \log n}$. 
By \lem{superposition_preparation}, such a superposition can be prepared in $\tO(1)$ time, provided that the amplitudes are computable in $\tO(1)$ time. As we already have an efficiently computable closed form expression for $N_{\bar{0}}$ (see \eq{sum_of_flows_norm}), it suffices to obtain a closed-form expression for $\norm{\ket{\bar\theta_j(2^{\ell - 1})}}$. For $\ell>0$, define $F_j(2^\ell) \coloneq \norm{\ket{\bar\theta_j(2^\ell)}}^2$. We will prove by induction that
\begin{equation}\label{eq:F-j}
    F_j(2^\ell)= \frac{1}{n^\ell}\cdot \frac{2(n+2)^\ell + n - 1}{n+1}.
\end{equation}
For the base step, let $\ell=1$. By \eq{last-base}, we have $F_j(2)=3/n$, which is equal to the right-hand side of \eq{F-j} for the setting $\ell=1$. Next, we show the induction step.
\begin{align*}
    F_j(2^\ell) &= \frac{1}{n^2} \left(2 N_{\bar{0}}(2^{\ell - 1})+n F_j(2^{\ell - 1})\right) & \mbox{by \eq{theta_j-bar-L}}\\
    &= \frac{2}{n} \left(1+\frac{2}{n}\right)^{\ell-1}+\frac{1}{n} F_j(2^{\ell - 1}) & \mbox{by \eq{sum_of_flows_norm}}\\
    &= \frac{2}{n^\ell}(n+2)^{\ell - 1} + \frac{1}{n^\ell}\cdot \frac{2(n+2)^{\ell - 1} + n - 1}{n+1} & \mbox{by i.h.}\\
    &= \frac{1}{n^\ell} \cdot \frac{2 (n + 1)(n+2)^{\ell - 1} + 2(n+2)^{\ell - 1} + n - 1}{n + 1}\\
    &= \frac{1}{n^\ell} \cdot \frac{2 (n + 2)(n+2)^{\ell - 1} + n - 1}{n + 1}
    = \frac{1}{n^\ell}\cdot \frac{2(n+2)^\ell + n - 1}{n+1}.
\end{align*}
\vskip-20pt
\end{proof}

\begin{corollary}\label{cor:cropped_flow_preparation}
    For all $\ell \in[\log(L)]$, the map that prepares a state proportional to
    $$\ket{\theta_{j}(2^\ell)} = -\ket{\leftarrow,[u]} + \frac{1}{n}\sum_{i\in \{0,1\}^{\log n}} \ket{p_{ij}(2^\ell)} +\ket{\rightarrow,[u,v_j]}$$
    can be implemented in $\tO(1)$ steps.
\end{corollary}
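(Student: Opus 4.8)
The plan is to observe that $\ket{\theta_j(2^\ell)}$ is obtained from the state $\ket{\bar\theta_j(2^\ell)}=\tfrac1n\sum_i\ket{p_{ij}(2^\ell)}$ — which we already know how to prepare by \lem{flow_preparation} — simply by appending the two boundary vectors $-\ket{\leftarrow,[u]}$ and $\ket{\rightarrow,[u,v_j]}$. These two vectors are orthonormal and, since every $\ket{p_{ij}(2^\ell)}$ is supported on $\mathrm{span}\{\ket{e}:e\in E({\cal N}_{2^\ell})\}$ (with the $\ket{-}$-register convention used throughout \sec{alg-basis}), they are orthogonal to $\ket{\bar\theta_j(2^\ell)}$. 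Hence $\norm{\ket{\theta_j(2^\ell)}}^2 = F_j(2^\ell)+2$, where $F_j(2^\ell)=\norm{\ket{\bar\theta_j(2^\ell)}}^2$ has the closed form \eq{F-j}, computable in $\tO(1)$ time.

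First I would introduce an extra branching register and rotate it, conditioned on $\ket{j}$, into the superposition
$$\tfrac{1}{\sqrt{F_j(2^\ell)+2}}\left(\sqrt{F_j(2^\ell)}\,\ket{\textsc{bulk}} + \sqrt{2}\,\ket{\textsc{bndry}}\right),$$
which by \lem{superposition_preparation} (a single two-way rotation) costs $\tO(1)$ steps given the closed form for $F_j(2^\ell)$. Then, controlled on $\ket{\textsc{bulk}}$, apply the circuit $C'_{2^\ell}$ from (the proof of) \lem{flow_preparation} to the edge register, producing a state proportional to $\ket{\bar\theta_j(2^\ell)}$; and controlled on $\ket{\textsc{bndry}}$, write down $u$ and $j$ and prepare the fixed normalized two-dimensional state $\tfrac{1}{\sqrt2}\bigl(-\ket{\leftarrow,[u]}+\ket{\rightarrow,[u,v_j]}\bigr)$, which takes only $O(\log n)$ gates (writing the vertex labels plus a Hadamard-type rotation with a sign). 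Uncomputing/merging the branching register where possible, the overall output is proportional to $\ket{\theta_j(2^\ell)}$, and the total cost is $\tO(1)$.

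The only real point requiring care — and the step I expect to be the main obstacle — is the register bookkeeping at the interface between the ``bulk'' and ``boundary'' parts of $H_{\cal N}$: one must ensure the boundary basis states $\ket{\leftarrow,[u]}$ and $\ket{\rightarrow,[u,v_j]}$ are represented in $H_{\cal N}$ in a way that is manifestly orthogonal to the $\ket{-}$-register encoding used for bulk edges, so that the two controlled branches land in orthogonal subspaces and the claimed norm $F_j(2^\ell)+2$ is correct. Once the encoding conventions of \defin{switching-network} and \sec{alg-basis} are lined up, this is routine, and everything else is a direct combination of \lem{flow_preparation}, \lem{superposition_preparation}, and the closed form \eq{F-j}.
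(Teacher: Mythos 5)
Your proposal is correct and matches the intended argument: the paper states this as an immediate corollary of \lem{flow_preparation}, and the proof is exactly the decomposition you give — a two-way rotation (with amplitudes computable from the closed form \eq{F-j}, which is in fact independent of $j$) between the bulk branch handled by the circuit of \lem{flow_preparation} and the two orthonormal boundary basis states $\ket{\leftarrow,[u]}$, $\ket{\rightarrow,[u,v_j]}$ of $H_{\cal N}$. The register bookkeeping you flag is indeed routine, since \defin{switching-network} already places these boundary states in a summand of $H_{\cal N}$ orthogonal to all edge spaces.
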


We summarize the results of the two sections in the following lemma, which states that the orthonormal basis of ${\cal B}^\perp$ from \thm{basis_flows_circulations} can be efficiently generated. 

\begin{lemma}\label{lem:basis generation}
    There exists an orthonormal basis of ${\cal B}^\perp ({\cal N}_{L}(s,t))$ that can be generated in time~$\tO (1)$.
\end{lemma}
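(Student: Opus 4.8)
The plan is to assemble the orthonormal basis of $\mathcal{B}^\perp(\mathcal{N}_L(s,t))$ explicitly from the three families of states described in \thm{basis_flows_circulations}, and then verify that the two conditions in the Basis Generation definition (implementing the reflection around the span of the index labels, and implementing the label-to-basis-vector map) each take $\tO(1)$ time. By \lem{B_perp}, $\mathcal{B}^\perp = \mathcal{F}(\mathcal{N}_L)\oplus\mathrm{span}\{\ket{[s]},\ket{[s,t]}\}$, and by \lem{dual-basis-gen} it suffices to generate a basis for $\mathcal{B}^\perp$ rather than for $\mathcal{B}$ itself. First I would fix an indexing scheme: index set $\Lambda$ consisting of (i) triples $(\sigma,z,x)$ with $\ell\in[\log L]$, $\sigma\in\Sigma^{\log L-\ell}$, $z,x\in\{0,1\}^{\log n}$, $z\neq\bar 0$, indexing the circulation states $\ket{\psi_{z,x}^\sigma(2^\ell)}$; (ii) a single label for the optimal flow state $\ket{\theta_t(L)}$; and (iii) two labels for $\ket{[s]}$ and $\ket{[s,t]}$. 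These states are orthogonal by \thm{basis_flows_circulations}, so after normalizing each one we get an orthonormal basis; normalization constants are available in closed form from \eq{N_xL}, \eq{F-j}, \eq{sum_of_flows_norm}, and \lem{sum_of_level_sizes}, and hence are $\tO(1)$-computable.

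Next I would handle the label-to-state map. For a label of type (iii), the map $\ket{\ell}\mapsto\ket{[s]}$ or $\ket{[s,t]}$ is trivial. For the type (ii) label, \cor{cropped_flow_preparation} with $\ell=\log L$ and $v_j=t$ prepares (a state proportional to) $\ket{\theta_t(L)}$ in $\tO(1)$ steps. For a type (i) label $(\sigma,z,x)$: apply \lem{circ_preparation} on the registers holding $z,x$ to prepare $\propto\ket{\psi_{z,x}(2^\ell)}$ in $\tO(1)$ steps, then prepend the fixed string $\sigma$ to route this state into the $\sigma$-labeled copy $\mathcal{N}_{2^\ell}^\sigma$ inside $\mathcal{N}_L$; by \eq{sign-explainer-function} this amounts to writing $\ket{\sigma}$ into the leading register and multiplying by the sign $(-1)^{|\sigma|_2}$ (a classically-computable phase), giving $\pm\ket{\psi_{z,x}^\sigma(2^\ell)}$. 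All of these sub-steps use only $O(\log L)$ many $\tO(1)$-time primitives, so the whole map is $\tO(1)$. Finally, one must also append the auxiliary $\ket{-}$ register (as in \sec{alg-basis}, just before \lem{layer_size}) to embed the canonical edge states $\ket{e}$ into the physical edge spaces spanned by $\tfrac{1}{\sqrt2}(\ket{\rightarrow,e}-\ket{\leftarrow,e})$, together with handling the $\ket{\leftarrow,[s]}$ and $\ket{\rightarrow,[s,t]}$ boundary components of $\ket{\theta_t(L)}$; this is a fixed $O(1)$-qubit operation.

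For the reflection around $\mathrm{span}\{\ket{\lambda}:\lambda\in\Lambda\}$, I would note that $\Lambda$ has a simple product-like structure: a label is recognizable by checking a constant-size type field and then checking, for type (i), that $z\neq\bar 0$ and that $\ell,\sigma$ are consistent (i.e.\ $|\sigma|=\log L-\ell$ and $\sigma\in\Sigma^{|\sigma|}$, which is an $O(\log L)$ symbol-wise check over the constant-size alphabet $\Sigma$). Hence membership in $\Lambda$ is decidable by a reversible circuit of size $\tO(1)$, and the reflection is implemented by computing this predicate into an ancilla, applying a $Z$ on that ancilla, and uncomputing. The main obstacle I anticipate is purely bookkeeping rather than conceptual: making sure the chosen label encoding is consistent across the three families so that distinct basis vectors get distinct labels and the span of the labels is exactly an ``easy'' subspace, and making sure the sign conventions from \eq{sign-explainer} and \eq{sign-explainer-function} (the $(-1)^{|\sigma|_2}$ factors) are tracked correctly when composing $\sigma$-prefixing with the $\tO(1)$-time primitives from Lemmas~\ref{lem:sum_of_flows}--\ref{lem:flow_preparation}. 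Once the indexing is pinned down, the proof is just the observation that a basis is obtained by normalizing the (already orthogonal, by \thm{basis_flows_circulations}) family, and that each of the finitely many preparation primitives invoked runs in $\tO(1)$ time.
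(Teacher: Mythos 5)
Your proposal is correct and takes essentially the same route as the paper, which states this lemma as a summary of \thm{basis_flows_circulations} together with \lem{circ_preparation}, \lem{flow_preparation}/\cor{cropped_flow_preparation}, and the closed-form norm expressions, exactly as you assemble them. (One trivial slip: $\Sigma$ has size $2n+1$, not constant, but each symbol is $O(\log n)$ bits, so your symbol-wise membership check is still $\tO(1)$.)
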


\subsection{Complexity of the subroutine}\label{sec:final}

We conclude by combining the switching network and corresponding basis generation we have given in the previous sections with \thm{SN_algorithm} to prove \thm{quantum_short_path_complexity}, which states the existence of an algorithm for $\textsc{Dist}_L$.

\begin{proof}[Proof of \thm{quantum_short_path_complexity}]
From \lem{SN_correctness}, it follows that the switching network ${\cal N}_{L}(s,t)$ accepts $G$ if and only if there is a path of length at most $L$ from $s$ to $t$ in $G$. We apply \thm{SN_algorithm} to ${\cal N}_{L}(s,t)$. By \lem{positive_witness}, whenever there is a $st$-path of length $\leq L$ in $G$, there is a path from the source $[s]$ to the sink $[s,t]$ in ${\cal N}_{L}(s,t)(G)$ of length at most $W_+ := L^{\log 3}$. By \lem{negative_witness}, $|E({\cal N}_{L}(s,t))|\leq (2n+1)^{\log L} n$. By \lem{basis generation} and \lem{B_space_basis}, there exists an orthonormal basis of ${\cal B}^\perp ({\cal N}_{L}(s,t))$ that can be generated in time $T_B=\tO (1)$. 
Hence, by \thm{SN_algorithm}, there exists a quantum algorithm that decides whether there is a path of length at most $L$ from $s$ to $t$ in $G$ in time
$$O(T_B\sqrt{|E({\cal N}_L(u,v))| W_+}) = \tO \left( \left( L^{\log 3} (2n+1)^{\log L} n \right)^{1/2} \right)$$
and space
\begin{equation*}
    O(\log \abs{E({\cal N}_{L}(u,v))}) = O(\log L \log n).\qedhere
\end{equation*}
\end{proof}

\section*{Acknowledgments}

This work is co-funded by the European Union (ERC, ASC-Q, 101040624); the project Divide \& Quantum  (with project number1389.20.241) of the research programme NWA-ORC, which is (partly) financed by the Dutch Research Council (NWO); and the Dutch National Growth Fund (NGF), as part of the Quantum Delta NL programme. SJ is a CIFAR Fellow in the Quantum Information Science Program. 

\bibliographystyle{alpha}
\bibliography{Bibliography}

\end{document}